\newtheorem{dfn}{Definition}
\newtheorem{thm}{Theorem}
\newtheorem{lem}[thm]{Lemma}
\newtheorem{obs}[thm]{Observation}
\newcommand{\bra}[1]{\langle #1|}
\newcommand{\ket}[1]{|#1\rangle}
\newcommand{\braket}[2]{\langle #1|#2\rangle}
\newcommand{\ketbra}[2]{| #1 \rangle \langle #2 |}
\newcommand{\tr}{ {\rm tr} }
\begin{document}
\title{Fragmented imaginary-time evolution for {early-stage} quantum signal processors}

\author[1,2,*]{Thais L. Silva}
\author[2,3]{M\'arcio M. Taddei}
\author[1,4]{Stefano Carrazza}
\author[1,2]{Leandro Aolita}

\affil[1]{Quantum Research Centre, Technology Innovation Institute, Abu Dhabi, UAE} 
\affil[2]{Federal University of Rio de Janeiro, Caixa Postal 68528, Rio de Janeiro, RJ 21941-972, Brazil}
\affil[3]{ICFO - Institut de Ciencies Fot\`oniques, The Barcelona Institute of Science and Technology, 08860, Castelldefels, Barcelona, Spain}
\affil[4]{TIF  Lab,  Dipartimento  di  Fisica,  Universit\`a  degli  Studi  di  Milano  and  INFN  Sezione  di  Milano,  Milan, Italy}
\affil[*]{thaisdelimasilva@gmail.com}
\date{}
\maketitle

\begin{abstract}
Simulating quantum imaginary-time evolution (QITE) is a significant promise of quantum computation. However, the known algorithms are either probabilistic (repeat until success) with unpractically small success probabilities or coherent (quantum amplitude amplification)  with circuit depths and ancillary-qubit numbers unrealistically large in the mid-term. Our main contribution is a new generation of deterministic, high-precision QITE algorithms that are significantly more amenable experimentally. A surprisingly simple idea is behind them: partitioning the evolution into a sequence of fragments that are run probabilistically. It causes a considerable reduction in wasted circuit depth every time a run fails. Remarkably, the resulting overall runtime is asymptotically better than in coherent approaches, and the hardware requirements are even milder than in probabilistic ones. Our findings are especially relevant for the early fault-tolerance stages of quantum hardware.
\end{abstract}

\section{Introduction}
\label{sec:intro}  

{Given a Hamiltonian $H$ and an inverse temperature $\beta\geq0$, QITE is the task of evolving quantum states according to the non-unitary propagator $e^{-\beta H}$.} QITE is central not only to ground-state optimisations \cite{VariationalQuITE19,motta2020,Gomes20,Sunetal21,Nishi21} but also to partition-function estimation and quantum Gibbs-state sampling \ \cite{PW09, BB10,temme2011,YA-G12,KB16,chowdhury2017,QSDP17,brandao2017,vanApeldoorn2020quantumsdpsolvers,gilyen2019,variationalGibbssampler20,KKB20,CSS21}, i.e. the task of preparing thermal quantum states at tunable inverse temperature $\beta$. This is both fundamentally relevant and useful for notable algorithmic applications. For instance, even though approximating ground states of generic Hamiltonians is not expected to be efficient  even on a quantum computer -- as it can solve QMA-complete problems \cite{QMAcomp06} --, significant speed-ups over classical simulations are possible. This has motivated several ground-state cooling algorithms (with and without QITE), especially for combinatorial optimisations \cite{AdibaticQC00,QAOA14, Qbacktracking15,motta2020,QAOAGoogleExp21} or molecular electronic structures \cite{VQE14,IonVQE18,adapativeVQE19,VariationalQuITE19}. On the other hand, Gibbs-state samplers are used as main sub-routines for quantum semi-definite program solvers \cite{brandao2017,QSDP17,vanApeldoorn2020quantumsdpsolvers} or for training \cite{KW17,WW19,variationalQRBM20} quantum machine-learning models {\cite{Biamonte17, QBoltzmann18}, e.g}. Moreover, QITE also enables quantizations \cite{motta2020} of the METTS or Lanczos algorithms, which directly simulate certain thermal properties without Gibbs-state sampling.

Quantum Gibbs states can be approximated by quantum Metropolis Markov-chains \cite{temme2011,YA-G12} or by variational circuits trained to minimis{e the free energy \cite{variationalGibbssampler20}, e.g.}
However, the former involve deep and complex circuits, whereas the latter are highly limited by the variational Ansatz. In turn, heuristic QITE algorithms for ground-state optimisations exist \cite{VariationalQuITE19,motta2020,Gomes20,Sunetal21,Nishi21,benedetti2021,PRXQuantum.2.010342,cao2021quantum}. There, one simulates pure-state QITE  with a unitary circuit that depends on the input state, the Hamitonian, and $\beta$. For small-$\beta$ steps, one can determine the circuits by measurements on the input state at each step and classical post-processing. One possibility is to optimise a variational circuit on the measured data \cite{VariationalQuITE19}, but this is again limited by the expressivity of the Ansatz. Another possibility is to invert a linear system generated from the measurements \cite{motta2020,Gomes20,Sunetal21,Nishi21}, but the size of such system (as well as the number of measurements required) is exponential in the number of qubits, unless restrictive locality assumptions are made. 

\begin{figure*}[t!]
\centering{}\includegraphics[width=1.0\textwidth]{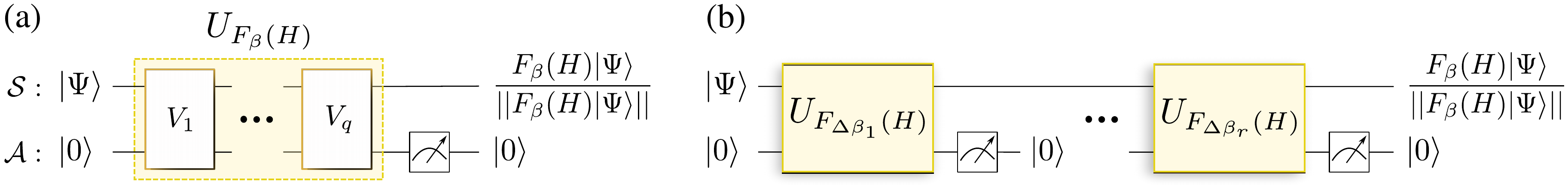}
\caption{{\bf High-level schematics of our algorithms}. a) QITE primitives: 
A system register $\mathcal{S}$ carries the input state $\ket{\Psi}$, whereas an ancillary register $\mathcal{A}$ is initialised in a computational-basis state $\ket{0}$. A unitary transformation $U_{F_{\beta}(H)}$, composed of a sequence $\{V_k\}_{\in[q]}$ of $q$ gates, with $[q]\coloneqq\{1, \hdots q\}$, is applied and then the ancillas are measured. Each gate makes one query to the Hamiltonian oracle (not shown). The specific choice of gates in the sequence is such that,
conditioned on detecting $\ket{0}$ on the ancillas, the desired state $\frac{F_{\beta}(H)\ket{\Psi}}{\left\Vert F_{\beta}(H)\ket{\Psi}\right\Vert}$ is output up to controllable error. We refer to the circuit generating $U_{F_{\beta}(H)}$ as a QITE primitive.
b) Master QITE algorithms: The post-selection probability -- given approximately by $p_\Psi(\beta)=\left\Vert F_{\beta}(H)\ket{\Psi}\right\Vert^{2}$ -- can decrease with $\beta$ very fast. Hence, for high $\beta$, probabilistic approaches based on repeat-until-success fail for the vast majority of trials. 
In turn, coherent approaches based on quantum amplitude amplification provide a close-to-quadratic runtime speed-up, but at the expense of enormous circuit depths. 
In contrast, we introduce a master algorithm that concatenates $r$ QITE fragments of inverse temperatures $\{\Delta\beta_l\}_{l\in [r]}$, with $\sum_{l\in [r]}\Delta\beta_l=\beta$ and $\beta_l<\beta$ for all $l\in [r]$. Each fragment is successively run probabilistically and has both a success probability significantly higher and a query complexity significantly lower than that of the entire evolution run at once. 
This ends up yielding an enormous saving in overall runtime (even beating coherent approaches for high $\beta$) while at the same time preserving all the practical advantages of probabilistic approaches for {experimental} implementations.
}
\label{fig:general_mindset}
\end{figure*}
 The most general, guaranteed-precision QITE algorithms are based on unitary circuits followed by ancillary-qubit post-selection {\cite{PW09, chowdhury2017, vanApeldoorn2020quantumsdpsolvers, gilyen2019, CSS21}}. These circuits -- {to which we refer as} \emph{QITE primitives} -- are efficient in $\beta$ as well as in the target precision. However, due to the {intrinsically} probabilistic post-selection, they must be applied multiple times  {-- by what we refer to as \emph{master QITE algorithms} -- t}o obtain a deterministic output. 
Repeat-until-success master algorithms apply the primitive in parallel (i.e. in independent probabilistic runs), thereby not inducing any increase in circuit depth. However, their overall complexity is inversely proportional to the post-selection probability. Instead, coherent master algorithms \cite{PW09, chowdhury2017, vanApeldoorn2020quantumsdpsolvers, gilyen2019}, based on amplitude amplification \cite{Brassard2002}, have close-to-quadratically smaller overall complexity. However, they require enormous circuit depths and significantly more ancillas. In addition, no fundamental efficiency limit for generic QITE algorithms
is known.

\subsection{Overview}

Here, we introduce two efficient QITE primitives based on {the} quantum signal processing (QSP) {framework} \cite{low2017optimal, Low2019hamiltonian, gilyen2019,vanApeldoorn2020quantumsdpsolvers} as well as a practical master QITE algorithm (see Fig.\ \ref{fig:general_mindset}); and prove a universal lower bound for the complexity of QITE primitives that can be seen as an imaginary-time counterpart of th{e n}o fast-forwarding theorem for RTE \cite{Berry2007,Berry2014,Berry2015a}. {The} first primitive is designed for Hamiltonians given in the well-known block-encoding oracle model, whereas the second one for a simplified model of real-time evolution oracles involving a single time. Both primitives feature excellent query complexity (number of oracle calls) and ancillary-qubit overhead. In fact, for the first primitive the complexity is sub-additive in $\beta$ and $\log(\varepsilon^{-1})$, with $\varepsilon$ the tolerated error. This scaling saturates our universal bound  when $\beta\ll\log(\varepsilon^{-1})$. Hence Primitive 1 is optimal in that regime, which, interestingly, turns out crucial for our master algorithm. In contrast, Primitive 2's complexity is multiplicative in $\beta$ and $\log(\varepsilon^{-1})$, but it requires a single ancilla throughout and its oracle significantly fewer gates. This is appealing for intermediate-scale quantum hardware. 
In turn, our master QITE algorithm breaks the evolution into small-$\beta$ fragments and runs each fragment's primitive probabilistically. Surprisingly, this yields an overall runtime competitive with -- and, in the {relevant} regime of high $\beta$, even better than -- that of coherent approaches while, at the same time, preserving all the advantages of probabilistic ones for {experimental} feasibility.

Finally, the complexity of our maste{r a}lgorithm depend{s o}n the \emph{fragmentation schedule}, i.e. number $r$ of fragments and their relative sizes. {On one hand, for Primitive 1, we rigorously prove that, from a critical inverse temperature $\beta_{\rm c}= \mathcal{O}(2^{N/2}\, N)$ on, the runtime is lower than that with coherent QITE. This is shown by explicitly constructing schedules with only $r=2$ fragments that do the job, remarkably. On the other hand, that fragmented QITE outperforms coherent QITE is also observed for both primitives through extensive numerical evidence. More precisely, we study the overall runtime as a function of $\beta$ and $\varepsilon$, up to $N=15$ qubits, and for numerically-optimized schedules. These experiments involve random instances of Hamiltonians encoding four computationally} hard classes of problems: Ising models associated to the $i$) MaxCut and $ii$) weighted MaxCut problems \cite{AdibaticQC00,QAOA14, Qbacktracking15}; $iii$) restricted quantum Boltzmann machines (transverse-field Ising models)  \cite{Biamonte17, QBoltzmann18}; and $iv$) a quantum generalization (fully-connected Heisenberg models) of the Sherrington-Kirkpatrick model \cite{SK75,Pachenco12} for spin glasses. We see a clear trend whereby, fro{m $\beta_{\rm c}= \mathcal{O}\big(2^{N/2}\big)$ on,  fragmentation outperforms coherent QITE for both primitives, for an optimal number of fragments $r\lesssim 6$.  
The obtained values for $\beta_{\rm c}$ imply that our algorithm outperforms coherent QITE in the computationally hardest range of $\beta$, particularly relevant for Hamiltonians with an exponentially small spectral gap  \cite{Exp_gap_NP_comp,Gap_Anderson_loc}}. Moreover, {impressively, such advantages are attained at no cost in circuit depth or number of ancillas, which are identical to those of  probabilistic QITE}. It is worth noting that, although we prove that fragmented QITE can outperform the coherent algorithm, it does not mean that its scaling is better (see the Supplementary Material \cite[Sec. VII]{TLS2022Sup}).  

\section{Results}

 We consider an $N$-qubit system $\mathcal{S}$, of Hilbert space $\mathbb H_\mathcal{S}$.  
We denote by $\mathbb H_\mathcal{A}$ the Hilbert space of an ancillary register $\mathcal{A}$.
We first discuss the primitives, then the universal complexity lower bound, and the master algorithm at last. Formal definitions and proofs of theorems are found in Methods.
\subsection{Quantum imaginary-time evolution primitives}
\label{sec:primitives}
 {We use the notation $(\beta,\varepsilon',\alpha)$-QITE-primitive to refer to a circuit that implements a block-encoding of the QITE propagator, i.e.  a unitary $U_{F_\beta(H)}$ acting on $\mathcal{S}$ and $\mathcal{A}$ containing an $\varepsilon'$-approximation of $\alpha F_\beta (H)$ as one of its matrix blocks, with $0\leq\alpha\leq1$ a subnormalization factor, $F_\beta (H):=e^{-\beta(H-\lambda_{\text{min}})}$, and $\lambda_{\text{min}}$ the minimal eigenvalue of $H$.} When applied to a state $\ket{\Psi}_\mathcal{S} \ket{0}_\mathcal{A}$, the primitive (approximatelly) produces the target state $\frac{F_{\beta}(H)\ket{\Psi}}{\left\Vert F_{\beta}(H)\ket{\Psi}\right\Vert}$ on the system after postselecting the ancillas in $\ket{0}_\mathcal{A}$. The postselection success probability is given by $p_{\Psi}(\beta, \alpha)=\alpha^2\, \left\Vert F_{\beta}(H)\ket{\Psi}\right\Vert^2$. The trace-distance error in the output-state is $\mathcal{O}(\varepsilon)$ if the spectral error in the primitive is $\varepsilon^{\prime}\leq\,\varepsilon\, \sqrt{p_{\Psi}(\beta, \alpha)}/2$ \cite[Sec. II]{TLS2022Sup}.
 
 We introduce two QITE primitives.
Both of them possess the basic structure shown in Fig.\ \ref{fig:general_mindset} a), where a sequence of  gates  $\{V_k\}_{\in[q]}$, with $[q]\coloneqq\{1, 2, \hdots q\}$, generates an approximate block-encoding of $F_\beta (H)$. The circuit acts on the system,  block-encoding ancillas and at most one extra qubit ancilla. The approximation consists of truncating an expansion of the exponential function at finite order $q$. Each gate $V_k$ makes one call to the oracle of $H$ (or its inverse) and contains $\mathcal{O}(1)$ parameterized single qubit rotations.  The parameters of this gates are determined by the function expansion using quantum signal processing \cite{low2017optimal, Low2019hamiltonian, gilyen2019,vanApeldoorn2020quantumsdpsolvers}. Conceptually, the two primitives differ in the kind of expansion and the type of oracle. Their circuit descriptions are given in the Methods, especially in Fig. \ref{fig:generalf}. 

{The first primitive implements a Chebyshev expansion using a block-encoding oracle $O_1$, i.e a unitary that has $H$ as one of its blocks. We denoted by $|\mathcal{A}_{O_1}|$  the ancillary-register size and by $g_{O_1}$ the gate complexity of $O_1$. In Methods, we prove the following.}   
\begin{thm}\label{teoQuITEbe} (QITE primitive using Chebyshev approximation and block-encoding
oracles). {Given $0<\varepsilon'<1$ and $\beta>0$, there is a circuit $P_{1}$ that is a $(\beta,\varepsilon',1)$-QITE-primitive using}
\begin{equation}
\label{eq:BEquery}
q_{{1}}(\beta,\varepsilon') = \mathcal{O}\left(\frac{e\,\beta}{2}+\frac{\ln(1/\varepsilon')}{\ln\left[e+2\ln(1/\varepsilon')/(e\,\beta)\right]}\right)
\end{equation}
queries to $O_1$ and $O_1^\dagger$, 
 $|\mathcal{A}_{1}|=|\mathcal{A}_{O_1}|+1$ total ancillary qubits, and gate complexity  $g_{P_1}=\mathcal{O}(g_{O_1}+|\mathcal{A}_{O_1}|)$\ per query. Moreover, the classical run-time to calculate the gates of $P_1$ is $\mathcal{O}({\rm poly}\big(q_1(\beta,\varepsilon')\big)$.
\end{thm}

{A nice feature of Eq.\ \eqref{eq:BEquery} is its sub-additivity in $\beta$ and $\ln(1/\varepsilon')$}. We note that a QITE primitive {was obtained in \cite{gilyen2019} that} works for the same oracle model 
and has complexity {upper-bounded by} $\mathcal{O}\big(\sqrt{2\,\max[e^2\,\beta , \ln(2/\varepsilon')]\,\ln(4/\varepsilon')}\big)$. 
{This} is asymptotically better in $\beta$ tha{n E}q.\ \eqref{eq:BEquery}, {but} it underperforms it for all $\beta\lesssim 8\ln(4/\varepsilon')$. 
{In particular, while Eq.\ \eqref{eq:BEquery} tends to zero for $\beta\to0$, the bound from Ref. \cite{gilyen2019} tends to $\mathcal{O}\big(\ln(1/\varepsilon')\big)$. Interestingly, the strict upper bound that we obtain in Methods is the expression within $\mathcal{O}( )$ in Eq.\ \eqref{eq:BEquery} up to a modest factor: 8. Moreover, in \cite[Sec. V]{TLS2022Sup}  we numerically verify that that expression is itself a valid bound (no extra factor), even for low $\beta$. Most importantly}, in Sec.\ \ref{sec:cooling_rate_bound} we show that {it approaches the optimal scaling} as $\beta$ decreases relative to {$\ln(1/\varepsilon')$. We stress that t}he latter regime is crucial for the master algorithm of Sec.\ \ref{sec:master_alg}, whose first fragments require{, precisely,} low inverse temperatures and high precisions.  {In turn, in the opposite regime of high $\beta$, preliminary numerical observations {\cite{Lucas2022}} suggest that the asymptotic scaling of the exact value of $q_1$ could actually be as good as $q_1(\beta,\varepsilon')=\mathcal{O}\big(\sqrt{\beta\,\ln(1/\varepsilon')}\big)$, i.e. similar to that from \cite{gilyen2019}.}

{The second primitive implements a Fourier expansion assuming access to a unitary oracle $O_2$, with gate complexity $g_{O_2}$, that contains the time evolution $e^{-iHt}$ at time $t=\frac{\pi}{2}\big(1+\frac{\gamma}{\beta}\big)^{-1}$.  }
In Methods, we prove the following.
\begin{thm}\label{teoQuITEberealtime} (QITE primitive using Fourier approximation nd single real-time
evolution oracles). {Given $0<\varepsilon'<1$ and $\beta>0$, there is a    
 $(\beta,\varepsilon',\alpha)$-QITE-primitive $P_{2}$ with $\alpha=e^{-\beta(1+\lambda_{\rm min})-\gamma}$, it uses  
\begin{equation}
\label{eq:real_time_or}
q_{{2}}(\beta,\varepsilon',\alpha) = \mathcal{O}\big((\beta/\gamma+1)\ln(4/\varepsilon')\big),
\end{equation}
queries to $O_2$ and $O_2^\dagger$, $|\mathcal{A}_{2}|=1$ ancilla, and $g_{P_2}=g_{O_2}+\mathcal{O}(1)$ gates per query}. Moreover, the gates of $P_2$ are obtained in classical runtime $\mathcal{O}({\rm poly}\big(q_2(\beta,\varepsilon',\alpha)\big)$. 
\end{thm} 

{As shown in Methods, the ``$\mathcal{O}( \cdot)$" in Eq. \eqref{eq:real_time_or} also hides only a  modest  global factor: 4.} In contrast to Eq.\ \eqref{eq:BEquery}, the relation between $\beta$ and $\ln(1/\varepsilon')$ in Eq.\ \eqref{eq:real_time_or} is multiplicative. However, in return, $P_2$ requires $|\mathcal{A}_2|=1$ ancillary qubit throughout, remarkably. This is a drastic reduction  relative to block-encoded oracle algorithms, and also to other algorithms based on real-time evolution. The latter is due to the use of a single real-time instead of an error-dependent number of them \cite{PW09, vanApeldoorn2020quantumsdpsolvers}. In fact, 
$|\mathcal{A}_2|=1$ is the minimum possible, because, since $F_\beta (H)$ is non-unitary, at least 1 ancilla is needed to block-encode it. {Moreover, the scaling of $g_{P_2}$ is optimal too. Since it is  based on real-time evolution oracles, it requires no qubitization \cite{Low2019hamiltonian}. Consequently, it adds only a small, constant number of gates per query to the intrinsic gate complexity $g_{O_2}$ of the oracle.}  These features make $P_2$ specially well-suited for near-term devices. Importantly, rather than a peculiarity of $P_2$, the favourable scalings of $|\mathcal{A}_{2}|$ and $g_{P_2}$ are generic features of the type of operator-function design behind it: An optimised Fourier-approximation algorithm for arbitrary analytical real functions of Hermitian operators \cite{TLS2022}.

Our algorithms support any $\lambda_{\rm min}\in[-1,1]$. For $P_2$, this is reflected by the sub-normalization factor $e^{-\beta(1+\lambda_{\rm min})}$, which decreases as $\lambda_{\rm min}$ departs from $-1$. In turn, the other factor, $e^{-\gamma}$, arises from the Gibbs phenomenon of Fourier series. The theorem holds for all $\gamma\geq 0$, allowing one to trade success probability for query complexity. {For $\varepsilon^{\prime}\ll1$, the optimal value of $\gamma$ depends only on $\beta$ for both coherent and probabilistic algorithms \cite[Sec. III]{TLS2022Sup}. }

Finally, Theos. \ref{teoQuITEbe} and \ref{teoQuITEberealtime} can be straightforwardly extended to the realistic case of approximate oracles: In \cite[Sec. I]{TLS2022Sup}, 
we show (for generic analytical operator functions) that it suffices to take the oracle error (deviation from an ideal oracle) as $\varepsilon'_O=O(\varepsilon'/q)$ to keep the primitive's error in $O(\varepsilon')$. 

\subsection{Cooling-speed limits for oracle-based QITE algorithms}
\label{sec:cooling_rate_bound}
{T}he most challenging applications of QITE involve small post-selection probabilities, decreasing exponentially in $N$ in the worst cases. In an effort to reduce the overall complexity [see Eq.\ \eqref{eq:Qcomp}], this has fueled a long race \cite{PW09, BB10,chowdhury2017,QSDP17,vanApeldoorn2020quantumsdpsolvers,gilyen2019} to improve $q(\beta,\varepsilon')$, going from the seminal $\mathcal{O}\big(\beta\,{\rm poly}(1/\varepsilon')\big)$ of \cite{PW09} to the recent {$\mathcal{O}\big(\sqrt{2\max[e^2\,\beta, \ln(2/\varepsilon')]\,\ln(4/\varepsilon')}\big)$} of \cite{gilyen2019} or the additive scaling of Eq.\ \eqref{eq:BEquery}. However, to our knowledge, no runtime limit for QITE simulations {has been} established. This contrasts with real-time evolution (RTE), where fundamental runtime lower bounds are given by the ``no-fast-forwarding theorem"  \cite{Berry2007,Berry2014,Berry2015a}. Thes{e a}re saturated by optimal {RTE} algorithms \cite{low2017optimal,Low2019hamiltonian,gilyen2019}. 
Here we derive an analogous bound for imaginary time, {which we call cooling-speed limit in allusion to the use of QITE  to cool systems down to their ground state. }

More precisely,  we prove a universal efficiency limit for QITE primitives based on block-encoded oracles. This is convenient as it directly applies to our primitive with lowest query complexity, i.e. $P_1$. 

\begin{thm}[Imaginary-time no-fast-forwarding theorem]
Let $\beta>0$ and $0<\varepsilon'<\alpha/2$. Then, any $(\beta,\varepsilon',\alpha)$-QITE-primitive querying block-encoding Hamiltonian oracles has query complexity at least $q_{\min}(\beta,\varepsilon',\alpha)\geq \tilde q$, where $\tilde q\in\mathbb{R}_{>0}$ is the unique solution to the  equation
\begin{equation}\label{eq:tildeq}
\left|\frac{1-e^{-\frac{\beta}{4\tilde q}}}{2}\right|^{2\,\tilde q} = \frac{2\,\varepsilon'}{\alpha} \ . 
\end{equation}
\label{th:lowerbound}\end{thm}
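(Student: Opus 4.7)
The plan is to recast the query-complexity lower bound as a polynomial-approximation lower bound, following the same high-level template used to prove the real-time no-fast-forwarding theorem \cite{Berry2007}. The structural ingredient is the polynomial characterization at the core of QSP/qubitization: any circuit with $q$ calls to the block-encoding oracle $O_1$ and its adjoint $O_1^{\dagger}$, interleaved with arbitrary $H$-independent unitaries on $\mathbb H_{\mathcal{SA}_1}$, outputs a unitary whose top-left block is $p(H)$ for a real polynomial $p$ of degree at most $q$ with $|p(x)|\leq 1$ on $[-1,1]$. I would invoke this result (already reviewed in the paper's Methods) as a black box, so that Def.~\ref{def:primitives} translates into $\|\alpha F_\beta(H)-p(H)\|\leq \varepsilon'$ for some such $p$.

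Next, I would use the fact that a QITE primitive must succeed for \emph{every} Hamiltonian $\|H\|\leq 1$. Specializing to the family of $2\times 2$ Hamiltonians $H_{\lambda^\ast}=\mathrm{diag}(-1,\lambda^\ast)$ with $\lambda^\ast\in[-1,1]$ pins $\lambda_{\min}=-1$ and sweeps $\lambda^\ast$ through the entire interval. The $(\varepsilon',\alpha)$-block-encoding guarantee applied to each $H_{\lambda^\ast}$ yields $|p(\lambda^\ast)-\alpha\,e^{-\beta(\lambda^\ast+1)}|\leq\varepsilon'$ for all $\lambda^\ast\in[-1,1]$. Combined with the unitarity constraint $|p(x)|\leq 1$ on $[-1,1]$, the theorem reduces to showing that the minimum degree $q$ of a polynomial $p$ satisfying both constraints is at least $\tilde q$.

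The heart of the proof is then an extremal-polynomial argument that reproduces the exact form of Eq.~\eqref{eq:tildeq}. The plan is a telescoping/contraction step: on a chain of $2q$ consecutive sub-intervals of length $\beta/(2q)$ anchored near $x=-1$ (so that all lie inside $[-1,1]$), the target $\alpha\, e^{-\beta(x+1)}$ shrinks by a total multiplicative factor $e^{-\beta}$, while a degree-$q$ polynomial $p$ bounded by $1$ on $[-1,1]$ cannot contract faster than the Chebyshev-extremal rate of $(1-e^{-\beta/(4q)})/2$ per step, obtained by applying Markov's brothers inequality (or equivalently, the extremality of the endpoint-pinned Chebyshev polynomial) on each sub-interval. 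Iterating this contraction $2q$ times gives a multiplicative bound $\alpha\,|(1-e^{-\beta/(4q)})/2|^{2q}$ on the unavoidable gap between $p$ and $\alpha e^{-\beta(x+1)}$ at one of the two endpoints. Requiring this gap to fit within the allowed approximation error $2\varepsilon'$ yields exactly Eq.~\eqref{eq:tildeq}; monotonicity of the left-hand side in $q$ gives uniqueness of $\tilde q$ and the lower bound $q_{\min}\geq\tilde q$.

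The main obstacle will be the last step: obtaining the sharp constants $\tfrac14$ (inside the exponential), $\tfrac12$ (inside the base), and $2q$ (in the exponent) rather than merely recovering the asymptotic form $q\geq \Omega\big(\ln(\alpha/\varepsilon')/\ln\ln(\alpha/\varepsilon')\big)$. Standard uses of Bernstein's theorem on polynomial approximation of entire functions of exponential type, or of a crude Markov-inequality Lipschitz bound, deliver the correct scaling but with suboptimal multiplicative constants. Matching Eq.~\eqref{eq:tildeq} exactly requires a careful choice of the anchor points and a sharp endpoint version of the Chebyshev extremal inequality, which is where I expect most of the technical work to concentrate.
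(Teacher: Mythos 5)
Your route is genuinely different from the paper's: you want to reduce the query lower bound to a polynomial-approximation lower bound (a polynomial-method argument), whereas the paper proves Theorem~\ref{th:lowerbound} by reducing parity computation to QITE. It constructs an explicit $2$-sparse Hamiltonian family $H_{\boldsymbol x}$ on the symmetric subspace whose imaginary-time propagator transfers amplitude exactly $\bigl|\tfrac{1-e^{-\beta/(2N)}}{2}\bigr|^{N}$ from $\ket 0_{\mathrm s}\ket 0_{\mathrm p}$ to $\ket N_{\mathrm s}\ket{\mathrm{par}(\boldsymbol x)}_{\mathrm p}$, shows that a block-encoding oracle for $H_{\boldsymbol x}$ costs one call to the parity oracle, and invokes the $\lceil N/2\rceil$ parity lower bound of Farhi et al.; the constants in Eq.~\eqref{eq:tildeq} then fall out of an elementary overlap computation with $N=\lfloor 2\tilde q\rfloor$. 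Your approach, as written, has two genuine gaps.

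First, the opening reduction is not available as a black box. The QSP material reviewed in the Methods establishes only the \emph{achievability} direction (which polynomials can be synthesized with $q$ structured, qubitized queries), not the converse you need: that \emph{any} circuit making $q$ calls to $O_1$ and $O_1^\dagger$, interleaved with arbitrary unitaries on the full system-plus-ancilla space, has a $\ket 0\bra 0$-block equal to $p(H)$ for a degree-$q$ polynomial $p$. That converse is false at face value: the entries of $U_H$ outside the top-left block are not polynomials in $H$ (they carry $\sqrt{1-H^2}$-type dependence), and without qubitization repeated queries leak out of the two-dimensional invariant subspaces, so the output block is a polynomial in the entries of $U_H$, not in $H$. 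This can be repaired by fixing a hard family of single-qubit oracles of the form $e^{i\theta Y}$ with $\lambda=\cos\theta$, so that every amplitude of the $q$-query circuit is a degree-$q$ trigonometric polynomial in $\theta$; but that is a nontrivial argument you must supply, and it changes the approximation problem to one over trigonometric polynomials. Second, the quantitative heart of your argument --- the claim that a degree-$q$ polynomial bounded by $1$ on $[-1,1]$ "cannot contract faster than $(1-e^{-\beta/(4q)})/2$ per step" over $2q$ sub-intervals --- is not a recognizable form of Markov's or Chebyshev's extremal inequalities; it is reverse-engineered from Eq.~\eqref{eq:tildeq}, and you concede that the standard tools (Bernstein-type bounds, Markov--Lipschitz estimates) only recover the asymptotic scaling. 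Since the content of the theorem is precisely the explicit equation \eqref{eq:tildeq}, a proof that cannot produce those constants does not establish the statement. A viable polynomial-method proof would more naturally bound the best uniform degree-$q$ approximation of $e^{-\beta\lambda}$ from below via its Chebyshev/Bessel coefficients $I_{q+1}(\beta)$, yielding a lower bound of a different functional form than Eq.~\eqref{eq:tildeq}.
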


Even though the bound is only given implicitly, interesting conclusions can readily be drawn. First, for any fixed $\beta$, the left-hand-side of Eq.\ \eqref{eq:tildeq} decreases monotonically with  $\tilde q$ (therefore the uniqueness of the solution). Second, for any fixed $\varepsilon'$ and $\alpha$, $\tilde q$ grows monotonically with $\beta$. Third, and most important, Eq.\ \eqref{eq:tildeq} is approximated by $\big(\frac{\beta}{{8}\,\tilde q}\big)^{2\tilde q}=2\,\varepsilon'/\alpha$ for $\beta\ll \tilde q$, as a Taylor expansion shows. The latter equation has a known explicit solution \cite{gilyen2019}, which, for $\alpha=1$, is given by Eq.\ \eqref{eq:BEquery}. 
Hence, for $\beta/\tilde q\to0$, Eq.\ \eqref{eq:BEquery} tends to the optimal scaling. 
Note that $\beta\ll \tilde q$ is equivalent to the first term Eq.\ \eqref{eq:BEquery} being much smaller than its second term, which in turn implies that $\varepsilon'$ should be  exponentially small in $\beta$. Thus, $P_1$ is close to optimal for small inverse temperatures or high precisions. Interestingly, this is the regime at which the first fragments of our master algorithm operate, as we see next.

\subsection{Fragmented master QITE algorithm} 
\label{sec:master_alg}

We call master QITE algorithm a procedure which incorporates the primitives to attain deterministic QITE. It means that these algorithms deterministically produce the state $\frac{F_{\beta}(H)\ket{\Psi}}{\left\Vert F_{\beta}(H)\ket{\Psi}\right\Vert}$, up to trace-distance error $\varepsilon$, if they are given an input state $\ket{\Psi}\in\mathbb H_\mathcal{S}$. 

Until now, two variants of master QITE algorithms had been reported, probabilistic and coherent (see Fig. \ref{fig:master}). The former leverage repeat-until-success: apply  $P_{\beta,\varepsilon^{\prime},\alpha}$ (on independent systems) until getting the desired output. Every time the postselection on the ancillas is not successful the resulting system state is discarded and system and ancillas are reinitialized for a new trial. The average number of trials until one gets one success is given as $\mathcal{O}(1/p_{\Psi}(\beta, \alpha))$.
In contrast, the latter are based on quantum amplitude amplification \cite{Brassard2002}. There, $P_{\beta,\varepsilon^{\prime},\alpha}$ is incorporated into a unitary amplification engine that is sequentially applied (on the same system) $\mathcal{O}\big(\sqrt{1/p_{\Psi}(\beta, \alpha)}\big)$ times. 
Hence, the overall query complexity of both variants is given by the unified expression 
\begin{equation}
Q_\kappa(\beta,\varepsilon, \alpha)=\mathcal{O}\bigg(\frac{1}{{\big(p_{\Psi}(\beta, \alpha)}\big)^{ {\mu_{\kappa}}}}\,q(\beta,\varepsilon^{\prime}, \alpha)\bigg),
\label{eq:Qcomp}
\end{equation}
where  {$\kappa=\,$prob/coh for probabilistic or coherent schemes, respectively, $\mu_{\rm prob}=1$, $\mu_{\rm coh}=1/2$},  
and $\varepsilon^{\prime}\coloneqq\varepsilon \sqrt{p_{\Psi}(\beta, \alpha)}/2$.
Since $p_{\Psi}(\beta, \alpha)$ can decrease with $N$ exponentially, the quadratic advantage in $1/p_{\Psi}(\beta, \alpha)$ of coherent approaches is highly significant. However,  coherent algorithms have a circuit depth $\mathcal{O}(\sqrt{1/p_{\Psi}(\beta, \alpha)})$ times greater than in probabilistic ones and require $\mathcal{O}(N)$ extra ancillas. This makes coherent schemes {impractical for} intermediate-scale quantum devices.

Our master algorithm relies on the basic identity $F_\beta(H)=\prod_{l=1}^{r} F_{\Delta\beta_l}(H)$ to partition the evolution into $r\in\mathbb{N}$ fragments of inverse temperatures $S_r\coloneqq\{\Delta\beta_l>0\}_{l\in [r]}$, such that $\sum_{l\in [r]}\Delta\beta_l=\beta$. We refer to $S_r$ as the \emph{fragmentation schedule}. For each $l$,  
the algorithm repeats until success a $(\Delta\beta_l,\varepsilon^{\prime}_l,\alpha_l)$-QITE-primitive $P_{\Delta\beta_l,\varepsilon^{\prime}_l,\alpha_l}$ on the output state $\ket{\Psi_{l-1}}$ of the $(l-1)$-th step, with $\varepsilon^{\prime}_l$ given in Eq \eqref{eq:F_errors}.  
That is, if the ancillas $\mathcal{A}$ are successfully post-selected in state $\ket{0}$, the system's output state $\ket{\Psi_{l}}$ is input into the $(l+1)$-th fragment. Else, the algorithm starts all over from the first fragment on $\ket{\Psi_{0}}\coloneqq\ket{\Psi}$, until $\ket{\Psi_{l-1}}$ is prepared and the $l$-th fragment can be run again. 
 {Alternatively, the measurement on $\mathcal{A}$ after each fragment can be seen as monitoring that the correct block of $U_{F_{\Delta\beta_l}(H)}$ is applied on each $\ket{\Psi_{l-1}}$, in contrast to the single error detection after $U_{F_{\beta}(H)}$ in the probabilistic master algorithm (see Fig. \ref{fig:general_mindset}).
  Note that the total number of trials (i.e. preparations of $\ket{\Psi}$) coincides with the number of repetitions of the first fragment. We also note that our method resembles the discrete formulation of the Zeno effect applied in the quantization of the Metropolis-Hastings walk for classical Hamiltonians \cite{Lemieux2020efficientquantum}. However, here we cannot apply the rewind technique, i.e iterate between two consecutive steps of inverse temperature instead of rebooting in case of a failure in the postselection \cite{PhysRevA.103.052408}. Rewind applied to fragmented QITE would not produce the right output state.       The following pseudocode summarizes all the algorithm}:

\begin{center}
{\setlength{\fboxsep}{1pt}

\begin{minipage}[t]{0.9\columnwidth}
\centering
\begin{algorithm}[H]\label{alg:Fragmented-QuITE-algorithm}
\SetAlgorithmName{Algorithm}{Algorithm}{Algorithm}
\SetAlgoLined
\SetKwInOut{Input}{input}\SetKwInOut{Output}{output}
\SetKwData{Even}{even}

\Input{$\ket{\Psi}$, $\beta\geq0$, $\varepsilon\geq0$, $S_r$, and QITE primitives $\{P_{\Delta\beta_l,\varepsilon^{\prime}_l,\alpha_l}\}_{l\in [r]}$ querying an oracle for $H$ (for $\Delta\beta_l\in S_r$, $\varepsilon^{\prime}_l$ given in Eq \eqref{eq:F_errors}, and $\alpha_l>0$).}
\Output{ $F_\beta(H)\ket{\Psi}/||F_\beta(H)\ket{\Psi}||$ up to trace-distance error $\mathcal{O}(\varepsilon)$.} 
\BlankLine
Set $l=1$, initialize $\mathcal{S}$ in the input state $\ket{\Psi}$ and $\mathcal{A}$ in the reference state $\ket{0}$\;
\While{$l\leq r$}{apply 
the circuit $P_{\Delta\beta_l,\varepsilon^{\prime}_l,\alpha_l}$ on $\mathcal{S}$ and $\mathcal{A}$\;
measure $\mathcal{A}$ in computational basis\;
   \lIf{the outcome is $\ket{0}$}{$l\to l+1$}
   \lElse{break loop and go back to line 1}
}
  \caption{Fragmented QITE}
\end{algorithm}
\end{minipage}
}
\end{center}

The correctness and complexity of Algorithm \ref{alg:Fragmented-QuITE-algorithm} are established by the following theorem, proven in \cite[Sec. IV]{TLS2022Sup}. 
 \begin{thm}\label{teo_correct_master} (Fragmented master QITE algorithm). If 
 \begin{equation}
\label{eq:F_errors}
\varepsilon'_l\leq \left\{
     \begin{array}{lr}
       \frac{\varepsilon\,\prod_{k=1}^r\alpha_k}{2\times 4^{r-1}}\sqrt{p_{\Psi}(\beta)} & \text{ if } l=1,\\
       \frac{\varepsilon\,\prod_{k=l}^r\alpha_k}{4^{r-l+1}}\sqrt{\frac{p_{\Psi}(\beta)}{p_{\Psi}(\beta_{l-1})}} & \text{ if } l>1,\\
     \end{array}
   \right.
\end{equation}
for all $l\in[r]$, Algorithm \ref{alg:Fragmented-QuITE-algorithm} is a  master QITE algorithm for $H$ on $\ket{\Psi}$ with error $\mathcal{O}(\varepsilon)$ and average query complexity
\begin{equation}
\label{eq:F_overall_query}
{Q_{S_r}(\beta,\varepsilon) =\sum_{l=1}^{r} n_l\, q(\Delta\beta_l,\varepsilon^{\prime}_l, \alpha_l)},
\end{equation}
where $n_l\coloneqq\frac{p_{\Psi}(\beta_{l-1})}{p_{\Psi}(\beta)\prod_{k=l}^r\alpha^2_k}$ is the average number of times that $P_{\Delta\beta_l,\varepsilon^{\prime}_l,\alpha_l}$ is run, with $\beta_0\coloneqq0$, $\beta_l\coloneqq\sum_{k=1}^{l}\Delta\beta_k$ for all $l\in[r]$, and $p_{\Psi}(\tilde{\beta})\coloneqq\|F_{\tilde{\beta}}(H)\ket{\Psi}\|^2$ for any $\tilde{\beta}$.
\end{thm}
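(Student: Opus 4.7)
The plan is to split the claim into correctness (the output trace distance is $\mathcal{O}(\varepsilon)$) and complexity (the average query count equals the stated sum), and prove them in that order; the two parts are largely independent.

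For correctness, I proceed by induction on the fragment index $l\in[r]$, controlling the trace distance between the conditional post-selected state $\ket{\Psi_l}$ and the ideal $F_{\beta_l}(H)\ket{\Psi}/\|F_{\beta_l}(H)\ket{\Psi}\|$. The base step invokes the error-propagation lemma summarized just before Def.\ \ref{def:masters} (proven in App.\ \ref{app:post_prob_prop}): an $(\varepsilon'_1,\alpha_1)$-block-encoding applied to the exact input $\ket{\Psi}$ yields output error $\mathcal{O}(\varepsilon)$ as soon as $\varepsilon'_1\lesssim \varepsilon\,\alpha_1\,\sqrt{p_\Psi(\beta_1)}/2$. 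For the inductive step, two sources of error at fragment $l$ must be absorbed simultaneously: the imperfect input $\ket{\Psi_{l-1}}$ inherited from the previous successful post-selections, and the intrinsic primitive error $\varepsilon'_l$. My plan is to work with the \emph{subnormalized} block-encoded vectors rather than normalized states, apply the triangle inequality on those vectors, and only then re-normalize via the standard sensitivity bound $\left\|v/\|v\|-w/\|w\|\right\|\leq 2\,\|v-w\|/\|v\|$; each such re-normalization costs at most a factor of two, which together with the block-encoding-to-state conversion accounts for the $4^{r-l+1}$ denominators in Eq.\ \eqref{eq:F_errors}. The ratios $\sqrt{p_\Psi(\beta)/p_\Psi(\beta_{l-1})}$ appearing there are chosen precisely so that after dividing by the success amplitude $\|F_{\Delta\beta_l}(H)\ket{\Psi_{l-1}}\|=\sqrt{p_\Psi(\beta_l)/p_\Psi(\beta_{l-1})}$ at each stage, the accumulated error stays in $\mathcal{O}(\varepsilon)$ even when $p_\Psi(\beta)$ is exponentially small in $N$.

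For the complexity, I reduce Algorithm \ref{alg:Fragmented-QuITE-algorithm} to a restart process: one complete pass through the while-loop terminates successfully with probability $P_{\rm whole}=\prod_{l=1}^{r}p_l$, where $p_l\simeq\alpha_l^2\,p_\Psi(\beta_l)/p_\Psi(\beta_{l-1})$ is the conditional success probability at fragment $l$ given the (ideal) state $\ket{\Psi_{l-1}}$. The $l$-th fragment is executed exactly once per pass that reaches it, which happens with probability $\prod_{k=1}^{l-1}p_k$. A standard geometric-random-variable argument then gives expected number of runs $n_l=(\prod_{k=1}^{l-1}p_k)/P_{\rm whole}=1/\prod_{k=l}^{r}p_k$, and telescoping the $p_\Psi$-ratios yields $\prod_{k=l}^{r}p_k=p_\Psi(\beta)\prod_{k=l}^{r}\alpha_k^2/p_\Psi(\beta_{l-1})$. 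Substituting back reproduces the $n_l$ of the theorem, and linearity of expectation over the fragments gives Eq.\ \eqref{eq:F_overall_query}. As in the non-fragmented setting (cf.\ App.\ \ref{app:QuITEmasters}), using the exact success probability $p_\Psi(\beta_{l-1},\alpha_l)$ rather than $p_\Psi(\beta_{l-1},\varepsilon'_l,\alpha_l)$ is justified because the correction is itself $\mathcal{O}(\varepsilon)$ and is absorbed into the asymptotic notation.

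The main obstacle is the inductive error analysis inside the correctness argument: post-selection conditions on an event that can be arbitrarily rare, so naively propagating trace distances of normalized states blows up by $1/p_l$ at every fragment and would destroy the overall $\mathcal{O}(\varepsilon)$ bound. The remedy is to keep all inequalities at the level of subnormalized vectors and only re-normalize once, which is precisely what forces the specific $\sqrt{p_\Psi(\beta)/p_\Psi(\beta_{l-1})}$ weights and $4^{r-l+1}$ denominators of Eq.\ \eqref{eq:F_errors}; verifying that the resulting geometric-type sum telescopes cleanly into $\mathcal{O}(\varepsilon)$ after $r$ nested post-selections is the most delicate bookkeeping and is what App.\ \ref{app:theo_fragmented} actually executes in detail.
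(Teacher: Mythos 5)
Your proposal is correct and follows essentially the same route as the paper: correctness is established by iterating an error-propagation lemma that absorbs both the inherited input-state error and the primitive's block-encoding error at each fragment (App.\ \ref{app:theo_fragmented}, Lemma \ref{lem:error_prop_with_input_state}), with the $\sqrt{p_\Psi(\beta)/p_\Psi(\beta_{l-1})}$ weights and the geometric $4^{r-l+1}$ factors playing exactly the role you describe, and the complexity follows from the telescoping identity $p_{\Psi_{l-1}}(\Delta\beta_l)=p_\Psi(\beta_l)/p_\Psi(\beta_{l-1})$. Your restart-process/geometric-variable derivation of $n_l=1/\prod_{k=l}^{r}p_k$ is a slightly more explicit justification than the paper's recursive definition of the $n_l$, but it yields the identical formula by the identical telescoping.
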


\begin{figure}[t!]
  \centering{}\includegraphics[width=0.6\columnwidth]{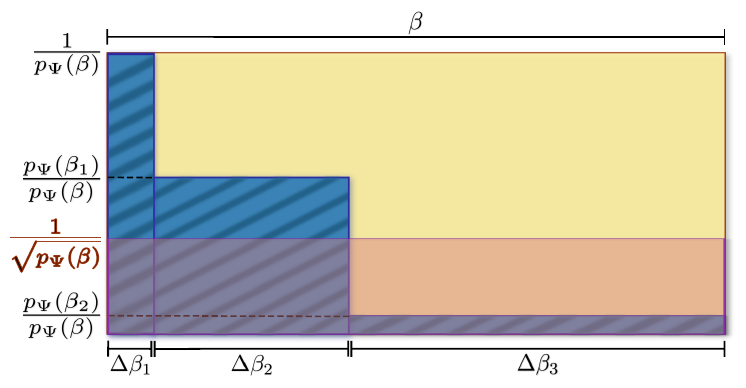}
  \caption{{{\bf Intuition behind the complexity reduction by fragmentation}. The overall complexity of the probabilistic master algorithm is dominated by the area of the yellow rectangle. In contrast, the corresponding complexity of the fragmented algorithm (here, for the exemplary case of $r=3$ fragments) is dominated by the area of the blue-shaded rectangles. Up to logarithmic corrections in the precision, the cumulative width of the blue-shaded rectangles coincides with the width of the yellow one, of order $\beta$. In contrast, while the height of the yellow rectangle is of order $1/p_{\Psi}(\beta)$, the height of the blue-shaded ones decreases from order $1/p_{\Psi}(\beta)$ till order $p_{\Psi}(\beta_{r-1})/p_{\Psi}(\beta)$, making the blue-shaded area smaller than the yellow one. For high-enough $\beta$, the reduction can be so strong that the complexity of the fragmented algorithm can reach even that of the coherent algorithm which is represented by the area of the pink rectangle with height $1/\sqrt{p_{\Psi}(\beta)}$. This intuition is rigorously proven for Primitive 1 (in Theorem \ref{thm:beta_c}) and numerically verified to exhaustion for both Primitives 1 and 2 (in Sec. \ref{sec:GS_sampling}).}
  \label{fig:Frag_complexity_schematics}}
\end{figure}

{We note that, for Primitive 1, the average total number of trials coincides with that of the probabilistic algorithm: $n_1=1/p_{\Psi}(\beta)=:n_{\rm prob}$ (see Methods). This is important because the probabilistic algorithm consumes $q_1(\beta,\varepsilon^{\prime})$ queries per trial, successful or not. In contrast, the fragmented one consumes per trial $q_1(\Delta\beta_1,\varepsilon^{\prime}_1)$ queries, plus $q_1(\Delta\beta_2,\varepsilon^{\prime}_2)$ queries only if the first post-selection succeeds, plus $q_1(\Delta\beta_3,\varepsilon^{\prime}_3)$ queries only if the second one succeeds too, and so on. Hence, 
the total waste in queries is lower with fragmentation (see Fig. \ref{fig:Frag_complexity_schematics}).
 The strength of the reduction depends on how fast $p_{\Psi}(\beta_l)$ (and so $n_l$) decreases with $l$; but, in any case, it gets more drastic as $\beta$ increases. That is, the largest reductions are expected at the hardest regime of $p_{\Psi}(\beta)\ll1$. To maximize the effect, one wishes $q_1(\Delta\beta_l,\varepsilon^{\prime}_l)$ to decrease with $l$ as fast as possible. Note that Eq.\ \eqref{eq:F_errors} implies $\varepsilon^{\prime}_l< \varepsilon^{\prime}_{l+1}$, which plays against the latter wish. However, fortunately, $q_1(\Delta\beta_l,\varepsilon^{\prime}_l)$ grows approximately linearly in $\Delta\beta_l$ but sub-logarithmically in $1/\varepsilon^{\prime}_l$. Hence,  for sufficiently high $\beta$, one can make $q_1(\Delta\beta_l,\varepsilon^{\prime}_l)$ arbitrarily smaller than $q_1(\Delta\beta_{l+1},\varepsilon^{\prime}_{l+1})$ by choosing $\Delta\beta_l$ sufficiently smaller than $\Delta\beta_{l+1}$.}

{Based on these heuristics, we next prove for Primitive 1 that Alg. \ref{alg:Fragmented-QuITE-algorithm} can not only outperform the probabilistic  algorithm but also -- for sufficiently high $\beta$ -- even the coherent one, surprisingly. The proof is constructive: we devise suitable schedules that give the desired advantage for fragmentation. Remarkably, it is enough to consider only $r=2$ fragments. The result is valid for any $\ket{\Psi}$ and $H$, under only mild assumptions on the success probability $p_{\Psi}$ as a function of $\beta$. We denote the inverse function of $p_{\Psi}$ by $p^{-1}_{\Psi}$.
For simplicity, we state the theorem explicitly for the restricted case of $H$ non-degenerate, with a unique ground state $\ket{\lambda_{\rm min}}$ of overlap ${o}^2\coloneqq|\braket{\lambda_{\rm min}}{\Psi}|^2$ with $\ket{\Psi}$. However, it can be straightforwardly generalized to the degenerate case by redefining ${o}^2$ as the overlap with the lowest-energy subspace.

\begin{thm}[Fragmented QITE outperforms coherent QITE]
\label{thm:beta_c} Let $\ket{\lambda_{\rm min}}\in\mathbb{H}_\mathcal{S}$ be the unique ground state of $H$ and $\ket{\Psi}\in\mathbb{H}_\mathcal{S}$  such that $0<{o}\leq1/2.2$. Define the critical inverse temperature $\beta_c=\frac{2}{{o}}\left[\frac{2}{e}\ln\left(\frac{8}{{o}\,\varepsilon}\right)+p^{-1}_{\Psi}(\frac{{o}}{2.2})\right]$. Then, if ($H$ and $\ket{\Psi}$ are such that) $p_{\Psi}(\beta_c){\leq1/4}$, there exists a two-fragment schedule $S_2$ for which, for $P_1$, it holds that $Q_{S_2}(\beta,\varepsilon)<Q_{\text{coh}}(\beta,\varepsilon)$ for all $\beta\geq\beta_c$ and $0<\varepsilon<1$. In particular, $S_2=\big\{\Delta\beta_1={p^{-1}_\Psi\big(\frac{{o}}{2}\frac{1}{{\ln[e+2\ln(2/{o}\,\varepsilon)/e\beta]}}\big)},\Delta\beta_2=\beta-\Delta\beta_1\big\}$ is a valid choice of such schedules.
\end{thm}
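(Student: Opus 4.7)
The plan is to evaluate both complexities explicitly for the proposed schedule and compare them. Instantiating Theorem~\ref{teo_correct_master} with $r=2$ and $\alpha_1=\alpha_2=1$ (Primitive~1 is subnormalization-free) gives
\[
Q_{S_2}(\beta,\varepsilon)=\frac{q_1(\Delta\beta_1,\varepsilon'_1)+p_\Psi(\Delta\beta_1)\,q_1(\Delta\beta_2,\varepsilon'_2)}{p_\Psi(\beta)},
\]
with $\varepsilon'_1=\varepsilon\sqrt{p_\Psi(\beta)}/8$, $\varepsilon'_2=\varepsilon\sqrt{p_\Psi(\beta)/p_\Psi(\Delta\beta_1)}/4$, and target $Q_{\rm coh}(\beta,\varepsilon)=q_1(\beta,\varepsilon\sqrt{p_\Psi(\beta)}/2)/\sqrt{p_\Psi(\beta)}$ from Eq.~\eqref{eq:Qcomp} with $\mu_{\rm coh}=1/2$. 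Writing $L(\beta)\coloneqq\ln[e+2\ln(2/o\varepsilon)/e\beta]$, the proposed $\Delta\beta_1=p^{-1}_\Psi(o/[2L(\beta)])$ enforces $p_\Psi(\Delta\beta_1)=o/[2L(\beta)]$ by construction.

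The first step is to check admissibility: since $p_\Psi$ is monotonically decreasing from $1$ at $\beta=0$ to $o^2$ at $\beta\to\infty$, one needs $o/[2L(\beta)]\in(o^2,1]$. The upper bound $o/[2L(\beta)]\leq 1$ is immediate from $L(\beta)\geq 1$. The lower bound $L(\beta)<1/(2o)$ follows from $\beta\geq\beta_c\geq(4/oe)\ln(8/o\varepsilon)$, which forces $2\ln(2/o\varepsilon)/(e\beta)\leq o/2$ and hence $L(\beta)\leq\ln(e+o/2)<1/(2o)$ for $o\leq 1/2.2$. The same estimate also gives $\Delta\beta_1\leq p^{-1}_\Psi(o/2.2)\leq o\beta_c/2$, so $\Delta\beta_1$ remains uniformly bounded independently of $\beta$.

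The core comparison splits $Q_{S_2}$ into its two fragment contributions. For the second fragment, $\Delta\beta_2\leq\beta$ and $\varepsilon'_2\geq\varepsilon\sqrt{p_\Psi(\beta)}/2$ (valid because $p_\Psi(\Delta\beta_1)\leq o/2\leq 1/4.4$), so monotonicity of $q_1$ in both arguments gives $q_1(\Delta\beta_2,\varepsilon'_2)\leq q_1(\beta,\varepsilon\sqrt{p_\Psi(\beta)}/2)$. Combined with $p_\Psi(\beta)\geq o^2$ (from the ground-state component of $F_\beta(H)\ket{\Psi}$), its contribution to the ratio $Q_{S_2}/Q_{\rm coh}$ is at most $p_\Psi(\Delta\beta_1)/\sqrt{p_\Psi(\beta)}\leq 1/[2L(\beta)]<1/2$. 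For the first fragment, $\Delta\beta_1$ is uniformly bounded and $\varepsilon'_1\geq\varepsilon o/8$, so by Eq.~\eqref{eq:BEquery} $q_1(\Delta\beta_1,\varepsilon'_1)=\mathcal{O}\bigl(1+\ln(8/o\varepsilon)\bigr)$; dividing by $p_\Psi(\beta)$ and comparing to $Q_{\rm coh}\gtrsim e\beta/[2\sqrt{p_\Psi(\beta)}]$, the $\tfrac{2}{e}\ln(8/o\varepsilon)$ summand inside $\beta_c$ is precisely what also bounds this contribution by a term of order $1/L(\beta)$ with a strictly smaller implicit constant. Summing both contributions yields $Q_{S_2}(\beta,\varepsilon)/Q_{\rm coh}(\beta,\varepsilon)\leq 1/L(\beta)<1$ for all $\beta\geq\beta_c$ and $0<\varepsilon<1$, as claimed.

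The main obstacle is controlling simultaneously the three logarithms appearing in the two $q_1$ denominators --- those in $1/\varepsilon$, $1/p_\Psi(\beta)$, and $1/\beta$ --- and showing they leave room for a strict, not merely asymptotic, inequality. The additional hypothesis $p_\Psi(\beta_c)\leq 1/4$ is what guarantees that $Q_{\rm coh}$ actually enjoys a genuine quadratic speedup over $Q_{\rm prob}$ on $[\beta_c,\infty)$, so that beating $Q_{\rm coh}$ is a meaningful statement in the first place.
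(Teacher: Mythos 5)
Your proposal is correct in substance and follows the same overall strategy as the paper's App.~\ref{app:analytical_evidence}: bound $q_1$ by $\tilde q_1(\beta,\varepsilon')=\tfrac{e\beta}{2}+\ln(1/\varepsilon')$, and show that each fragment contributes at most (roughly) half of $Q_{\rm coh}$. The execution differs in a useful way for the second fragment: where the paper expands the $l=2$ condition into Eq.~\eqref{eq:frag2} and argues positivity term by term, you simply invoke monotonicity of $q_1$ in both arguments to get $q_1(\Delta\beta_2,\varepsilon'_2)\leq q_1(\beta,\varepsilon')$ (valid since $\Delta\beta_2\leq\beta$ and $\varepsilon'_2\geq\varepsilon'$, the latter because $p_\Psi(\Delta\beta_1)=o/[2L(\beta)]\leq o/2<1/4$), so that this fragment's share of the ratio is at most $p_\Psi(\Delta\beta_1)/\sqrt{p_\Psi(\beta)}\leq 1/[2L(\beta)]<1/2$. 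That is a genuine simplification of the paper's argument. Your admissibility checks on $\Delta\beta_1$ (that $o/[2L(\beta)]$ lies in the range of $p_\Psi$ and that $\Delta\beta_1\leq p_\Psi^{-1}(o/2.2)\leq o\beta/2<\beta$) match the paper's use of $1<L(\beta)<1.1$ for $\beta\geq\beta_c$ and $o\leq 1/2.2$.

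One quantitative claim in your last step does not hold as stated. Carrying out your first-fragment estimate explicitly with $q_1(\Delta\beta_1,\varepsilon'_1)\leq\tfrac{e}{2}p_\Psi^{-1}(o/2.2)+\ln(8/o\varepsilon)$, $\sqrt{p_\Psi(\beta)}\geq o$, and $Q_{\rm coh}\geq e\beta/[2\sqrt{p_\Psi(\beta)}]$ with $\beta\geq\beta_c$ gives a contribution of exactly $1/2$, not ``a term of order $1/L(\beta)$ with a strictly smaller implicit constant.'' Consequently the total is bounded by $\tfrac12+\tfrac{1}{2L(\beta)}$, not by $1/L(\beta)$ (the latter would require $L(\beta)\leq1$, whereas $L(\beta)>1$). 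This does not damage the theorem: since $L(\beta)>1$ strictly, $\tfrac12+\tfrac{1}{2L(\beta)}<1$, which is all that is needed for $Q_{S_2}<Q_{\rm coh}$. I recommend replacing the $1/L(\beta)$ claim with the explicit $\tfrac12+\tfrac{1}{2L(\beta)}$ bound, and spelling out the first-fragment arithmetic so the role of the $\tfrac{2}{e}\ln(8/o\varepsilon)$ summand in $\beta_c$ is visible rather than asserted.
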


The proof is given in the Supplementary Information \cite[Sec. VI]{TLS2022Sup}.
The schedules constructed there have the sole purpose of proving the existence of $\beta_c$ in general and are therefore not necessarily optimal for each specific $H$ and $\ket{\Psi}$. For instance,  in \cite[Sec. VIII]{TLS2022Sup}  
we study Gibbs-state sampling (i.e. for the maximally-mixed state as input, with ${o}=2^{-N/2}$) for $H$ describing non-interacting particles, where a closed-form expression for $p_{\Psi}(\beta)$ can be obtained. For this simple case, the theorem yields $\beta_c=\mathcal{O}\big(2^{N/2}\, N\big)$. However, in Sec.\ \ref{sec:GS_sampling} we numerically optimize the schedules and obtain $\beta_c=\mathcal{O}\big(2^{N/2}\big)$ for hard-to-simulate, interacting systems.
The proof 
exploits the additive dependence of $q_{{1}}$ on $\beta$ and the logarithmic term in Eq. \eqref{eq:BEquery}. Its extension to the multiplicative case of $q_{2}$ is left for future work. Nevertheless, here, we do consistently observe an advantage of fragmented QITE over coherent one for $P_2$. More precisely, in Sec.\ \ref{sec:GS_sampling}, we numerically find that also for $P_2$ does fragmentation outperform coherent-QITE at Gibbs-state sampling, with $\beta_c$ scaling with $N$ as in $P_1$ but with a somewhat larger pre-factor (which is expectable, as $\alpha_l<1$ gives an exponential dependance of $n_l$ on $r$ that worsens the performance). Either way, that fragmentation can outperform quantum amplitude amplification at all is remarkable, since the latter requires circuits $\mathcal{O}(\sqrt{1/p_{\Psi}(\beta)})$ times deeper and $\mathcal{O}(N)$ more ancillas than the former.}

Our findings would have little practical relevance if $\beta_{\rm c}$ was unphysically high. Fortunately, $\beta_{\rm c}=\mathcal{O}\big(2^{N/2}\big)$ is in an intermediate regime useful for important applications: E.g., Ground-state cooling (or, more generally, Gibbs-state sampling at low temperatures) requires $\beta$ scaling inversely proportionally to the spectral gap, which can be exponentially small in $N$ even for relatively simple Hamiltonians such as transverse-field Ising models \cite{Exp_gap_NP_comp,Gap_Anderson_loc}. 
{In fact, in Sec. \ref{sec:GS_sampling} we compare $\beta_{\rm c}$ with the inverse temperatures $\beta_{0.9}$ needed for a modest ground-state fidelity $0.9$. We systematically observe that $\beta_{\rm c}$ is either greater than or close to $\beta_{0.9}$, evidencing the relevance of the regime of advantage of fragmented over coherent QITE.
Finally, as m}entioned, $P_1$ is particularly well-suited for  fragmentation. On the one hand, it displays $\alpha_l=1$ for all $l\in[r]$. On the other hand, and most importantly, $q_1$ becomes optimal as $\beta$ decreases relative to $\ln(1/\varepsilon')$. {This is convenient to minimize Eq. \eqref{eq:F_overall_query}, because the first fragments (specially the first one) operate precisely at low $\Delta\beta_l$ and $\varepsilon^{\prime}_l$, close to that optimality regime. The latter is verified both analytically for the non-interacting case of \cite[Sec. VIII]{TLS2022Sup} 
and numerically for the examples of Sec.\ \ref{sec:GS_sampling} in \cite[Sec. IX]{TLS2022Sup}, 
where we consistently observe that $\beta_1$ is typically only a tinny fraction of $\ln(1/\varepsilon_1')$. Colloquially speaking, the widths of the first blue-shaded rectangles in Fig. \ref{fig:Frag_complexity_schematics}  can be reduced more with $P_1$ than with other primitives}.

\begin{figure*}[t!]
\centering{}\includegraphics[width=1\textwidth]{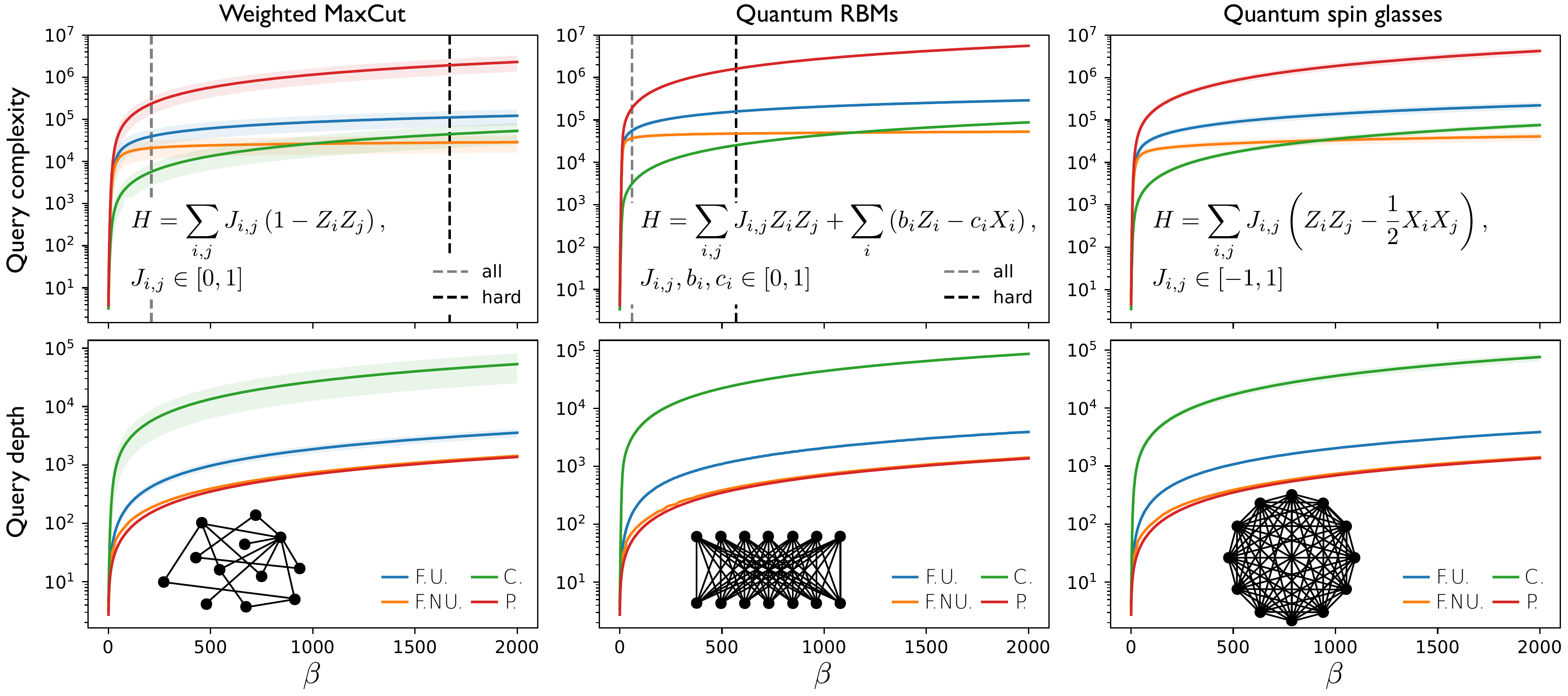}
\caption{{\bf Runtimes and circuit depths of quantum Gibbs-state samplers running on Primitive 1 versus inverse temperature}. Red corresponds to the probabilistic master QITE algorithm (P), green to the coherent one (C), blue to the fragmented one with uniform schedule $S_{r}$ for the best $r$ (F.U. for fragmented uniform), and orange to the fragmented one with a schedule $S_{r,a}$ as in Eq.\ \eqref{eq:schedule_ansatz} for the best $r$ and $a$ (F.NU. for fragmented non-uniform) (see also Fig.\ \ref{fig:Opt_schedules}).  {Three classes of Hamiltonians are shown (expressions in upper panels and lattice geometries in lower ones).} 
Solid curves represent the means over 1000 random instances from each class, wherea{s 
s}haded areas are the corresponding standard deviations. 
The examples shown correspond to $N=12$ qubits and a tolerated error of $\varepsilon=10^{-3}$, but qualitatively identical behaviors are observed for all $N$ between 2 and 15 as well as for $\varepsilon=10^{-2}$ and $\varepsilon=10^{-1}$. 
Upper panels: average overall query complexity. Both fragmented algorithms {comfortably} outperform the probabilistic one already at small $\beta$. In addition, fragmentation with non-uniform schedule outperforms even coherent QITE at a critical inverse temperature $\beta_c$. 
{The black and gray vertical dashed lines mark respectively the values $\beta^{(\text{aver})}_{0.9}$ and $\beta^{(\text{hard})}_{0.9}$ at which the average fidelity with the ground state (over all instances and over the $10\%$ of them with the smallest gaps) reaches a modest value of $0.9$ {(not shown in the third panel because they lie beyond the range of $\beta$ shown; see \cite[Sec. X]{TLS2022Sup}.}
Both in the first and second panels, $\beta^{(\text{aver})}_{0.9}$ is smaller than $\beta_\text{c}$, but the complexity of fragmented QITE at $\beta^{(\text{aver})}_{0.9}$ is already significantly smaller than that of probabilistic QITE. 
These considerations imply that fragmented QITE is either competitive or directly superior to coherent QITE for ranges of $\beta$ that are highly relevant for ground state preparation, e.g. The advantage of fragmentation becomes more evident when we compare the average query depths in the lower panels.} Defined as the maximum number of queries per circuit run (i.e., not taking into account independent trials), the query depth quantifies the circuit depth (relative to the depth per query) required by {one successful run. }
} 
\label{fig:Frag_QITE_P1}
\end{figure*}

\subsection{Fragmented quantum Gibbs-state samplers} 
\label{sec:GS_sampling}
{We benchmark the performance of Alg. \ref{alg:Fragmented-QuITE-algorithm} at quantum Gibbs-state sampling by comparing Eqs.  \eqref{eq:F_overall_query} and \eqref{eq:Qcomp} for four classes of spin-$1/2$ systems: Ising models associated to the $i$) MaxCut and $ii$) weighted MaxCut problems \cite{AdibaticQC00,QAOA14, Qbacktracking15}; $iii$) transverse-field Ising interactions on the restricted-Boltzmann-machine (RBM) geometry \cite{Biamonte17, QBoltzmann18}; and $iv$) Heisenberg all-to-all interactions, corresponding to a quantum generalization of the Sherrington-Kirkpatrick model \cite{SK75,Pachenco12} for spin glasses. All four classes feature long-range frustation;
and classically simulating their Gibbs states (for random instances) is a computationally-hard task \cite{Karp72,TSSW00,LS10,Montanari19,FGGZ19}.

The Gibbs state $\varrho_{\beta}\coloneqq\frac{e^{-\beta (H-\lambda_{\rm min})}}{Z_{\beta}}$ of $H$ at $\beta$, with $Z_{\beta}\coloneqq\textrm{Tr}\left[e^{-\beta (H-\lambda_{\rm min})}\right]$ its partition function, can be prepared by QITE at $\beta/2$ on the maximally-mixed state $\varrho_0=\frac{\mathds{1}}{Z_{0}}$, where $Z_{0}=2^{N}$. 
Hence, the post-selection probability is $p_{\Psi}(\beta/2, \alpha)=\alpha^2\, \frac{Z_{\beta}}{Z_{0}}$, where $\alpha=1$ for $P_1$  and $e^{-\beta(1+\lambda_{\rm min})-\gamma}$ for $P_2$. This, together with Eqs. \eqref{eq:BEquery} and \eqref{eq:real_time_or}, determine the overall query complexities, with respect to $P_1$ and $P_2$, respectively, for the three master algorithms: probabilistic [Eq.\ \eqref{eq:Qcomp} for $\kappa=$ prob], coherent [Eq.\ \eqref{eq:Qcomp} for $\kappa=$ coh], and fragmented [Eq.\ \eqref{eq:F_overall_query}]. More technically, rather than Eqs. \eqref{eq:BEquery} or \eqref{eq:real_time_or} we use their ceiling functions, to guarantee that each fragment's query complexity is integer. }

For $N$ up to 15 qubits, we draw 1000 random {$H$'s} within each class. For fair comparison, we re-scale all {$H$'s} so that $\lambda_{\rm min}=-1$ and $\lambda_{\rm max}=1$.
For each of them, we calculate the complexities for $\beta$ between 0 and 10000 and $\varepsilon=0.1$, $0.01$, or $0.001$. Partition functions are evaluated by exact diagonalization of $H$. 
{Evaluating Eq.\ \eqref{eq:F_overall_query} requires in addition a choice of schedule.  We propose 
\begin{equation}
\label{eq:schedule_ansatz}
S_{r,a}\coloneqq\left\{\left[\left(\frac{l}{r}\right)^a-\left(\frac{l-1}{r}\right)^a\right]{\beta/2}\right\}_{l\in[r]},
\end{equation}
 for $a>1$, so that $\beta_l=\big(\frac{l}{r}\big)^a{\beta/2}$ for all $l\in[r]$. This guarantees that $\Delta\beta_1<\Delta\beta_{2} \hdots <\Delta\beta_{r}$ and allows us to control the strength of the inequalities by varying $a$.
For each problem instance ($N$, $H$, and $\beta$), we sweep $r$ and, for each value of $r$, we find the optimal $a$ through {the Broyden–Fletcher–Goldfarb–Shanno (BFGS)} algorithm until minimizing $Q_{S_{r,a}}({\beta/2},\varepsilon)$ \cite{thais_l_silva_2021_5595705}.}

\begin{figure}[t!]
\centering{}\includegraphics[width=0.6\columnwidth]{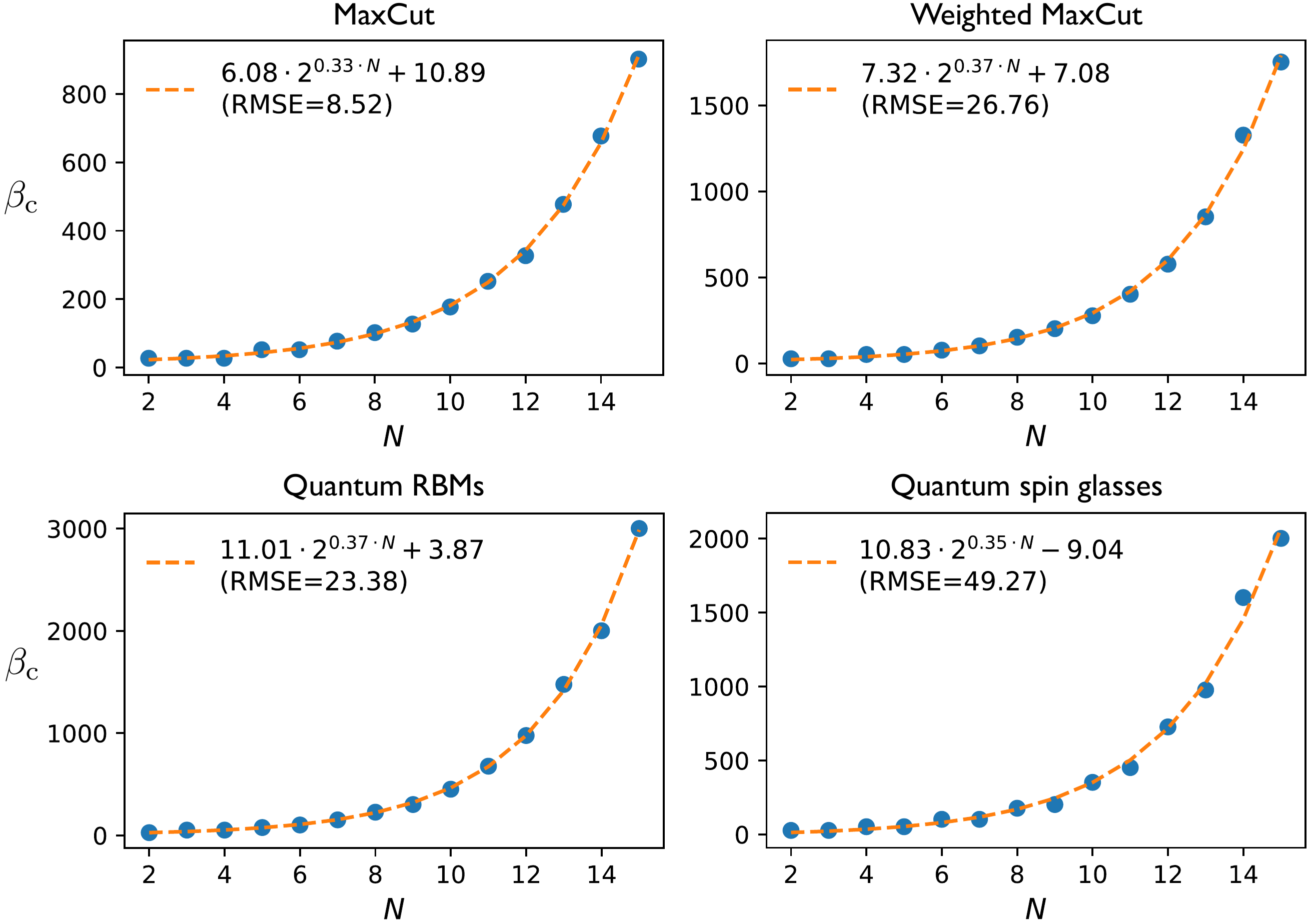}
\caption{{\bf Critical inverse temperatures for $P_1$ versus number of qubits}. The error and Hamiltonian classes are the same as in Fig.\ \ref{fig:Frag_QITE_P1}, except for  MaxCut, defined as weighted MaxCut but with random $J_{i,j}\in\{0,1\}$ for all $(i,j)$. Blue dots represent the means over 1000 instances from each class, whereas dashed orange curves their fits over the Ansatz $\beta_{\rm c}(N)=A\, 2^{\eta\, N}+ B$, with $A, B, \eta\in\mathbb{R}$. The fit results, together with their root-mean-square deviations (RMSDs), are shown in the insets. Similar scalings with $N$ are observed for $P_2$ \cite[Sec. XI]{TLS2022Sup}. 
In all cases, $\beta_{\rm c}=\mathcal{O}\big(2^{N/2}\big)$ is satisfied.
\label{fig:Beta_crit}}
\end{figure}

\begin{figure}[t!]
\centering{}\includegraphics[width=0.6\columnwidth]{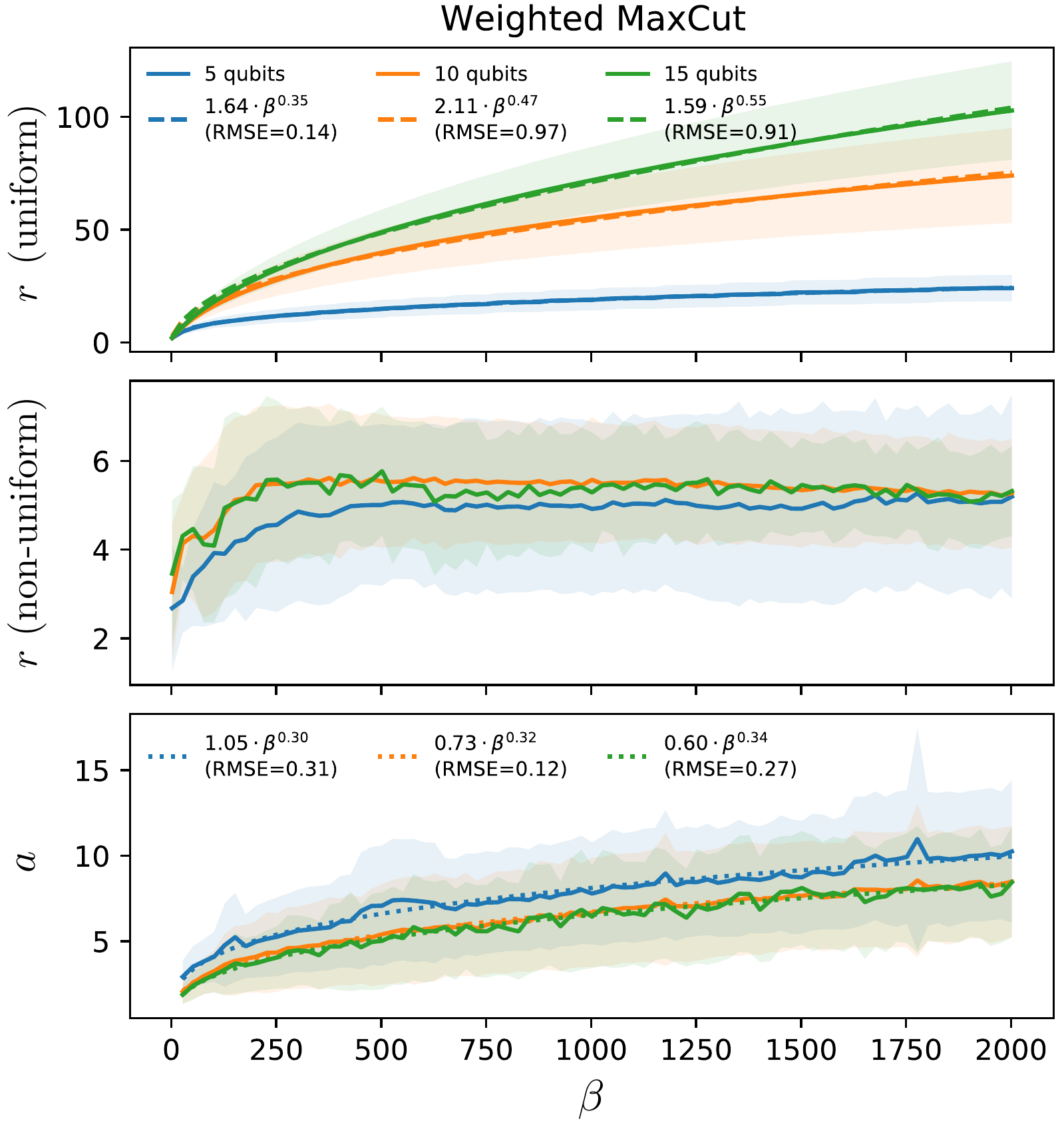}
\caption{{\bf Optimal fragmentation schedules for Primitive 1 versus inverse temperature}. System sizes are $N=5$ (blue), $N=10$ (orange), and $N=15$ (green). Solid curves represent the means over 1000 random weighted-MaxCut Hamiltonians, whereas (the thicknesses of) shaded curves are the standard deviations. The tolerated error is $\varepsilon=10^{-3}$. Qualitatively identical behaviors are observed for all $N$ between 2 and 15 as well as for $\varepsilon=10^{-2}$ and $\varepsilon=10^{-1}$; and the same holds for the other Hamiltonian classes in Fig.\ \ref{fig:Beta_crit}. The upper panel shows the optimal number of fragments $r$ for uniform schedules $S_{r,1}$. The central and lower panels respectively show the optimal $r$ and $a$ for non-uniform schedules $S_{r,a}$. The dashed and dotted curves in the upper and lower panels respectively represent fits over the ans\"{a}tze $r(\beta)=A\, \beta^{\eta}$ and $a(\beta)=A\, \beta^{\eta}$, with $A$ and $\eta\in\mathbb{R}$. The fit results are shown in the  insets. {Remarkably, for non-uniform schedules, the observed scaling for $r$ is constant not only with $\beta$ but also with $N$.}} 
\label{fig:Opt_schedules}
\end{figure}

{The overall complexities and circuit depths obtained (together with those for uniform schedules, i.e. with fixed $a=1$) are shown in Fig.\ \ref{fig:Frag_QITE_P1} for $P_1$; and the scalings with $N$ of $\beta_{\rm c}$ in Fig.\ \ref{fig:Beta_crit}. Similar scalings for the critical inverse temperature are obtained for $P_2$ but with somewhat higher constant pre-factors \cite[Sec. XI]{TLS2022Sup}, 
which is expectable due to the non-unit sub-normalization factors $\alpha_k$ in $n_l$. Summarizing, our numerical experiments support the following observation.

\begin{obs} (Gibbs-state sampling with fragmented QITE). Let the primitives be of fixed type, either $P_1$ or $P_2$. Then, for every $H$ and $\varepsilon>0$ studied, there exists $\beta_{\rm c}=\mathcal{O}\big(2^{N/2}\big)$ such that, for all $\beta\geq\beta_{\rm c}$, there is a schedule $S_r=S_r(\beta)$ that makes $Q_{S_r}({\beta/2},\varepsilon)\leq Q_{\rm coh}({\beta/2},\varepsilon,\alpha)$. Moreover, the maximal circuit depth required by fragmentation is asymptotically the same as that of probabilistic QITE. 
\label{obs:fragmented}
\end{obs}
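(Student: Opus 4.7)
The plan is to support Observation \ref{obs:fragmented} by a thorough numerical study, since the observation is explicitly empirical: for each fixed choice of primitive ($P_1$ or $P_2$), Hamiltonian class, system size $N$, and tolerated error $\varepsilon$, I would sample many random instances $H$ and, for each, compute the three overall query complexities $Q_{\rm prob}(\beta/2,\varepsilon,\alpha)$, $Q_{\rm coh}(\beta/2,\varepsilon,\alpha)$, and $Q_{S_r}(\beta/2,\varepsilon)$ on a grid of $\beta$ values. The first two are closed-form expressions obtained by inserting Eqs.\ \eqref{eq:BEquery}--\eqref{eq:real_time_or} into Eq.\ \eqref{eq:Qcomp} (with $\mu_{\rm prob}=1$, $\mu_{\rm coh}=1/2$), while the third requires evaluating Eq.\ \eqref{eq:F_overall_query} with the errors per fragment set by Eq.\ \eqref{eq:F_errors}. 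All these in turn require $p_\Psi(\beta_l)=Z_{\beta_l}/Z_0$ for each intermediate $\beta_l$, which I would compute by exact diagonalization of $H$ (limiting the study to $N\leq 15$). For $P_2$ I would additionally set $\gamma=\gamma_\kappa(\Delta\beta_l)$ via Eq.\ \eqref{eq:gamma_opt_approx} at each fragment to fix $\alpha_l$.

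Next, for each instance and each $\beta$ I would minimize $Q_{S_r}$ over the two-parameter schedule family $S_{r,a}$ of Eq.\ \eqref{eq:schedule_ansatz}: sweep $r\in\{1,2,\dots,r_{\max}\}$ for a modest $r_{\max}$ (the numerics of Fig.\ \ref{fig:Opt_schedules} suggest $r\lesssim 6$ is enough), and for each $r$ run BFGS over $a>1$ to find the minimizer of $Q_{S_{r,a}}(\beta/2,\varepsilon)$, keeping as well the uniform case $a=1$ for comparison. From this optimization, $\beta_{\rm c}$ is defined, for each instance, as the smallest $\beta$ on the grid for which the minimized $Q_{S_{r,a}}\leq Q_{\rm coh}$; averaging over the $1000$ random draws per class yields the mean $\beta_{\rm c}(N)$. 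Fitting this data against the Ansatz $\beta_{\rm c}(N)=A\,2^{\eta N}+B$ and reading off $\eta\leq 1/2$ (as done in Fig.\ \ref{fig:Beta_crit}) would then establish the $\mathcal{O}(2^{N/2})$ scaling for both primitives and all four Hamiltonian classes.

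For the second part of the observation, concerning circuit depth, I would compute per run the quantity $\max_{l\in[r]} q(\Delta\beta_l,\varepsilon'_l,\alpha_l)$ (the query depth of the deepest fragment) and compare it, across the grid of $\beta$, to the probabilistic depth $q(\beta,\varepsilon',\alpha)$ and the coherent depth $\mathcal{O}(\sqrt{1/p_\Psi(\beta,\alpha)})\,q(\beta,\varepsilon',\alpha)$. The expectation, motivated by the sub-additive scaling of Eq.\ \eqref{eq:BEquery} (and the sub-logarithmic dependence of both primitives on $\varepsilon'$, which absorbs the $\varepsilon'_l/\varepsilon'$ inflation mandated by Eq.\ \eqref{eq:F_errors}), is that the maximal query depth of fragmentation stays within a constant factor of the probabilistic one, independently of $\beta$; this behavior is exactly what the lower panels of Fig.\ \ref{fig:Frag_QITE_P1} would display.

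The main obstacle I anticipate is not individual-instance accuracy but rather making the empirical claim robust. Specifically, (i) the schedule-optimization landscape $a\mapsto Q_{S_{r,a}}$ can be non-convex and BFGS may get stuck in local minima, so I would need multi-start initialization and cross-validation against brute-force sweeps at a few representative instances; (ii) defining $\beta_{\rm c}$ as the first crossing of two noisy curves makes it sensitive to the $\beta$-grid resolution and to tail statistics across the $1000$ instances, so the reported $\beta_{\rm c}(N)$ must be stable under refining the grid and enlarging the sample; and (iii) for $P_2$ the sub-normalizations $\alpha_k<1$ enter $n_l$ exponentially in $r$, so the optimizer tends to favor small $r$ and the advantage regime is narrower, requiring a careful joint optimization of $r$, $a$, and the per-fragment $\gamma_\kappa(\Delta\beta_l)$. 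Once these caveats are controlled, the monotone behaviors of $Q_{S_r}$ vs.\ $Q_{\rm coh}$ observed in Fig.\ \ref{fig:Frag_QITE_P1} and the $N$-scalings of Fig.\ \ref{fig:Beta_crit} directly justify the observation.
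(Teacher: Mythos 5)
Your proposal is essentially the paper's own methodology: random instances per Hamiltonian class with partition functions from exact diagonalization, overall complexities assembled from the analytical query-complexity bounds via Eqs.\ \eqref{eq:Qcomp} and \eqref{eq:F_overall_query} with the per-fragment errors of Eq.\ \eqref{eq:F_errors}, schedule optimization over the family $S_{r,a}$ of Eq.\ \eqref{eq:schedule_ansatz} by sweeping $r$ and running BFGS over $a$, $\beta_{\rm c}$ defined as the first crossing of $Q_{S_{r,a}}$ with $Q_{\rm coh}$, and an exponential fit $\beta_{\rm c}(N)=A\,2^{\eta N}+B$.

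One correction on the depth claim: the relevant query depth of the fragmented algorithm is $\sum_{l=1}^{r} q(\Delta\beta_l,\varepsilon'_l,\alpha_l)$, not $\max_{l\in[r]} q(\Delta\beta_l,\varepsilon'_l,\alpha_l)$. The successful trial carries the system register sequentially through all $r$ fragments (with mid-circuit ancilla measurements), so its circuit depth is the \emph{sum} of the fragments' depths; taking the maximum would only be appropriate if each fragment acted on a fresh register, and it makes the comparison with the probabilistic depth $q(\beta/2,\varepsilon',\alpha)$ artificially easy. The observation's second part is nontrivial precisely because neither $q_1$ nor $q_2$ is additive in $\Delta\beta_l$ (the $\varepsilon'_l$ depend nonlinearly on the schedule), so the near-coincidence of $\sum_l q_l$ with the single-shot $q$ has to be checked numerically rather than assumed.
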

Apart from the notable fact that fragmentation outperforms coherent QITE for both primitives, it is also remarkable that, long before $Q_{S_r}$ reaches $Q_{\rm coh}$, at $\beta_{\rm c}$, $Q_{S_r}$ is already much smaller than $Q_{\rm prob}$. Crucially, these advantages of fragmented QITE come at no cost in circuit depth, since the query depth of fragmentation, $\sum_{l=1}^{r}\, q(\Delta\beta_l,\varepsilon^{\prime}_l, \alpha_l)$, is observed to almost coincide with that of repeat until success, $q({\beta/2},\varepsilon^{\prime}, \alpha)$, specially for high $\beta$. Note that the latter needs not be the case: strictly speaking, neither $q_1$ nor $q_2$ are additive in $\Delta\beta_l$ due to the non-linear dependance of $\varepsilon^{\prime}_l$ on $\Delta\beta_l$.}

Of course, the optimal schedules as functions of $\beta$ are a priori unknown. Nevertheless, the trends we observe for the schedule proposals in Eq. \eqref{eq:schedule_ansatz} are so compelling that they provide a sound basis for educated guesses in general: 
\begin{obs} (Optimal schedules). For uniform and non-uniform schedules $S_{r,1}$ and $S_{r,a}$, given by Eq.\ \eqref{eq:schedule_ansatz}, the overall  complexity for $P_1$ is respectively minimised by $r=\mathcal{O}(\beta^{1/2})$ and $(r,a)=\big(\mathcal{O}(1),\mathcal{O}(\beta^{1/3})\big)$ (see Fig.\ \ref{fig:Opt_schedules}); whereas for $P_2$ by $r=2$ and $(r,a)=\big(\mathcal{O}(1),\mathcal{O}(\beta^{1/6})\big)$ \cite[Sec. XI]{TLS2022Sup}. 
 \label{obs:opt_schedule}
\end{obs}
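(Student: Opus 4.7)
The plan is to establish Observation \ref{obs:opt_schedule} primarily through the numerical optimisation procedure of Sec.\ \ref{sec:GS_sampling}, complemented by a qualitative scaling rationale that accounts for why the observed exponents arise from the structure of $Q_{S_{r,a}}$.

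The first concrete step is to make the optimisation problem fully explicit. Substituting $\Delta\beta_l = [(l/r)^a-((l-1)/r)^a](\beta/2)$ and $\beta_l=(l/r)^a(\beta/2)$ into Eq.\ \eqref{eq:F_overall_query}, using the ceiling of Eq.\ \eqref{eq:BEquery} or Eq.\ \eqref{eq:real_time_or} for the primitive's query complexity, and closing the recursion via Eq.\ \eqref{eq:F_errors} for $\varepsilon'_l$ defines $Q_{S_{r,a}}(\beta/2,\varepsilon)$ as a function of $(r,a)$ once $H$, $\ket{\Psi}$, $\varepsilon$ and the primitive are fixed. With $\ket{\Psi}\to\varrho_0$ and $p_{\Psi}(\beta_l)=Z_{2\beta_l}/Z_0$ obtained by exact diagonalisation, I would sweep $r\in\{1,2,\dots\}$ and minimise over $a\geq 1$ with BFGS at each $r$. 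Averaging the optima over $10^3$ random instances per Hamiltonian class and fitting the power-law ansatz $A\beta^{\eta}$ yields the reported exponents; they are stable across all $N\in\{2,\dots,15\}$ and $\varepsilon\in\{10^{-1},10^{-2},10^{-3}\}$, as witnessed by low root-mean-square deviations in the insets of Fig.\ \ref{fig:Opt_schedules}.

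The observed exponents are rationalised qualitatively as follows. For $P_1$ with uniform schedule, increasing $r$ shrinks each $\Delta\beta_l=\beta/(2r)$ and so suppresses the linear term $\Theta(\Delta\beta_l)$ of Eq.\ \eqref{eq:BEquery}, but it simultaneously enlarges $\sum_l n_l$ and tightens every $\varepsilon'_l$ via Eq.\ \eqref{eq:F_errors}, inflating the sub-logarithmic term. The optimum sits where these marginal effects balance, yielding an $r_{\rm opt}$ that grows with $\beta$; the fit $r_{\rm opt}\sim\sqrt{\beta}$ is consistent with a square-root balance between the two terms. For $P_2$, the per-fragment cost is instead multiplicative in $\Delta\beta_l$ and $\ln(1/\varepsilon'_l)$ (Eq.\ \eqref{eq:real_time_or}), and the sub-normalisation $\alpha_l<1$ introduces an extra exponential factor $\prod_k\alpha_k^{-2}$ in $n_l$; both remove the benefit of a growing $r$, pinning $r_{\rm opt}$ at the smallest non-trivial value, empirically $2$.

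For the non-uniform case, $a>1$ concentrates most of the inverse-temperature budget in the tail so that the most-often-executed first fragment becomes short and cheap. The condition $\partial_a Q_{S_{r,a}}=0$ with $r$ held at $O(1)$ balances the growth of the tail-fragment cost against the gain from shortening the first fragments, and the resulting $\beta$-dependence is controlled by how $q$ scales with $\varepsilon'$: sub-logarithmic in $q_1$ and multiplicative in $q_2$, giving the observed $a_{\rm opt}\sim\beta^{1/3}$ and $a_{\rm opt}\sim\beta^{1/6}$ respectively. The main obstacle to rigour is the coupling between fragments induced by Eq.\ \eqref{eq:F_errors}: each $\varepsilon'_l$ depends on $p_{\Psi}(\beta_{l-1})$ and hence on the entire schedule and on $H$, so the stationarity equations do not admit a closed-form solution without concrete assumptions on $p_{\Psi}(\beta)$. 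The non-interacting case (App.\ \ref{app:non_interacting}), where $p_{\Psi}$ has a closed form, allows these balances to be carried out explicitly; for the interacting classes we rely on the fits of Fig.\ \ref{fig:Opt_schedules}.
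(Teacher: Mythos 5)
Your proposal is correct and follows essentially the same route as the paper: the observation is established empirically by sweeping $r$, optimising $a$ with BFGS for each instance, averaging over the random Hamiltonian ensembles, and fitting the power-law Ansatz $A\,\beta^{\eta}$ to the resulting optima (Fig.\ \ref{fig:Opt_schedules} and App.\ \ref{app:QuITEprimitive2}), with the same post-hoc qualitative balancing of the linear-in-$\Delta\beta_l$ versus $\ln(1/\varepsilon'_l)$ contributions and the role of $\alpha_l<1$ for $P_2$. Your added heuristics for the specific exponents go slightly beyond what the paper argues, but you correctly flag that they are not rigorous, matching the paper's own status of this result as an Observation rather than a Theorem.
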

As expected from the exponential dependence on $r$ in Eq.\ \eqref{eq:F_errors}, a slow growth of $r$ with $\beta$ is observed for each $N$ to minimize $Q_{S_r}({\beta/2},\varepsilon)$. This is indeed seen for $P_1$ with uniform schedules (Fig.\ \ref{fig:Opt_schedules}, upper panel). On the other hand, for $P_2$ with uniform schedules, $r=2$ is observed \cite[Sec. XI]{TLS2022Sup} 
to minimize $Q_{S_r}$ but the resulting complexity does not reach $Q_{\rm coh}$ over the scanned domain ($0\leq \beta\leq 10000$).
However, for both $P_1$ (Fig.\ \ref{fig:Opt_schedules}, central panel) and $P_2$ with non-uniform schedules (where fragmentation does outperform amplitude amplification), the observed scaling of $r$ is constant with both $\beta$ and $N$, remarkably. 
In turn, that $a$ grows with $\beta$ implies that each $\Delta\beta_l$ decreases relative to $\Delta\beta_{l+1}$ as $\beta$ grows. This is consistent with the intuition from Sec.\ \ref{sec:master_alg} that each $\Delta\beta_l$ should be smaller than $\Delta\beta_{l+1}$.  {In addition, we consistently observe that, for the obtained optimal schedules, $\Delta\beta_1$ is only a tinny fraction (around 0.1\% to 2\%) of $8\ln(4/\varepsilon_1')$ \cite[Sec. IX]{TLS2022Sup}.} In fact, for both primitives, inserting the obtained $a(\beta)$ into Eq.\ \eqref{eq:schedule_ansatz}, one sees that all $\Delta\beta_l$'s (except the last one, $\Delta\beta_r$) also decrease in absolute terms as $\beta$ grows. Yet, that $a$ grows slowly with $\beta$ guarantees that the $\Delta\beta_l$'s do not decrease too much. More precisely,  comparing with Eq.\ \eqref{eq:F_errors}, we see that $\Delta\beta_l>\varepsilon'_l$ for all $l\in[r]$. This is an important sanity check, because if $\Delta\beta_l<\varepsilon'_l$, the identity operator would readily provide an $(\varepsilon'_l,1)$-block-encoding of $F_{\Delta\beta_l}(H)$, hence rendering the obtained scaling for $a(\beta)$ meaningless.
\section{Discussion}
\label{sec:conclusions}
We have presented two QITE primitives {and a master QITE algorithm.} The first primitive {is designed for block-encoding Hamiltonian oracles and} has query complexity (number of oracle calls) sub-additive in the inverse-temperature $\beta$ and $\ln(\varepsilon^{-1})$, with $\varepsilon$ the error. This scaling is better than all previously-known bounds \cite{chowdhury2017,gilyen2019} for $\beta\lesssim 8\ln(4\,\varepsilon^{-1})$  and becomes provably optimal for $\beta\ll\ln(\varepsilon^{-1})$. Optimality is proven by showing {saturation of} a universal cooling-speed limit that is an imaginary-time counterpart of the celebrated no fast-forwarding theorem for real-time simulations \cite{Berry2007,Berry2014,Berry2015a}. {It is an open question what the optimal scaling is away from the saturation regime. Coincidentally, the first steps of our master algorithm operate precisely in that regime}. On the other hand, the second primitive is designed for a simplified model of real-time evolution oracles involving a single time. It{s complexity is} multiplicative in $\beta$ and $\ln(\varepsilon^{-1})$, but it requires a single ancillary qubit throughout and its oracle is {experimentally-friendlier than in previous QITE primitives. Interestingly, preliminary numerical analysis \cite{Lucas2022} suggests that the asymptotic scaling with $\beta$ of both primitives' complexities could actually be significantly better than in the analytical bounds above, for $P_1$ even reaching levels as good as $q_1(\beta,\varepsilon')=\mathcal{O}\big(\sqrt{\beta\,\ln(1/\varepsilon')}\big)$.}

Our primitives are based on two technical contributions to quantum signal processing (QSP) \cite{low2017optimal, Low2019hamiltonian, gilyen2019,vanApeldoorn2020quantumsdpsolvers} relevant on their own. The first one is a bound on the approximation error of Hermitian-operator functions by their truncated Chebyshev series, for any analytical real function. The second one is a novel, Fourier-based QSP variant for real-time evolution oracles superior to previous ones \cite{vanApeldoorn2020quantumsdpsolvers} in that it requires a single real time (and therefore a single ancilla), instead of multiple ones. Moreover, it is also experimentally friendly in that it requires no qubitization \cite{Low2019hamiltonian}.

Primitive technicalities aside, the main conceptual contribution of this work is the master QITE algorithm, which is conceptually simple, yet surprisingly powerful. It is based on breaking the evolution into small-$\beta$ fragments. This gives a large reduction in wasted queries and circuit depth, yielding an overall runtime competitive with (and for high $\beta$ even better than) that of coherent approaches based on quantum amplitude amplification (QAA). This is remarkable since the latter requires in general $N$ extra ancillary qubits and circuits $\mathcal{O}\big(1/\sqrt{p_{\Psi}(\beta, \alpha)}\big)$ times deeper than the former. To put this in perspective, it is illustrative to compare with quantum amplitude estimation (QAE). In its standard form, QAE has similar hardware requirements as QAA \cite{Brassard2002}. However, recently, interesting algorithms have appeared \cite{Low_depth_QAE20,Exp_low_depthQAE21} that perform partial QAE with circuit  depths that can interpolate between the probabilistic and coherent cases. In contrast, here, we beat full QAA using circuit depths for most runs much lower than in the bare probabilistic approach. 

{That fragmented QITE outperforms coherent QITE is proven rigorously for Primitive 1 and also supported by exhaustive numerical evidence for both primitives. Namely, our numerical experiments address random instances of Ising, transverse-field Ising, and Heisenberg-like Hamiltonians encoding computationally hard problems relevant for combinatorial optimisations, generative machine learning, and statistical physics, e.g. We emphasize that our analysis of is based on the analytical upper bounds on the query complexity we obtained, instead of the complexities themselves. The corresponding analysis for the actual (numerically obtained) query complexities requires re-optimizing the fragmentation schedules. Preliminary observations \cite{Lucas2022} in that direction are again promising, indicating that the actual overall complexities may be orders of magnitude lower than in Fig. \ref{fig:Frag_QITE_P1}, e.g.
In any case, qualitatively similar interplays between fragmentation and QAA are expected even for other types of primitives (beyond QITE) whose complexity and post-selection probability have similar scalings. 
All these exciting prospects are being explored for future work.}

Our findings open a new research direction towards {mid}-term high-precision quantum algorithms. 
In particular, the presented primitives, cooling-speed limit, QSP methods, and master algorithm constitute a {powerful} toolbox for quantum signal processors {specially relevant for the transition from NISQ to early prototypes of fault-tolerant} hardware.


\section{Methods}

\subsection{Preliminaries}
\label{sec:prelim}
We consider an $N$-qubit system $\mathcal{S}$, of Hilbert space $\mathbb H_\mathcal{S}$. QITE with respect to a Hamiltonian $H$ on $\mathbb H_\mathcal{S}$ and over an imaginary time $-i\,\beta$ is represented by the non-unitary operator $e^{-\beta H}$. 
This can be simulated via post-selection with a unitary operator $U$ that encodes $e^{-\beta H}$ in one of its matrix blocks {\cite{PW09, chowdhury2017, vanApeldoorn2020quantumsdpsolvers, gilyen2019, CSS21}}. 
We denote by $\mathbb H_\mathcal{A}$ the Hilbert space of an ancillary register $\mathcal{A}$, by $\mathbb H_\mathcal{SA}\coloneqq \mathbb H_\mathcal{S}\otimes\mathbb H_\mathcal{A}$ the joint Hilbert space of $\mathcal{S}$ and $\mathcal{A}$, and by $\left\Vert A \right\Vert$ the spectral norm of an operator $A$.
The following formalizes the encoding.

\begin{dfn} (Block encodings). For sub-normalization $0\leq\alpha\leq 1$ and tolerated error $\varepsilon> 0$, a unitary operator $U_A$ on $\mathbb H_\mathcal{SA}$
is an $(\varepsilon,\alpha)$-block-encoding of a linear operator $A$ on $\mathbb H_\mathcal{S}$ if $\left\Vert \alpha\,A-\bra{0}\,U_A\,\ket{0}\right\Vert \leq\,\varepsilon$,
for some $\ket{0}\in\mathbb{H}_\mathcal{A}$. 
For $\varepsilon=0$ and $(\varepsilon,\alpha)=(0,1)$ we use the short-hand terms perfect $\alpha$-block-encoding  and perfect block-encoding, respectively. 
\label{def:block_enc}
\end{dfn}
\noindent E.g., if $U_A$ is a perfect $\alpha$-block-encoding of $A$, measuring $\ket{0}\in\mathbb{H}_\mathcal{A}$ on $U_A\ket{\Psi}\ket{0}\in\mathbb H_\mathcal{SA}$, for any $\ket{\Psi}\in\mathbb H_\mathcal{S}$, leaves $\mathcal{S}$ in the state $\frac{A\ket{\Psi}}{\left\Vert A\ket{\Psi}\right\Vert}$. The probability of that outcome is $\alpha^2\left\Vert A\ket{\Psi}\right\Vert ^{2}$. Note that, since $\left\Vert U_A \right\Vert=1$,  a perfect $\alpha$-block-encoding is possible only if $\alpha\left\Vert A\right\Vert \leq1$. Hence, $\alpha$ allows one to encode matrices even if their norm is greater than $1$. Typically, however, one wishes $\alpha$ as high as possible, to avoid unnecessary reductions in post-selection probability.

Our algorithms admit two types of oracle as input. The first one is based on perfect block-encodings of $H$ and therefore requires $\left\Vert H\right\Vert \leq1$. If $\left\Vert H\right\Vert >1$, however, the required normalisation can be enforced by a simple spectrum rescaling. More precisely, for $\lambda_{-}$ and $\lambda_{+}$ arbitrary lower and upper bounds, respectively, to the minimal and maximal eigenvalues of $H$, $\lambda_{\rm min}$ and $\lambda_{\rm max}$, the rescaled Hamiltonian $H^{\prime}\coloneqq\frac{H-\bar{\lambda}\mathds{1}}{\Delta\lambda}$ fulfils $\left\Vert H^{\prime}\right\Vert\leq1$ by construction, with the short-hand notation $\bar{\lambda}\coloneqq\frac{\lambda_{+}+\lambda_{-}}{2}$ and $\Delta\lambda\coloneqq\frac{\lambda_{+}-\lambda_{-}}{2}$. Then, by correspondingly rescaling the inverse temperature as $\beta^{\prime}\coloneqq\Delta\lambda\,\beta$, one obtains the propagator $e^{-\beta^{\prime} H^{\prime}}$, which induces the same physical transformation as $e^{-\beta H}$. Hence, from now on, without loss of generality we assume throughout that $\left\Vert H\right\Vert\leq1$, i.e. that $-1\leq\lambda_{\rm min}\leq\lambda_{\rm max}\leq1$.

We are now in a good position to define our first oracle, $O_1$, {which is the basis of our first primitive, $P_1$. We denote by $\mathcal{A}_{1}$ the entire ancillary register needed for $P_1$ and by $\mathcal{A}_{O_1}\subset\mathcal{A}_{1}$ the specific ancillary qubits required to implement $O_1$}. 
\begin{dfn}
\label{def:block_enc_or} (Block-encoding Hamiltonian oracles). We refer as a block-encoding oracle for a Hamiltonian $H$ on $\mathbb H_\mathcal{S}$ to a controlled unitary operator $O_1$ on $\mathbb H_{\mathcal{SA}_{O_1}}$ of the form $ {O_1=U_H\otimes\ket{0}\bra{0}+\mathds{1}\otimes\ket{1}\bra{1}}$, where $\mathds{1}$ is the identity operator on $\mathbb H_{\mathcal{S}}$, $\{\ket{0},\ket{1}\}$ a computational basis for the control qubit, and $U_H$ a perfect block encoding of $H$.
\end{dfn}

\noindent This is a powerful oracle paradigm used both in QITE {\cite{chowdhury2017, vanApeldoorn2020quantumsdpsolvers, gilyen2019,CSS21}} and real-time evolution \cite{BCCKS15,low2017optimal, Low2019hamiltonian, gilyen2019}. It encompasses, e.g., Hamiltonians given by linear combinations of unitaries, $d$-sparse Hamiltonians (i.e. with at most $d$ non-null matrix entries per row), and Hamiltonians given by states \cite{Low2019hamiltonian}. Its complexity depends on $H$, but highly efficient implementations are known. E.g., for $H$ a linear combination of $m$ unitaries, each one requiring at most $c$ two-qubit gates, {$O_1$ can be implemented with $|\mathcal{A}_{O_1}|=\mathcal{O}(\log_{2} m)$ ancillary qubits and gate complexity (i.e. total number of two-qubit gates) $g_{O_1}=\mathcal{O}\big(m(c+\log_{2} m)\big)$} \cite{BCCKS15,Low2019hamiltonian}.  

The second oracle model that we consider encodes $H$ through the real-time 
unitary evolution it generates. 
\begin{dfn}
\label{def:real_t_or} (Real-time evolution Hamiltonian oracle). We refer as a real-time evolution oracle for a Hamiltonian $H$ on $\mathbb H_\mathcal{S}$ at a time $t\in\mathbb{R}$ to a controlled-$e^{-itH}$ gate  {$O_2=\mathds{1}\otimes\ket{0}\bra{0}+e^{-itH}\otimes\ket{1}\bra{1}$}.
\end{dfn}
\noindent This is a simplified version of the models of \cite{PW09, vanApeldoorn2020quantumsdpsolvers}, e.g. There, controlled real-time evolutions at multiple times are required, thus involving multiple ancillas. In contrast, $O_2$ involves a single real time, so the ancillary register $\mathcal{A}_{O_2}$ consists of $|\mathcal{A}_{O_2}|=1$ single qubit (the control). In fact, we show below that no other ancilla is needed for our second primitive, $P_2$, i.e. $\mathcal{A}_{2}=\mathcal{A}_{O_2}$. This is advantageous for near-term implementations. {There, one may for instance apply product formulae \cite{campbell2019, COS19} to implement $O_2$ with gate complexities $g_{O_2}$ that, for intermediate-scale systems, can be considerably smaller than for $O_1$. Furthermore, this oracle is also relevant to hybrid analogue-digital platforms, for which QSP schemes have already been studied \cite{HamiltonianQSP21}}.

QITE algorithms based on post-selection rely on a unitary quantum circuit 
to simulate a block encoding of the QITE propagator. We refer to such circuits as QITE primitives. 
\begin{dfn}\label{def:primitives} (QITE primitives). Let $\beta\geq0$, $\varepsilon^{\prime}\geq0$, and $\alpha\leq 1$. A $(\beta,\varepsilon^{\prime},\alpha)$-QITE-primitive of query complexity $q(\beta,\varepsilon^{\prime},\alpha)$ is a circuit $P$, with $q(\beta,\varepsilon^{\prime},\alpha)$ calls to an oracle $O$ for $H$ or its inverse $O^{\dagger}$, that generates an $(\varepsilon^{\prime},\alpha)$-block-encoding $U_{F_{\beta}(H)}$ of $F_{\beta}(H)\coloneqq e^{-\beta (H-\lambda_{\rm min})}$, for all $H$. 
\end{dfn}
\noindent 
Note that $P$ is Hamiltonian agnostic, i.e. it admits any $H$ provided it is properly encoded in the corresponding oracle.
The factor $e^{-\beta\lambda_{\rm min}}$ implies that $\|F_{\beta}(H)\|=1$, thus maximizing the post-selection probability. However, if $\lambda_{\rm min}$ is unknown, one can replace it by a suitable lower bound $\lambda_{-}\geq -1$
. This  introduces only a constant sub-normalisation.
In turn, the query complexity is the gold-standard figure of merit for efficiency of oracle-based algorithms. It quantifies the runtime of $P$ relative to that of an oracle query. In fact, $P$ is time-efficient if its query complexity and {gate complexity per query $g_P$ are both} in $\mathcal{O}\big({\rm poly}(N, \beta,1/\varepsilon^{\prime},\alpha)\big)$.

Importantly, normalisation causes the post-selection probability $p_{\Psi}(\beta,\varepsilon^{\prime}, \alpha)$ of $P$ (on an input state $\ket{\Psi}$) to propagate onto the error $\varepsilon$ in the output state, making the latter in general greater than $\varepsilon^{\prime}$. The exact dependence of $\varepsilon$ on $\varepsilon^{\prime}$ is dictated by $p_{\Psi}(\beta,\varepsilon^{\prime}, \alpha)$. However, if $\varepsilon^{\prime}\leq\,\varepsilon\, \sqrt{p_{\Psi}(\beta, \alpha)}/2$, with $p_{\Psi}(\beta, \alpha)\coloneqq p_{\Psi}(\beta,0, \alpha)=\alpha^2\, \left\Vert F_{\beta}(H)\ket{\Psi}\right\Vert^2$, the output-state error i{s  $\mathcal{O}(\varepsilon)$ \cite[Sec. II]{TLS2022Sup}, 
with $``\mathcal{O}(\cdot)"$ standing for ``asymptotically upper-bounded by".} In turn, the primitives must be incorporated into master algorithms which we formaly define below. 
\begin{dfn}\label{def:masters} (Master QITE algorithms). Given $\varepsilon\geq 0$, $\beta\geq0$, $\ket{\Psi}\in\mathbb H_\mathcal{S}$, and $(\beta^{\prime},\varepsilon^{\prime},\alpha^{\prime})$-QITE-primitives $P_{\beta^{\prime},\varepsilon^{\prime},\alpha^{\prime}}$ querying oracles for a Hamiltonian $H$, a $(\beta,\varepsilon)$-master-QITE-algorithm for $H$ on $\ket{\Psi}$ is a procedure that outputs the state $\frac{F_{\beta}(H)\ket{\Psi}}{\left\Vert F_{\beta}(H)\ket{\Psi}\right\Vert}$ up to trace-distance error $\varepsilon$ with unit probability. Its overall query complexity $Q(\beta,\varepsilon)$  is the sum over the query complexities of each $P_{\beta^{\prime},\varepsilon^{\prime},\alpha^{\prime}}$ applied.
 \end{dfn}

\subsection{Quantum signal processing  \label{sec:methods}}

Quantum signal processing (QSP) is a powerful method to obtain 
an $\varepsilon^\prime$-approximate block encoding of an operator function
$f(H)\coloneqq\sum_{\lambda}f(\lambda)\ket{\lambda}\bra{\lambda}$, where $\{\ket{\lambda}\in\mathbb H_\mathcal{S}\}$ are the eigenvectors and $\{\lambda\}$ the eigenvalues of a Hamiltonian $H$, from queries to an oracle for $H$ \cite{low2017optimal}. We note that QSP can also be extended to non-Hermitian operators \cite{gilyen2019}, but here we restrict to the Hermitian case for simplicity.
We present two QSP methods for general functions one for each
oracle model in Defs. 2 and  3.  
Our QITE primitives are then obtained by particularizing these methods to the case $f(H)=F_{\beta}(H)$, with $F_{\beta}(H)=e^{-\beta(H-\lambda_{\rm min})}$.
\subsubsection{Real-variable function design with single-qubit rotations}
\label{sec:Real_f_design}
We start by reviewing  how to approximate functions of one real variable with single-qubit pulses. 

\textbf{Single-qubit QSP method 1.} 
\label{subsec:Real_f_design1}
Consider the single qubit rotation
$R_1(\theta,\phi)\coloneqq e^{i\theta{X}}e^{i\phi{Z}}$, where ${X}$
and ${Z}$ are the first and third Pauli matrices, respectively,
and $\phi\in[0,2\pi]$. The angle $\theta\in[-\pi,\pi]$
is the signal to be processed and the rotation $e^{i\theta{X}}$
is called the iterate. One can show \cite{Low2016PRX} that, given $q\in \mathbb{N}_{\text{even}}$ and a sequence of angles ${\boldsymbol{\Phi}_1}=\big(\phi_{1},\cdots,\phi_{q+1}\big)\in\mathbb{R}^{q+1}$, the sequence of rotations $\mathcal{R}_1\left(\theta,{\boldsymbol{\Phi}_1}\right)\coloneqq e^{i\phi_{q+1}{Z}}\prod_{k=1}^{q/2}R_1(-\theta,\phi_{2k})R_1(\theta,\phi_{2k-1})$
 has matrix representation in the computational basis
\begin{equation}
\mathcal{R}_1\left(\theta,{\boldsymbol{\Phi}_1}\right)=\left(\begin{array}{cc}
B(\cos\theta) & i\,\sin\theta\,D(\cos\theta)\\
i\,\sin\theta\,D^*(\cos\theta) & B^{*}(\cos\theta)
\end{array}\right),\label{eq:poly}
\end{equation}
where  $B$ and $D$ are polynomials in $\cos\theta$ with complex coefficients
determined by $\boldsymbol{\Phi}_1$. 

For target real polynomials $\mathscr{B}(\cos\theta )$ and $\mathscr{D}(\cos\theta)$, we wish to find ${\boldsymbol{\Phi}_1}$ that generates $B(\cos\theta)$ and $D(\cos\theta)$ with $\mathscr{B}(\cos\theta )$ and $\mathscr{D}(\cos\theta)$ as either their real or imaginary parts, respectively.
This can be done iff they satisfy \cite[Sec. XII]{TLS2022Sup} 
\begin{equation}
\label{achievabilityCond1}
 \mathscr{B}^{2}(\cos\theta)+\sin^2\theta \,\mathscr{D}^{2}(\cos\theta)\leq1
\end{equation}
for all $\theta$, and have the form 
\begin{equation} \label{achievabilityCond2}
\begin{split}
  \mathscr{B}(\cos\theta)& =\sum_{k=0}^{q/2}b_k\cos(2k\theta)\\
  \sin\theta\,\mathscr{D}(\cos\theta)& =\sum_{k=1}^{q/2} d_{k}\sin\left(2k\theta\right),
  \end{split}
\end{equation}
with $b_k\in\mathbb{R}$ and $d_k\in\mathbb{R}$.
Alternatively, Eq.\ \eqref{achievabilityCond2} can also be expressed in terms of Chebyshev polynomials of first $T_{k}(\cos\theta)\coloneqq\cos(k\theta)$ and second $U_{k}(\cos\theta)\coloneqq\sin\left((k+1)\theta\right)/\sin\theta$ kinds.  This can be used to obtain either Chebyshev or Fourier series of target operator functions.
If the target expansion satisfies Eqs. \eqref{achievabilityCond1} and \eqref{achievabilityCond2}, the angles ${\boldsymbol{\Phi}_1}$ can be computed classically in time
$\mathcal{O}\left(\text{poly}(q)\right)$ \cite{Low2016PRX,Haah2019product,chao2020finding,dong2020efficient}.

\textbf{Single-qubit QSP method 2.} 
\label{subsec:Real_f_design2}
This method is inspired by a construction in Ref. \cite{PerezSalinas2021} and shown in detail in a companying paper \cite{TLS2022}. The fundamental gate is $R_2(x,\omega,\zeta,\eta,\varphi,\kappa)=e^{i\frac{\zeta+\eta}{2} Z}e^{-i\varphi Y}e^{i\frac{\zeta-\eta}{2} Z}e^{i\omega x Z}e^{-i\kappa Y}$, which has five adjustable parameters $\{\omega,\boldsymbol{\xi}\}\in\mathbb{R}^5$, where $\boldsymbol{\xi}\coloneqq\{\zeta,\eta,\varphi,\kappa\}$. Here, $x\in\mathbb{R}$ will play the role of the signal and $e^{i \omega x Z}$ that of the iterate. In Ref. \cite{PerezSalinas2021}, it was observed that the gate sequence $\mathcal{R}_2(x,\boldsymbol{\omega},\boldsymbol{\Phi}_2)\coloneqq\prod_{k=0}^{q}R_2(x,\omega_k,\boldsymbol{\xi}_k)$, with $\boldsymbol{\omega}\coloneqq\{\omega_0,\cdots, \omega_q\}\in\mathbb{R}^{q+1}$ and $\boldsymbol{\Phi}_2=\{\boldsymbol{\xi}_0,\cdots,\boldsymbol{\xi}_{q}\}\in\mathbb{R}^{4(q+1)}$, can encode certain finite Fourier series into its matrix components. In \cite{TLS2022}, not only it is formally proven that for any target series a unitary operator  can be built with it as one of its matrix elements but also we provide an explicit, efficient recipe for finding the adequate choice of pulses $\boldsymbol{\Phi}_2$. This is the content of the following lemma.
\begin{lem}\label{lem:qsp2}[Single-qubit Fourier series synthesis]
 Given $\tilde{g}_q(x)=\sum_{m=-q/2}^{q/2} c_m\, e^{im x}$, with $q\in\mathbb{N}$ even, there exist $\boldsymbol{\omega}$ and  $\boldsymbol{\Phi}_2$ such that $\bra{0}\,\mathcal{R}_2(x,\boldsymbol{\omega},\boldsymbol{\Phi}_2)\,\ket{0}=\tilde{g}_q(x)$ for all $|x|\leq \pi$ iff $|\tilde{g}_q(x)|\leq 1$ for all $|x|\leq\pi$. Moreover,  $\omega_0=0$ and $\omega_k=(-1)^k/2$, for all $1\leq k\leq q$, and $\boldsymbol{\Phi}_2$ can be calculated classically from $\{c_m\}_{m}$ in time $\mathcal{O}\left(\text{poly}(q)\right)$.
\end{lem}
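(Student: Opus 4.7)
The proof splits into the two directions of the iff. The forward direction (necessity of $|\tilde{g}_q(x)|\leq 1$) is immediate: $\mathcal{R}_2(x,\boldsymbol{\omega},\boldsymbol{\Phi}_2)$ is a product of unitary gates, so it is unitary for every real $x$, and every entry of a unitary has modulus at most $1$. The substance of the lemma lies in the reverse direction together with the algorithmic claim, so I would devote the proof to an explicit construction of $\boldsymbol{\Phi}_2$ from the data $\{c_m\}_m$.

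The first structural step is to justify the prescribed choice $\omega_0=0$ and $\omega_k=(-1)^k/2$ for $1\leq k\leq q$. Expanding $\mathcal{R}_2(x,\boldsymbol{\omega},\boldsymbol{\Phi}_2)$ in the computational basis and collecting phases shows that $\bra{0}\mathcal{R}_2\ket{0}$ is a linear combination of factors $e^{ix\sum_{k\geq 1} s_k (-1)^k/2}$ indexed by sign sequences $(s_k)_{k=1}^{q}\in\{-1,+1\}^{q}$. Since $q$ is even, each such exponent is an integer in $\{-q/2,\ldots,q/2\}$, so $\bra{0}\mathcal{R}_2\ket{0}$ is precisely a Laurent polynomial of the targeted form. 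With $\omega_0=0$, the first gate carries no $x$-dependence; its four real parameters in $\boldsymbol{\xi}_0$ supply the single complex degree of freedom needed in the degenerate case $q=0$, where $|c_0|\leq 1$ can be realized as the $(0,0)$ entry of an arbitrary $2\times 2$ unitary. This serves as the induction base.

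The main content is a recursive peeling scheme. Assuming the result for degree $q-2$, I would factor $\mathcal{R}_2$ into its two outermost gates times a remainder $\mathcal{R}_2'$, and pick $\boldsymbol{\xi}_{q-1},\boldsymbol{\xi}_q$ through closed-form expressions involving only the extremal Fourier coefficients $c_{\pm q/2}$---the only coefficients touched by frequency-maximizing sign sequences. Multiplying both sides by the inverses of the two outermost gates yields a reduced identity whose $(0,0)$ entry is a Laurent polynomial $\tilde{g}_{q-2}$ of degree $(q-2)/2$ with coefficients computable in $\mathcal{O}(q)$ arithmetic operations. The main obstacle I anticipate is the \emph{completion} step: certifying that $|\tilde{g}_{q-2}(x)|\leq 1$ is inherited from $|\tilde{g}_q(x)|\leq 1$, so that the induction hypothesis applies. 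I would handle this by tracking the entire $2\times 2$ reduced matrix (not only its $(0,0)$ entry), whose unitarity is automatically preserved under left-multiplication by inverses of unitaries, and then reading off the required bound on the reduced $(0,0)$ entry from this unitarity.

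The polynomial-time classical construction then follows by inspection. There are $\mathcal{O}(q)$ peeling iterations; each requires solving $\mathcal{O}(1)$ closed-form trigonometric equations to fix the current $\boldsymbol{\xi}_k$'s and performing an $\mathcal{O}(q)$-cost linear update of the remaining coefficient list. Summing these costs gives $\mathcal{O}(\text{poly}(q))$ total work, as claimed.
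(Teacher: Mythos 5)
The paper does not actually prove Lemma~\ref{lem:qsp2}: it defers the proof to the companion paper \cite{TLS2022}, so there is no in-paper argument to compare against line by line. The closest internal analogue is the treatment of QSP method 1 in Apps.~\ref{app:qsp} and \ref{app:QSPpulses}, which has exactly the two-part structure your sketch aims at: (i) an \emph{existence/completion} statement (there, the existence of complex polynomials $B,D$ completing the target real polynomials to a unitary, imported from Theorem~5 of Ref.~\citep{gilyen2019} and not re-derived), followed by (ii) a degree-lowering peeling recursion that extracts the pulses one layer at a time. Your necessity direction (unitarity bounds every matrix entry by $1$), your frequency bookkeeping showing that $\omega_0=0$, $\omega_k=(-1)^k/2$ yields exactly the integer frequencies $\{-q/2,\dots,q/2\}$ for even $q$, and your $q=0$ base case are all correct, and the layer-stripping induction with $\mathcal{O}(q)$ iterations of $\mathcal{O}(q)$ work is the right shape for the $\mathcal{O}(\mathrm{poly}(q))$ classical-runtime claim.

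The genuine gap is in what you call the completion step, and your proposed fix is circular. You say you will certify $|\tilde g_{q-2}(x)|\leq 1$ by ``tracking the entire $2\times2$ reduced matrix, whose unitarity is automatically preserved under left-multiplication by inverses of unitaries.'' That preservation argument only applies once you already possess a unitary-valued Laurent polynomial whose $(0,0)$ entry is $\tilde g_q$ — but constructing that object \emph{is} the nontrivial existence content of the lemma. You are given a single scalar Laurent polynomial with $|\tilde g_q|\leq1$; before any peeling can begin you must produce a complementary Laurent polynomial $h$ (and the corresponding off-diagonal entries) with $|\tilde g_q(x)|^2+|h(x)|^2=1$ for all $x$, which is a Fej\'er–Riesz-type factorization of the nonnegative trigonometric polynomial $1-|\tilde g_q|^2$ and requires its own (polynomial-time, root-finding-based) argument. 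This is precisely the step the paper isolates and outsources in the Chebyshev case (App.~\ref{app:qsp} defers ``existence of $B$ and $D$'' to \citep{gilyen2019} before running the recursion of App.~\ref{app:QSPpulses}). Without it, your induction also cannot justify the closed-form equations for $\boldsymbol{\xi}_{q-1},\boldsymbol{\xi}_q$: the solvability of those equations hinges on a modulus-matching condition between the top-frequency coefficients of $\tilde g_q$ and of its completion (the Fourier analogue of the condition $|b_{l/2}|=|d_{l/2-1}|$ used in App.~\ref{app:QSPpulses}), which again is only available after the completion has been constructed. Supplying that factorization step would close the argument; as written, the reverse implication of the iff is not established.
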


\subsubsection{Operator-function design from block-encoded oracles}
\label{sec:QSP_BE}
Here, we synthesize an ($\varepsilon'$,1)-block-encoding of $f(H)$ from queries to an oracle for $H$ as in Def.\,\ref{def:block_enc_or}. The algorithm can be seen as a variant of the single-ancilla method from Ref.\,\cite{Low2019hamiltonian} with slightly different pulses.
The basic idea is to design a circuit, $P_1$, that generates a perfect block-encoding $V_{\boldsymbol{\Phi}_1}$ of a target Chebyshev expansion $\tilde{f}_q(H)\coloneqq\sum^{q/2}_{k=0} b_k \,T_k(H)$ that $\varepsilon_\text{tr}$-approximates $f(H)$, for some $0\leq \varepsilon_\text{tr} \leq \varepsilon'$. This can be done by adjusting $\boldsymbol{\Phi}_1$ as in Sec.\ \ref{subsec:Real_f_design1}. Note that the achievability condition \eqref{achievabilityCond1} requires that $\|\tilde{f}_q(H)\|\leq 1$, but we only guarantee $\|\tilde{f}_q(H)\|\leq 1+\varepsilon_\text{tr}$. However, this can be easily accounted for introducing an inoffensive sub-normalization $\alpha=(1+\varepsilon_\text{tr})^{-1}$ (see, e.g., Lemma 14 in Ref. \cite{Low2019hamiltonian}), which we neglect here throughout. Choosing $\tilde{f}_q$ as the truncated Chebyshev series of $f$ with truncation error $\varepsilon_\text{tr} \leq \varepsilon'$, we obtain the desired block-encoding of $f(H)$. 
For $f$ analytical, the error fulfills \cite{elliot1987} 
\begin{equation}
\label{eq:bound_error}
\varepsilon_\text{tr}\leq\frac{\underset{\lambda\in[\lambda_{\text{min}},\lambda_{\text{max}}]}{\max}\left|f^{(q/2+1)}(\lambda)\right|}{2^{\frac{q}{2}}(q/2+1)!},
\end{equation}
with $f^{(q/2+1)}$ the  $(q/2+1)$-th derivative of $f$. This allows one to obtain the truncation order $q/2$ and Chebyshev coefficients $\boldsymbol{b}\coloneqq\{b_k\}_{0\leq k\leq q/2}$ \cite[Sec. XIII]{TLS2022Sup}. 
Then, from $\boldsymbol{b}$, one can calculate the required $\boldsymbol{\Phi}_1$ \cite[Sec. XIV]{TLS2022Sup}. 

Next, we explicitly show how to generate $V_{\boldsymbol{\Phi}_1}$. Using the short-hand notation $\ket{0_{\lambda}}\coloneqq\ket{\lambda}\ket{0}\in\mathbb{H}_{\mathcal{SA}_{O_1}}$ and Def.\,\ref{def:block_enc_or}, one writes  $O_1\ket{0_{\lambda}}=\lambda\ket{0_{\lambda}}+\sqrt{1-\lambda^{2}}\ket{0_{\lambda}^{\perp}}$ with $\braket{0_{\lambda}}{0_{\lambda}^{\perp}}=0$. This defines the 2-dimensional subspace $\mathbb{H}_\lambda\coloneqq\text{span}\{\ket{0_{\lambda}},\ket{0_{\lambda}^{\perp}}\}$. To exploit the single-qubit formalism from Sec.\ \ref{subsec:Real_f_design1}, one needs an iterate that acts as an $SU(2)$ rotation within each $\mathbb{H}_\lambda$. In general, $O_1$ itself is not appropriate for this due to leakage out of $\mathbb{H}_\lambda$ by repeated applications of $O_1$. However, there is a simple oracle transformation -- qubitization -- that maps $O_1$ into another block-encoding $O'_1$ of the same $H$ but with the desired property \cite{Low2019hamiltonian}. The transformed oracle reads \cite[Sec. XV]{TLS2022Sup} 
\begin{equation}
 O'_1 =\bigoplus_{\lambda}e^{-i\theta_\lambda {Y}_\lambda},
\end{equation}
with $\theta_\lambda{:}=\cos^{-1}(\lambda)$ and $Y_\lambda\coloneqq i(\ket{0_{\lambda}^{\perp}}\bra{0_{\lambda}} - \ket{0_{\lambda}}\bra{0_{\lambda}^{\perp}})$.

Although the qubit resemblance could be considered in a direct analogy to QSP
for a single qubit, it leads to a more strict class of achievable
functions than if we resort to one additional qubit (single-ancilla QSP) \cite{Low2019hamiltonian}. This extra
ancilla controls the action of the oracle $O'_1$ through the iterate 
\begin{equation}\label{eq:v0be}
V_{0}=\mathds{1}\otimes\ket{+}\bra{+}+O'_1\otimes\ket{-}\bra{-} 
\end{equation}
on $\mathbb{H}_{\mathcal{SA}}$, where $\ket{\pm}$ are the
eigenstates of the Pauli operator ${X}$  for the QSP qubit ancilla. Throughout this section, $\mathds{1}$ is the identity operator on $\mathbb{H}_{\mathcal{SA}_{O_1}}$ and $M$ denotes the single-qubit Hadamard gate.
Let us define the operators $V_{\phi}=V_{0}\left(\mathds{1}\otimes e^{i\phi{Z}}\right)$ and $\bar{V}_{\phi}=V_{0}^\dagger\left(\mathds{1}\otimes e^{i\phi{Z}}\right)$ for a given phase $\phi\in[0,2\pi]$, which play the role of $R(\theta,\phi)$ of the previous sub-section  with $\theta_\lambda$ playing the role of $\theta$ for each $\lambda$. These operators can be phase iterated to generate
\begin{equation}
\label{eq:Vvecphi}
 V_{{\boldsymbol{\Phi}_1}}\coloneqq W_{\text{out}}\left(
 \bar{V}_{\phi_{q}}{V}_{\phi_{q-1}}\cdots \bar{V}_{\phi_{2}}{V}_{\phi_{1}} \right)W_{\text{in}}
\end{equation}
on $\mathbb{H}_{\mathcal{SA}}$, with ancilla pre- and post-processing unitaries 
$W_{\text{in}}\coloneqq\mathds{1}\otimes M$
and
$W_{\text{out}}: = \mathds{1}\otimes \big(M \,e^{i\phi_{q+1}{Z}}\big)$,
respectively, with $M$ the single-qubit Hadamard matrix. The resulting circuit, $P_1$, is depicted in Figs. \ref{fig:generalf}.a) and \ref{fig:generalf}.b). 

The following pseudocode gives the entire procedure. 
\begin{center}
{\setlength{\fboxsep}{1pt}
\begin{minipage}[t]{0.9\columnwidth}
\centering
\begin{algorithm}[H]\label{alg:fBlockEncode}
\SetAlgoLined
\SetAlgorithmName{Algorithm}{Primitive}{Primitive}
\caption{ Operator-function design from block-encoded Hamiltonian oracles}
\SetKwInOut{Input}{input}\SetKwInOut{Output}{output}
\SetKwData{Even}{even}
\Input{analytical function $f:\text{Dom}\to$ $\text{Img}$, with $[\lambda_\text{min},\lambda_\text{max}]\subseteq\text{Dom}$ and $\text{Img}\subseteq[-1,1]$, error $\varepsilon'>0$, oracle $O_1$ for $H$, and its inverse $O^{\dagger}_1$.}
\Output{a unitary quantum circuit $P_{1}$.} 
\BlankLine
Obtain a truncation order $q/2$ s.t. $\varepsilon_\text{tr}\leq\varepsilon'$\;
calculate the Chebyshev coefficients $\boldsymbol{b}$\;
calculate the rotation angles $\boldsymbol{\Phi}_1$ \cite[Sec. XIV]{TLS2022Sup}\;
\Begin(construction of $P_{1}$:){
apply $W_{\text{in}}=\mathds{1}\otimes M$  
on $\mathcal{A}$\;
\For{$k=1 $ \KwTo $k=q$}{apply $\mathds{1}\otimes e^{i\phi_{k}Z}$\;
   \lIf{$k$ is odd}{apply $V_0$ from Eq.\ \eqref{eq:v0be}} 
   \lElse{apply $V_0^\dagger$}
}
apply $W_{\text{out}}=\mathds{1}\otimes \big(M \,e^{i\phi_{q+1}{Z}}\big)$ 
on $\mathcal{A}$\;}
\end{algorithm}
\end{minipage}
}
\end{center}
\begin{figure}[t!]
\centering
\includegraphics[width=0.6\columnwidth]{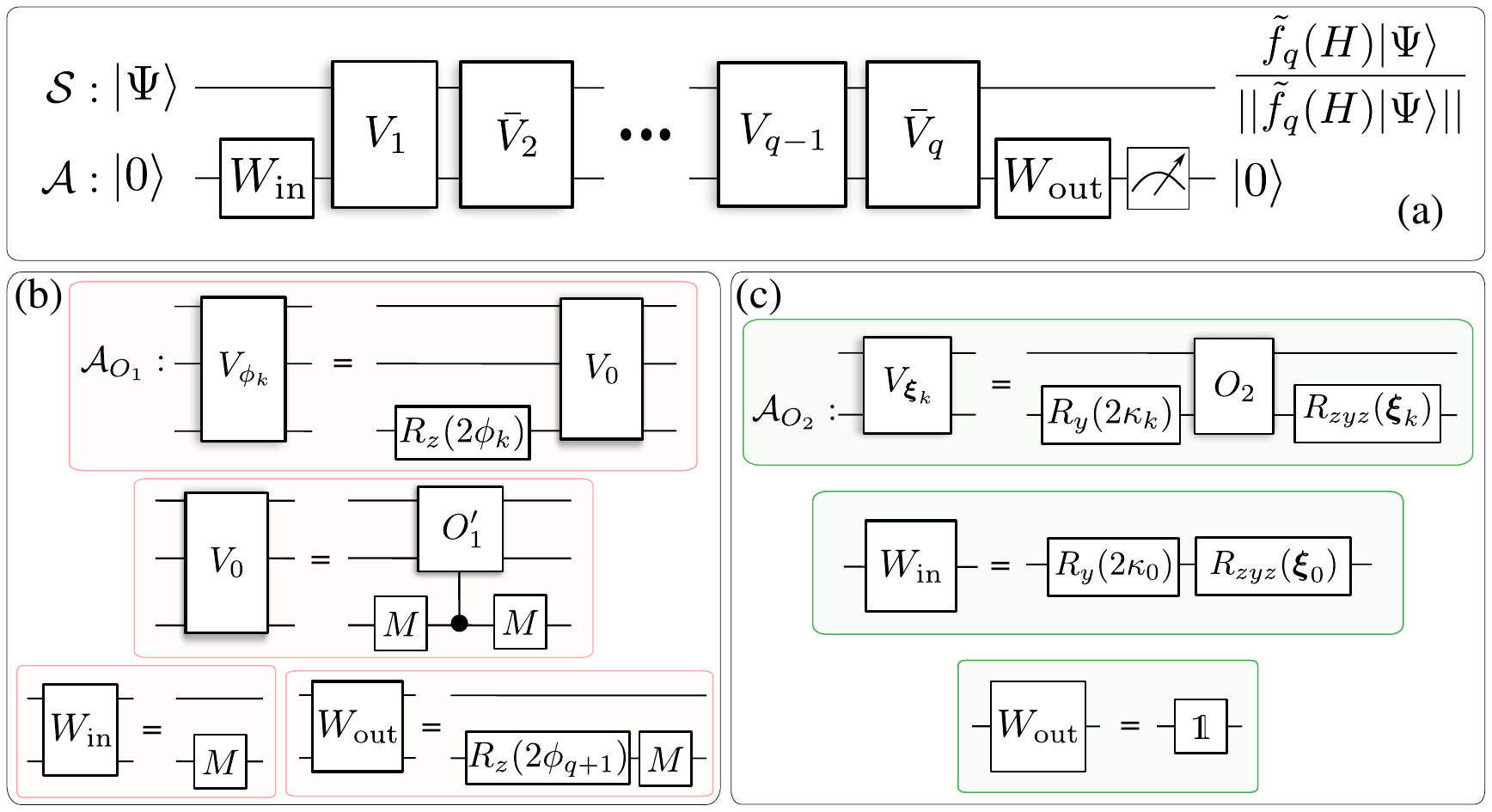}
\caption{\label{fig:generalf} \textbf{QSP primitives for generic operator function design.} a) Both circuits $P_1$ from Alg. \ref{alg:fBlockEncode} and $P_2$ from Alg. \ref{alg:frealtime} have the same structure. If the ancillas are initialised and post-selected in $\ket{0}_{\mathcal{A}}$, the circuit prepares the system state $\frac{\tilde{f}_q(H)\ket{\Psi}}{||\tilde{f}_q(H)\ket{\Psi}||}$, which $\varepsilon$-approximates the target output $\frac{f(H)\ket{\Psi}}{\|f(H)\ket{\Psi}\|}$. 
The details specific to $P_1$ and $P_2$ are respectively shown in panels b) and c). $W_{\text{in}}$ and $W_{\text{out}}$ are fixed ancillary unitaries, and $M$ is a single-qubit Hadamard gate. The basic blocks $V_k$ in panel a) represent the gates $V_{\phi_k}$ in b) and $V_{\boldsymbol{\xi}_k}$ in c).
Each $V_{\phi_k}$ involves one query to the qubitized oracle $O'_1$, which in turn requires one query to $O_1$ and one to its inverse $O^{\dagger}_1$ \cite[Fig. S8]{TLS2022Sup}. 
Whereas each $V_{\boldsymbol{\xi}_k}$ involves one query to the oracle $O_2$. $\bar{V}_k$ is defined as $V_k$ but with $O_1^{\prime\dagger}$ substituting $O'_1$ or $O_2^{\dagger}$ substituting $O_2$. Hence, the query complexities of $P_1$ and $P_2$ are respectively $2q$ and $q$. The approximating function $\tilde{f}_q$ is determined by the angles ${\boldsymbol{\Phi}_1}=\big(\phi_{1},\cdots,\phi_{q+1}\big)$  or $\boldsymbol{\Phi}_2=\{\boldsymbol{\xi}_0,\cdots,\boldsymbol{\xi}_{q}\}$ in the rotations $R_z(2\phi_k)=e^{i\phi_k{Z}}$ or $R_y(2\kappa_k)=e^{i\kappa_k Y}$ and $R_{zyz}(\boldsymbol{\xi}_k)=R_z(\zeta_k+\eta_k)\,R_y(2\varphi_k)\,R_z(\zeta_k-\eta_k)$, with $\boldsymbol{\xi}_k=\{\zeta_k,\eta_k,\varphi_k,\kappa_k\}$. For $P_1$ and $P_2$, these angles are chosen such that $\tilde{f}_q$ is a high-precision Chebyshev and Fourier approximation of  $f$, respectively.
}
 \end{figure}

The correctness and complexity of Algorithm \ref{alg:fBlockEncode} are addressed by the following lemma, proven in \cite[Sec. XVI]{TLS2022Sup}. 
\begin{lem}\label{be_function} Let $f$, $\varepsilon'$, $O_1$, and $O^{\dagger}_1$ be as in the input of Alg.\,\ref{alg:fBlockEncode}. Then, for any $q$ s.t. $\varepsilon_\text{tr}$ in Eq.\ \eqref{eq:bound_error} is no greater than $\varepsilon'$, there exists $\boldsymbol{\Phi}_1\in\mathbb{R}^{q+1}$ such that $V_{\boldsymbol{\Phi}_1}$ in Eq.\,\eqref{eq:Vvecphi} is a ($\varepsilon^\prime$,1)-block-encoding of $f(H)$. The circuit $P_1$ generating $V_{\boldsymbol{\Phi}_1}$ requires a single-qubit ancilla, $q$ queries to $O_1$ and $O^{\dagger}_1$ each, 
and $g_{P_1}=\mathcal{O}\left(g_{O_1}+|\mathcal{A}_{O_1}|\right)$ gates per query, with $g_{O_1}$ the gate complexity of $O_1$.
Furthermore, the classical runtime (calculations of $\boldsymbol{b}$ and $\boldsymbol{\Phi}_1$) is within complexity $\mathcal{O}\big({\rm poly}(q/2)\big)$.
\end{lem}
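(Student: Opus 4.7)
The plan is to reduce the operator-function construction to the single-qubit QSP machinery of Sec.\ \ref{subsec:Real_f_design1} via qubitization, and to control the overall error through the Chebyshev truncation error. Given the analytical target $f$, I would truncate its Chebyshev series at degree $q/2$, obtaining $\tilde f_q(H)=\sum_{k=0}^{q/2} b_k T_k(H)$ whose spectral-norm deviation from $f(H)$ is bounded by the classical estimate in Eq.\ \eqref{eq:bound_error}. By the hypothesis on $q$, this deviation is at most $\varepsilon'$, so it suffices to produce a perfect block-encoding of $\tilde f_q(H)$: the resulting circuit is automatically an $(\varepsilon',1)$-block-encoding of $f(H)$. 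The truncation order $q/2$ and coefficients $\boldsymbol{b}$ are computable in $\mathcal{O}(\mathrm{poly}(q))$ time from the standard Chebyshev integral formulae, accounting for part of the classical-runtime claim.

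To cast $\tilde f_q$ into the single-qubit QSP form, I use $T_k(\cos\theta)=\cos(k\theta)$ to express $\tilde f_q(\cos\theta)$ as a cosine polynomial of the type appearing in Eq.\ \eqref{achievabilityCond2} (with $\mathscr{D}\equiv 0$), while the normalisation $|\tilde f_q(\lambda)|\leq 1$ -- valid up to the inoffensive $(1+\varepsilon_{\mathrm{tr}})^{-1}$ sub-normalization standard in QSP -- supplies Eq.\ \eqref{achievabilityCond1}. The single-qubit QSP theorem of Sec.\ \ref{subsec:Real_f_design1} then yields phases $\boldsymbol{\Phi}_1\in\mathbb{R}^{q+1}$ such that $\mathcal{R}_1(\theta,\boldsymbol{\Phi}_1)$ realises $\tilde f_q(\cos\theta)$ as its upper-left matrix element, together with a constructive classical algorithm computing $\boldsymbol{\Phi}_1$ from $\boldsymbol{b}$ in $\mathcal{O}(\mathrm{poly}(q))$ time. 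This simultaneously establishes existence of the pulses and the remaining part of the claimed classical-runtime bound.

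The core step is to lift the single-qubit gadget to the operator level via qubitization. Because $O'_1=\bigoplus_\lambda e^{-i\theta_\lambda Y_\lambda}$ preserves each two-dimensional invariant subspace $\mathbb{H}_\lambda$ and acts there as a pure rotation with $\theta_\lambda=\cos^{-1}\lambda$, the iterate $V_0$ of Eq.\ \eqref{eq:v0be}, controlled by the QSP ancilla and interleaved with $e^{i\phi_k Z}$, reduces within $\mathbb{H}_\lambda\otimes\mathbb{H}_{\mathrm{QSP}}$ to exactly the gate pattern underlying $\mathcal{R}_1(\theta_\lambda,\boldsymbol{\Phi}_1)$; the Hadamards $W_{\mathrm{in}}$ and $W_{\mathrm{out}}$ switch between the $\ket{\pm}$ basis in which $V_0$ acts blockwise and the computational basis used for post-selection. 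One then verifies that $\bra{0}_{\mathcal{A}_{O_1}}\bra{0}_{\mathrm{QSP}}\,V_{\boldsymbol{\Phi}_1}\,\ket{\lambda}\ket{0}_{\mathcal{A}_{O_1}}\ket{0}_{\mathrm{QSP}}=\tilde f_q(\lambda)$ on every eigenvector $\ket{\lambda}$ of $H$, yielding the desired block-encoding of $\tilde f_q(H)$. The main technical obstacle is precisely this invariant-subspace reduction: checking that the alternation of $V_{\phi_k}$ (using $O'_1$) and $\bar V_{\phi_k}$ (using $(O'_1)^{\dagger}$) reproduces the $R_1(\pm\theta_\lambda,\phi_k)$ cadence demanded by $\mathcal{R}_1$, and that the QSP-ancilla projection isolates the real-valued combination in Eq.\ \eqref{eq:poly}. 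This argument adapts the single-ancilla QSP construction of Ref.\ \cite{Low2019hamiltonian} to the present pulse choice. The resource count then follows from direct inspection of $P_1$: a single QSP ancilla beyond $\mathcal{A}_{O_1}$; $q$ queries each to $O_1$ and $O_1^{\dagger}$, one of each per qubitized iterate; and $\mathcal{O}(g_{O_1}+|\mathcal{A}_{O_1}|)$ additional gates per query, accounting for the QSP-ancilla-controlled oracle and the single-qubit $Z$-rotations.
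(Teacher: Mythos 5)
Your proposal is correct and follows essentially the same route as the paper's proof: truncate the Chebyshev series so that Eq.\ \eqref{eq:bound_error} guarantees $\varepsilon_{\rm tr}\leq\varepsilon'$, synthesize a perfect block-encoding of $\tilde f_q(H)$ by reducing, via qubitization, the interleaved sequence of $V_{\phi_k}$ and $\bar V_{\phi_k}$ to the single-qubit QSP operator on each invariant subspace $\mathbb{H}_\lambda$ (with $\mathscr{D}\equiv0$ and post-selection isolating the real part), and then count resources. The only cosmetic discrepancy is that the paper's reduction lands on $\mathcal{R}_1(\theta_\lambda/2,\boldsymbol{\Phi}_1)$ (half-angle), which is precisely why $q$ pulses yield Chebyshev degree $q/2$, but your degree bookkeeping is already consistent with this.
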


Some final comments about the input function are in place. The restriction of $f$ being analytical is needed to determine the truncation order through Eq.\ \eqref{eq:bound_error}. In fact, to evaluate the RHS of the equation exactly, one needs in general closed-form expression for $f$. However, if the required truncation order is given in advance, the corresponding Chebyshev coefficients can be obtained from $q/2+1$ evaluations of $f$ in specific points (the nodes of the Chebyshev polynomials). In that case, a closed-form expression for $f$ is not required and a classical oracle for evaluating it suffices. 
Moreover, it is important to note that a satisfactory Chebyshev approximation is guaranteed to exist for all bounded and continuous functions \cite{Fraser1965}. If the Chebyshev expansion is given, then step 1 of Algorithm \ref{alg:fBlockEncode} can obviously be skipped and $f$ is not required at all. We further note that Alg. \ref{alg:fBlockEncode} can also be applied even to non-continuous functions over restricted domains without the discontinuities. This is for instance the case of the inverse function, which can be well-approximated over the sub-domain $[-1,-\delta]\cup[\delta,1]$ by a pseudo-inverse polynomial of $\delta$-dependent degree \cite{Childs2017}.

\textbf{QITE-primitive from a block-encoding oracle.} 
\label{sec:QuITE_be_or}
QITE primitive 1 corresponds to the output of Alg. \ref{alg:fBlockEncode}  for $f(\lambda)=F_{\beta}(\lambda)=e^{-\beta(\lambda-\lambda_{\rm min})}$. The Chebyshev coefficients can be readily obtained from the Jacobi-Anger expansion \cite{abramowitz1966} 
\begin{equation}
\label{eq:jacobianger}
e^{-\beta\lambda}=I_{0}(\beta)+2\sum_{k=1}^{\infty}(-1)^{k}I_{k}(\beta)\,T_{k}({\lambda}),
\end{equation}
where $I_{k}(\beta)$ is a modified Bessel function. The proof of Theorem \ref{teoQuITEbe} thus follows straightforwardly from Lemma \ref{be_function}. 

\begin{proof}[Proof of Theorem \ref{teoQuITEbe}.] The function $F_{\beta}:[\lambda_{\rm min},\lambda_{\rm max}]\rightarrow(0,1]$, with $F_{\beta}(\lambda)=e^{-\beta(\lambda-\lambda_{\rm min})}$ for all $\lambda\in[\lambda_{\rm min},\lambda_{\rm max}]$ satisfies all the assumptions of Lemma \ref{be_function}.  Hence, on input $f=F_{\beta}$, Alg. \ref{alg:fBlockEncode} outputs an $(\beta,\varepsilon^\prime,1)$-{QI}TE-primitive. By Eq.\ \eqref{eq:bound_error}, the corresponding truncation error is ($q'\coloneqq q/2$)
{\begin{align}
\nonumber
\varepsilon_\text{tr}\leq\frac{\beta^{q'+1}}{2^{q'}(q'+1)!} & \leq\sqrt{\frac{2}{\pi(q'+1)}}\left(\frac{e\beta}{2(q'+1)}\right)^{q'+1}\label{eq:error}\\
 & \leq\left(\frac{\beta e}{2q'}\right)^{q'}, 
\end{align}
where Stirling inequality has been invoked and we assumed $e\beta/2\leq q'$}. (We note also that the
first inequality can also be obtained from explicit summation using Eq.\ \eqref{eq:jacobianger} and the properties of the Bessel functions
\cite{luke1972}.) Then, imposing $\big(\frac{\beta e}{2q'}\big)^{q'}\leq\varepsilon^\prime$
and solving for $q'$ \cite{gilyen2019} gives the query complexity of Eq.\ \eqref{eq:BEquery}.
 \end{proof}
 
 {Primitive 1 is based on the Jacobi-Anger expansion \cite{abramowitz1966}. {This gives a Chebyshev-polynomial series \cite{Fraser1965,elliot1987} for the exponential function, which can be synthesized with quantum signal processing (see Sec.\ \ref{sec:methods}). The expansion has been applied to real-time evolution \cite{Berry2015a,low2017optimal,Low2019hamiltonian,gilyen2019} and even to the QITE propagator $F_{\beta}(H)$ \cite{CSS21}, for partition function estimation. However, the algorithm from \cite{CSS21} performs only a statistical simulation of $F_{\beta}(H)$ based on post-processing and hence cannot simulate QITE on states. In particular, it cannot be used for Gibbs-state sampling, e.g. Moreover, the query complexity from \cite{CSS21} is $\mathcal{O}\big(N+ \beta + \ln(1/\varepsilon')\big)$, which is worse than Eq.\ \eqref{eq:BEquery} in that it contains the extra term $N$ and lacks the denominator in the second term of Eq.\ \eqref{eq:BEquery}. Traditionally 
\cite{Berry2015a}, the truncation error in the expansion is bounded u}sing properties of the Bessel functions \cite{luke1972}. In contrast, here, we use a generic upper bound (Lemma \ref{be_function}, in Methods) fo{r a}rbitrary Hermitian-operator functions. This gives the same bound {as \cite{Berry2015a} for the exponential} but holds for any analytical real function, {hence being useful i}n general.}
 
{A further remark about the query complexity of P$_1$. The solution for $q'$ satisfying $\big(\frac{\beta e}{2q'}\big)^{q'}\leq\varepsilon^\prime$ given in Ref.\ \cite{gilyen2019} is based on upperbounds for $q'$ in two regimes. When $\beta\geq 2 \ln(1/\varepsilon')/e^2$, it is shown that $q'\leq e^2\beta/2$. On the other hand, it applies that $q'\leq 4\ln(1/\varepsilon')/\ln(e+2\ln(1/\varepsilon')/(e\beta))$ for $\beta\leq 2 \ln(1/\varepsilon')/e^2$. Therefore, for any $\beta$, \begin{equation}
q=2q'\leq 8\left[\frac{e\beta}{2}+\frac{\ln(1/\varepsilon')}{\ln(e+2\ln(1/\varepsilon')/(e\beta))}\right]                                                                                                                                                                                                                                                                                                                                                                                                                                                                   
                                                                                                                                                                                                                                                                \end{equation}
is a valid upperbound for the query complexity. Consequently, the muliplicative factor implied by the $\mathcal{O}()$ notation in Eq.\ \eqref{eq:BEquery} is known and modestly equal to $8$.}

\subsubsection{Operator function design from real-time evolution oracles}
\label{sec:QSP_RTE}
Here, we synthesize an ($\varepsilon^\prime,\alpha$)-block-encoding of $f(H)$ from an oracle for $H$ as in Def. \ref{def:real_t_or}. We proceed as in Sec.\,\ref{sec:QSP_BE}, but with a circuit $P_2$ generating a perfect block-encoding $V_{\boldsymbol{\Phi}_2}$ of a target Fourier expansion $\tilde{g}_q(H)\coloneqq\sum_{m=-q/2}^{q/2} c_m e^{imHt}$ that $\varepsilon_{\text{tr}}$-approximates $\alpha\, f(H)$, for some $\varepsilon_{\text{tr}}\leq\varepsilon'$, $\alpha\leq 1$, and a suitable $t>0$. This is done by adjusting $\boldsymbol{\Phi}_2$ according to Lemma \ref{lem:qsp2}. The function $\tilde{g}_q$ is a Fourier approximation of an intermediary function $g$ such that $g(\lambda,t)\coloneqq g(x_{\lambda})=\alpha\, f(\lambda)$, for $t$ chosen so that $x_{\lambda}=\lambda\, t$ is in the interval of convergence of $\tilde{g}_q$ to $g$ for all $\lambda\in[\lambda_{\text{min}},\lambda_{\text{max}}]$. The reason for this intermediary step here is to circumvent the well-known Gibbs phenomenon, by virtue of which convergence of a Fourier expansion cannot in general be guaranteed at the boundaries. In turn, the sub-normalization factor $\alpha$ arises because our $\tilde{g}_q$ converges to $g$ only for $|x_{\lambda}|<{\pi/2}$, whereas Lemma \ref{lem:qsp2} requires that $|\tilde{g}_q(x_{\lambda})|\leq 1$ for all $|x_{\lambda}|\leq \pi$. This forces one to sub-normalize the expansion so as to guarantee normalization over the entire domain. (As in Sec.\ \ref{sec:QSP_BE}, the inoffensive sub-normalization factor $(1+\varepsilon_\text{tr})^{-1}$ is neglected.)

More precisely, we employ (see Ref.\ \cite{TLS2022}) a construction from Ref.\,\cite{vanApeldoorn2020quantumsdpsolvers} that, given $0<\delta\leq\pi/2$ and a power series that $\frac{\varepsilon_{\text{tr}}}{4}$-approximates $g$, gives $\boldsymbol{c}\coloneqq\{c_m\}_{|m|\leq q/2}$ such that $\tilde{g}_q$ $\varepsilon_{\text{tr}}$-approximates $g$ for all $x_{\lambda}\in[-\pi/2+\delta,\pi/2-\delta]$, if
\begin{equation}
\label{eq:q_rt}
 q\geq\left\lceil\frac{2\pi}{\delta}\ln\left(\frac{4}{\varepsilon_{\text{tr}}}\right)\right\rceil.
\end{equation}
For $f$ analytical, one can obtain the power series of $g$ from a truncated Taylor series of $f$ using that $g(x_{\lambda})=\alpha f(\lambda)$. The truncation order $L$ can be obtained from the remainder:
\begin{equation}
\label{eq:Taylor_remainder}
\frac{\varepsilon_\text{tr}}{4}\leq\frac{\underset{\lambda\in[\lambda_{\text{min}},\lambda_{\text{max}}]}{\max}\left|\alpha\, f^{(L+1)}(\lambda)\right|}{(L+1)!}.
\end{equation}
In turn, the conditions $[\lambda_{\text{min}},\lambda_{\text{max}}]\subseteq[-1,1]$ and $x_{\lambda}\in[-\pi/2+\delta,\pi/2-\delta]$ lead to the natural choice $t=\pi/2-\delta$. (This renders $\tilde{g}_q$ periodic in $x_{\lambda}$ with period $2\pi$.)
In addition, in Ref.\ \cite{TLS2022}  the sub-normalization constant $\alpha$ is bounded in terms of the obtained $t$ and Taylor coefficients $\boldsymbol{a}\coloneqq\{a_l\}_{0\leq l\leq L}$ of $f$. It suffices to take $\alpha$ such that 
\begin{equation}
\label{eq:sub_norm}
\sum_{l=0}^L\big|a_l/(1-2\,\delta/\pi)^l\big|\leq\alpha^{-1}.
\end{equation}
 Note that $L$ and $\alpha$ are inter-dependent. One way to determine them is to increase $L$ and iteratively adapt $\alpha$ until Eqs. \eqref{eq:Taylor_remainder} and \eqref{eq:sub_norm} are both satisfied. Alternatively, if the expansion converges sufficiently fast (e.g., if $\lim_{l\rightarrow \infty}|\frac{a_{l+1}}{a_l}|<1-\frac{2\delta}{\pi}$), one can simply substitute $L$ in Eq.\ \eqref{eq:sub_norm} by $\infty$.
This is indeed the case with QITE primitives. There, the substitution introduces a slight increase of unnecessary sub-normalization but makes the resulting $\alpha$ independent of $L$, thus simplifying the analysis. 
Then, from the obtained $\boldsymbol{c}$, one can finally calculate the required $\boldsymbol{\Phi}_2$ \cite{TLS2022}.

Next, we explicitly show how to generate $V_{\boldsymbol{\Phi}_2}$. The iterate is now taken simply as the oracle itself: $O_2=\mathds{1}\otimes\ket{0}\bra{0}+ e^{-iHt}\otimes \ket{1}\bra{1}$.
Notice that, in contrast to $O_1$, $O_2$ readily acts as an $SU(2)$ rotation on each 2-dimensional subspace $\text{span}\{\ket{\lambda}\ket{0},\ket{\lambda}\ket{1}\}$. This relaxes the need for qubitization.  
The basic QSP blocks for the unitary operator $V_{{\boldsymbol{\Phi}_2}}\coloneqq W_{\text{out}}\left(
 \bar{V}_{\boldsymbol{\xi}_{q}}{V}_{\boldsymbol{\xi}_{q-1}}\cdots \bar{V}_{\boldsymbol{\xi}_{2}}{V}_{\boldsymbol{\xi}_{1}} \right)W_{\text{in}}$  are 
\begin{subequations}
 \begin{equation}
 \label{eq:vphiRT}
  {V}_{\boldsymbol{\xi}_k}\coloneqq\left[\mathds{1}\otimes\left(e^{i\frac{\zeta_k+\eta_k}{2} Z}e^{-i\varphi_k Y}e^{i\frac{\zeta_k-\eta_k}{2} Z}\right)\right]O_2 \left[\mathds{1}\otimes e^{-i\kappa_k Y}\right],
 \end{equation}
 and 
 \begin{equation}\label{eq:vphibarRT}
  \bar{V}_{\boldsymbol{\xi}_k}\coloneqq\left[\mathds{1}\otimes\left(e^{i\frac{\zeta_k+\eta_k}{2} Z}e^{-i\varphi_k Y}e^{i\frac{\zeta_k-\eta_k}{2} Z}\right)\right]O_2^\dagger \left[\mathds{1}\otimes e^{-i\kappa_k Y}\right]
 \end{equation}
\end{subequations}
with ${\boldsymbol{\xi}_k}\coloneqq\{\zeta_k,\eta_k,\varphi_k,\kappa_k\}$. 
$V_{\boldsymbol{\xi}_k}$ and $\bar{V}_{\boldsymbol{\xi}_k}$ play a similar role to $R_2(x,\omega_k,\boldsymbol{\xi}_k)$ in Sec.\,\ref{subsec:Real_f_design2} (with {$x_\lambda$} inside $O_2$ playing the role of $x$ there for each $\lambda$). Here we take
\begin{equation}
\label{eq:WinRT}
W_{\text{in}}=\mathds{1}\otimes\big[e^{i\frac{\zeta_0+\eta_0}{2} Z}e^{-i\varphi_0 Y}e^{i\frac{\zeta_0-\eta_0}{2} Z}e^{-i\kappa_0 Y}\big]
\end{equation}
and $W_{\text{out}}=\mathds{1}$.
The circuit is depicted in Figs. \ref{fig:generalf}.a) and \ref{fig:generalf}.c).

The following pseudocode presents the entire procedure.

\begin{center}
{\setlength{\fboxsep}{1pt}

\begin{minipage}[t]{0.9\columnwidth}
\centering
\begin{algorithm}[H]\label{alg:frealtime}
\SetAlgoLined
\SetAlgorithmName{Algorithm}{Algorithm}{Algorithm}
\caption{Operator-function design from real-time evolution Hamiltonian oracles}
\SetKwInOut{Input}{input}\SetKwInOut{Output}{output}
\SetKwData{Even}{even}

\Input{analytical function $f:\text{Dom}\to$ $\text{Img}$, with {$[\lambda_\text{min},\lambda_\text{max}]\subseteq\text{Dom}$ and} $\text{Img}\subseteq[-1,1]$, error $\varepsilon'>0$, $\delta\in(0,\pi/2)$, oracle $O_2$ for $H$ at a time $t={\pi/2-\delta}$, and its inverse $O^{\dagger}_2$.}
\Output{unitary quantum circuit $P_2$.} 
\BlankLine
Obtain $\alpha$, $L$, and $\boldsymbol{a}$ s.t. $\varepsilon_\text{tr}\leq\varepsilon'$\;
obtain $q$ s.t. Eq.\ \eqref{eq:q_rt} holds\;
calculate the Fourier coefficients $\boldsymbol{c}$\;
calculate the rotation angles $\boldsymbol{\Phi}_2$ \cite{TLS2022}\;
\Begin(construction of $P_2$:){
apply $W_{\text{in}}$ from Eq.\ \eqref{eq:WinRT} on $\mathcal{A}$\;
\For{$k= 1$ \KwTo $k=q$}{
   \lIf{$k$ is odd}{apply $V_{\boldsymbol{\xi}_k}$ from Eq.\ \eqref{eq:vphiRT}} %
   \lElse{appy $\bar{V}_{\boldsymbol{\xi}_k}$ from Eq.\ \eqref{eq:vphibarRT}}
}
}
\end{algorithm}
\end{minipage}
}
\end{center}

 The correctness and complexity of Alg. \ref{alg:frealtime} are addressed by the following lemma, proved in Ref.\ \cite{TLS2022}.

\begin{lem}\label{rt_function} Let  $f$, $\varepsilon'$, $\delta$, $O_2$, and $O_2^\dagger$ be as in the input of Alg. \ref{alg:frealtime}. Then, for any $q$ satisfying Eq.\ \eqref{eq:q_rt} and $\alpha$ satisfying Eq.\ \eqref{eq:sub_norm}, there exists $\boldsymbol{\Phi}_2\in\mathbb{R}^{4(q+1)}$ such that $V_{\boldsymbol{\Phi}_2}$ is an ($\varepsilon^\prime,\alpha$)-block-encoding of
 $f(H)$. The circuit $P_2$ generating $V_{\boldsymbol{\Phi}_2}$ requires $q/2$ queries to $O_2$ and $O_2^\dagger$ each and $4+g_{O_2}$ gates per query, with $g_{O_2}$ the gate complexity of $O_2$. Moreover, the classical runtime is within complexity $\mathcal{O}\big(\text{poly}(L,q/2)\big)$.
\end{lem}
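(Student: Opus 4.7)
The plan is to establish the four claims in order: existence of $\boldsymbol{\Phi}_2$, query complexity, gate complexity, and classical runtime. The key idea is to reduce the multi-qubit problem to $2^N$ independent single-qubit QSP instances, one per eigenvalue of $H$, and then invoke Lemma \ref{lem:qsp2}.

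First, I would diagonalize $H=\sum_\lambda \lambda \ketbra{\lambda}{\lambda}$ and observe that the iterate $O_2$ preserves each two-dimensional subspace $\mathbb{H}_\lambda\coloneqq\mathrm{span}\{\ket{\lambda}\ket{0},\ket{\lambda}\ket{1}\}$, acting within it as $\mathds{1}\otimes\ket{0}\bra{0}+e^{-i\lambda t}\otimes\ket{1}\bra{1}$. Conjugating by a Hadamard on the ancilla turns this into a $Z$-rotation with angle $\omega x_\lambda$ with $x_\lambda=\lambda t$, matching the iterate $e^{i\omega x Z}$ appearing in $R_2$. Since each $V_{\boldsymbol{\xi}_k}$ and $\bar{V}_{\boldsymbol{\xi}_k}$ in Eqs. \eqref{eq:vphiRT}--\eqref{eq:vphibarRT} consists of this iterate together with single-qubit unitaries on the ancilla (which commute with the block decomposition), the full operator $V_{\boldsymbol{\Phi}_2}$ splits as $\bigoplus_\lambda \mathcal{R}_2(x_\lambda,\boldsymbol{\omega},\boldsymbol{\Phi}_2)$ after conjugation by the ancilla Hadamards. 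Lemma \ref{lem:qsp2} then gives, for each $\lambda$, the matrix element $\bra{\lambda,0}V_{\boldsymbol{\Phi}_2}\ket{\lambda,0}=\tilde{g}_q(x_\lambda)$, provided $|\tilde{g}_q|\le 1$ on $[-\pi,\pi]$.

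Second, I need to verify the two ingredients required by Lemma \ref{lem:qsp2}: (i) the norm bound $|\tilde{g}_q(x_\lambda)|\leq 1$ and (ii) a uniform approximation error $|\tilde{g}_q(x_\lambda)-\alpha f(\lambda)|\leq \varepsilon_{\mathrm{tr}}\leq \varepsilon'$ on the relevant domain. For (i), the sub-normalization condition \eqref{eq:sub_norm} is designed precisely to dominate $\sum_l |a_l|\,(\pi/2)^l/t^l$, ensuring $\alpha f$ (and hence $\tilde{g}_q$, up to $\varepsilon_{\mathrm{tr}}$) is bounded by $1$ everywhere, even outside the convergence window where Gibbs oscillations appear. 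For (ii), I would combine two approximations: the Taylor truncation at order $L$ controlled by the remainder \eqref{eq:Taylor_remainder}, which gives a polynomial that $\tfrac{\varepsilon_{\mathrm{tr}}}{4}$-approximates $g(x)=\alpha f(x/t)$ on $[-\pi/2+\delta,\pi/2-\delta]$, and the Fourier truncation of that polynomial at order $q/2$ bounded by \eqref{eq:q_rt} (this is the construction imported from \cite{vanApeldoorn2020quantumsdpsolvers}). Since $x_\lambda=\lambda t \in[-(\pi/2-\delta),\pi/2-\delta]$ for all $\lambda\in[-1,1]$ by the choice $t=\pi/2-\delta$, the combined error is $\leq \varepsilon_{\mathrm{tr}}\leq\varepsilon'$ on the full spectrum. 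Promoting the scalar bound to the operator norm bound $\|\bra{0}V_{\boldsymbol{\Phi}_2}\ket{0}-\alpha f(H)\|\leq \varepsilon'$ is immediate from the eigen-decomposition. The main technical obstacle here is reconciling the two different error budgets (Taylor and Fourier) and ensuring that the sub-normalization derived from \eqref{eq:sub_norm} is consistent with both—this is where the analysis is most delicate and where I would lean on the detailed estimates of \cite{TLS2022}.

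Third, the query count is read off the structure of $V_{\boldsymbol{\Phi}_2}$: among the $q$ factors $\{V_{\boldsymbol{\xi}_k},\bar{V}_{\boldsymbol{\xi}_k}\}$, half call $O_2$ and half call $O_2^\dagger$ (alternating odd/even per the pseudocode), giving $q/2$ queries to each. Inspecting \eqref{eq:vphiRT}, each factor adds one query plus three $Z/Y$ rotations on the ancilla decomposing $e^{i(\zeta+\eta)/2\,Z}e^{-i\varphi Y}e^{i(\zeta-\eta)/2\,Z}$ together with one preceding $e^{-i\kappa Y}$, i.e. four extra single-qubit gates, yielding $4+g_{O_2}$ gates per query; $W_{\mathrm{in}}$ adds an $\mathcal{O}(1)$ additive overhead that is subleading. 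Finally, for the classical runtime, computing the Taylor coefficients $\boldsymbol{a}$ up to order $L$ and iteratively refining $\alpha$ to satisfy \eqref{eq:sub_norm} costs $\mathcal{O}(\mathrm{poly}(L))$; converting $\boldsymbol{a}$ to Fourier coefficients $\boldsymbol{c}$ of $\tilde{g}_q$ via the truncation procedure costs $\mathcal{O}(\mathrm{poly}(L,q/2))$; and extracting $\boldsymbol{\Phi}_2$ from $\boldsymbol{c}$ runs in $\mathcal{O}(\mathrm{poly}(q/2))$ by Lemma \ref{lem:qsp2}. Adding these gives the stated classical complexity $\mathcal{O}(\mathrm{poly}(L,q/2))$.
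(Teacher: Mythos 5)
Your proposal follows essentially the same route as the paper (which defers the detailed proof to the companion work \cite{TLS2022} but lays out exactly this construction in Sec.~\ref{sec:QSP_RTE}): block-diagonalize over the eigenspaces $\mathrm{span}\{\ket{\lambda}\ket{0},\ket{\lambda}\ket{1}\}$, reduce to one instance of Lemma~\ref{lem:qsp2} per eigenvalue with signal $x_\lambda=\lambda t$, and control the error by chaining the Taylor remainder \eqref{eq:Taylor_remainder} with the Fourier-conversion bound \eqref{eq:q_rt} under the sub-normalization \eqref{eq:sub_norm}. One small correction: no Hadamard conjugation is needed or wanted --- $O_2$ restricted to $\mathbb{H}_\lambda$ is already $e^{-ix_\lambda/2}\,e^{i(x_\lambda/2)Z}$ in the computational ancilla basis, i.e.\ the $Z$-rotation iterate of $R_2$ up to a $\lambda$-dependent phase, and it is precisely the alternation of $O_2$ and $O_2^\dagger$ (reflected in $\omega_k=(-1)^k/2$ of Lemma~\ref{lem:qsp2}, with $q$ even) that cancels those residual phases; your sketch notes the alternation only for the query count, not for this cancellation. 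With that fix the argument goes through as you describe.
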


Clearly, if a suitable power series for $f$ is a-priori available, analyticity of $f$ is not required and step 1 in Alg. \ref{alg:frealtime} can be skipped. Finally, we note that it is always possible to avoid sub-normalization by introducing a periodic extension of $f$ that is readily normalized over the entire domain of the Fourier expansion. However, Eq.\ \eqref{eq:q_rt} is then no longer valid and one must assess the query complexity on a case-by-case basis. 
{This can for instance be tackled numerically by variationally optimising the gate sequence $\mathcal{R}_2(x,\boldsymbol{\omega},\boldsymbol{\Phi}_2)$ to block-encode the periodic extension of $f$ \cite{PerezSalinas2021}. Either way}, clearly, if a normalized Fourier expansion is a-priori available, one can skip steps 1 to 3 in Alg. \ref{alg:frealtime}.

\textbf{QITE-primitive from a real-time evolution oracle.} 
\label{sec:QuITE_rt_or}
QITE primitive 2 is the output of Alg. \ref{alg:frealtime}  for $f=F_{\beta}$. The proof of Theo. \ref{teoQuITEberealtime} thus follows straight from Lemma \ref{rt_function}.

\begin{proof}[Proof of Theorem \ref{teoQuITEberealtime}]
The function $F_\beta:[\lambda_{\text{min}},\lambda_{\text{max}}]\rightarrow(0,1]$, with $F_{\beta}(\lambda)=e^{-\beta(\lambda-\lambda_{\rm min})}$, satisfies the assumptions of Lemma \ref{rt_function} with $L$ given by Eq.\ \eqref{eq:Taylor_remainder} for $\max_{\lambda\in[\lambda_{\text{min}},\lambda_{\text{max}}]}\big|\alpha\, f^{(L+1)}(\lambda)\big|=\alpha\,\beta^{L+1}$. Its Taylor coefficients are $a_l=\frac{e^{\beta\lambda_{\rm min}}(-\beta)^l}{l!}$, for all $l\in\mathbb{N}$. 
To obtain $\alpha$, we note that $\sum_{l=0}^{L}|a_l/\big(1-\frac{2\delta}{\pi}\big)^l|\leq\sum_{l=0}^{\infty}|a_l/\big(1-\frac{2\delta}{\pi}\big)^l|= e^{\beta\lambda_{\rm min}} e^{\frac{\beta}{1-2\delta/\pi}}$. Hence, by Eq.\ \eqref{eq:sub_norm}, we can take $\alpha=e^{-\beta(\lambda_{\rm min}+\frac{1}{1-2\delta/\pi})}$. Introducing $\gamma\coloneqq-\beta+\frac{\beta}{1-2\delta/\pi}=\frac{\beta\delta}{\pi/2-\delta}$, we re-write $\alpha=e^{-\beta(1+\lambda_{\rm min})-\gamma}$.  
This allows us to specify $e^{-\gamma}$ instead of the Fourier convergence interval, i.e. 
to subordinate $\delta$ to the desired $\gamma$. This is done by fixing $\delta=\frac{\pi}{2}\frac{1}{1+\frac{\beta}{\gamma}}$. This, together with Eq.\ \eqref{eq:q_rt}, leads to Eq.\ \eqref{eq:real_time_or}. 
\end{proof}

{In fact, $4(\beta/\lambda+1)\ln(4/\epsilon')$ is an upper-bound for the query complexity of P$_2$, as can be inferred from Eq.\ \eqref{eq:q_rt}. In other words, the multiplicative factor implied in the $\mathcal{O}()$ in Eq.\ \eqref{eq:real_time_or} notation is actually known to be equal to $4$.}

\subsection{Traditional master QITE algorithms}
\label{app:QuITEmasters} 
The average number of times a QITE primitive is applied  in probabilistic and coherent master QITE algorithms is $1/p_{\Psi}(\beta,\varepsilon^{\prime}, \alpha)$ and $1/\sqrt{p_{\Psi}(\beta,\varepsilon^{\prime}, \alpha)}$, respectively; see Fig.\ \ref{fig:master}. Conveniently, for $\varepsilon\ll1$, $1/p_{\Psi}(\beta,\varepsilon^{\prime}, \alpha)$ can be approximated by the more practical expression $1/p_{\Psi}(\beta,\alpha)$ up to error $\mathcal O(\varepsilon)$. This follows from a  Taylor expansion.
The probabilistic algorithm applies the primitive on independent input state preparations [Fig.\,\ref{fig:master}-a)] and stops at the moment that the first successful postselection on the ancillas happens. In contrast,  the coherent one (Fig.\,\ref{fig:master}-b) leverages quantum amplitude amplification \cite{Brassard2002}, which we briefly discuss next. 

\begin{figure*}[t!]
\centering{}\includegraphics[width=1.0\textwidth]{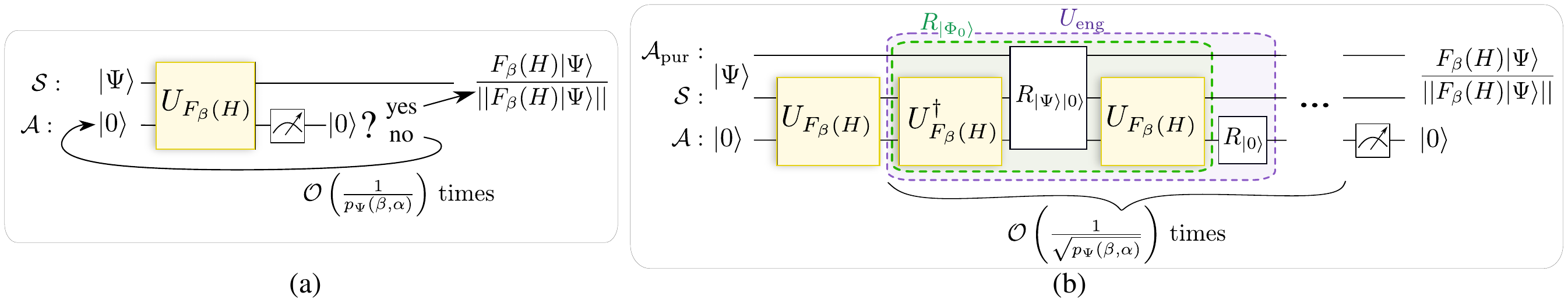}
\caption{ {\bf Probabilistic versus coherent master QITE algorithms}. a) The probabilistic approach repeatedly applies the unitary $U_{F_{\beta}(H)}$ generated by the primitive (on independent preparations of $\ket{\Psi}\ket{0}\in\mathbb H_\mathcal{SA}$) until the post-selection is successful, i.e. the measurement on the ancillas returns $\ket{0}$ as outcome. This takes on average $\mathcal{O}\big(1/p_\Psi(\beta,\alpha)\big)$ repetitions. No a priori knowledge of the input state $\ket\Psi$ is required (it can be fully generic, even mixed). b) The coherent approach is based on quantum amplitude amplification. It operates only on pure input states. So, if the input state is mixed, an extra ancillary register $\mathcal{A}_{\mathrm{pur}}$ of $|\mathcal{A}_{\mathrm{pur}}|=|\mathcal{S}|=N$ qubits is required to purify it. This is the case in quantum Gibbs state sampling, where $\ket\Psi$ is a purification of the maximally mixed state on $\mathbb H_\mathcal{S}$. The primitive is repeated applied sequentially (on the same preparation of $\ket{\Psi}\ket{0}$), interleaved with reflection operators $R_{\ket{\Phi_0}}$ and $R_{\ket{0}}$  around the states $\ket{\Phi_0}\coloneqq U_{F_\beta(H)}\ket{\Psi}\ket{0}\in\mathbb H_\mathcal{SA}$ and $\ket{0}\in\mathbb H_\mathcal{A}$, respectively. In practice, this requires full a priori knowledge of $\ket\Psi$. The total number of repetitions of the primitive is $\mathcal{O}\big(1/\sqrt{p_\Psi(\beta,\alpha)}\big)$, after which the desired output is obtained with probability close to 1. Hence, the coherent master algorithm displays a significantly lower overall query complexity than the probabilistic one. However, in return, the former requires much larger circuit depth than the latter.}
\label{fig:master}%
\end{figure*}

The coherent amplification process is realized by repeatedly applying to $\ket{\Phi_0}\coloneqq U_{F_\beta(H)}\ket{\Psi}\ket{0}$ the unitary operator 
\begin{equation}
U_\mathrm{eng}\coloneqq R_{\ket{\Phi_0}}\,R_{\ket{0}},
\label{eq:U_engine}
\end{equation}
where $R_{\ket{\Phi_0}}$ and $R_{\ket{0}}$ are respectively the reflection operators around $\ket{\Phi_0}\in\mathbb H_\mathcal{S}\otimes \mathbb H_{\mathcal{A}_{\rm pur}}\otimes\mathbb H_\mathcal{A}$ and $\ket{0}\in\mathbb H_\mathcal{A}$. The former reflection can in turn be decomposed as $R_{\ket{\Phi_0}}=U_{F_\beta(H)}\,R_{\ket{\Psi}\ket{0}}\,U^{\dagger}_{F_\beta(H)}$, where $U^{\dagger}_{F_\beta(H)}$ is the inverse of the  block-encoding $U_{F_\beta(H)}$ of the QITE propagator $F_\beta(H)$ and $R_{\ket{\Psi}\ket{0}}$ the reflection around $\ket{\Psi}\ket{0}$. Unitary $U_\mathrm{eng}$ is sometimes referred to as the amplification engine. Importantly, $U_\mathrm{eng}$ acts on the 2-dimensional subspace spanned by $\ket{\Phi_0}$ and $\ket{\Phi_{\rm target}}\propto\ketbra{0}{0}\,\ket{\Phi_0}$ as an SU$(2)$ rotation. For $k\in\mathbb{N}$, it gives
\begin{eqnarray}
\nonumber
U_\mathrm{eng}^k\,\ket{\Phi_0}&=&\sin[(2k+1)\theta]\,\ket{\Phi_{\rm target}}\\
&+&\cos[(2k+1)\theta]\,\ket{\Phi_{\perp}} ,
\label{eq:afternsteps}
\end{eqnarray}
where $\ket{\Phi_{\perp}}\propto\ketbra{1}{1}\,\ket{\Phi_0}$ 
and $\sin(\theta)\coloneqq\alpha\|F_\beta(H)\ket{\Psi}\|$. Hence, taking $k=k_{\rm opt}=\mathcal O (1/\theta)$ yields $\sin[(2k+1)\theta]\approx1$ and therefore probability close to 1 for desired output. For $\theta\ll1$, this entails $k=\mathcal O (1/\alpha\|F_\beta(H)\ket{\Psi}\|)=\mathcal O (1/\sqrt{p_\Psi(\beta,\alpha)})$ repetitions of the primitive, as in Eq.\ \eqref{eq:Qcomp}.

Finally, since $\alpha^2\|F_\beta(H)\ket{\Psi}\|$ is in general unknown, one has no a priori knowledge of $k_{\rm opt}$. However, fortunately, this can be accounted for with successive attempts with $k$ randomly chosen within a range of values that grows exponentially with the number of attempts (see \cite[Theorem 3]{Brassard2002}). Remarkably, the resulting average number of applications of the primitive remains within $\mathcal O (1/\sqrt{p_\Psi(\beta,\alpha)})$.

\subsection{ Minimum query complexity of QITE primitives based on block-encoding oracles}
\label{sec:lowerbound}

Our proof strategy for Theorem \ref{th:lowerbound} is analogous to that of the no-fast-forwarding theorem for real-time evolutions \cite{Berry2007,Berry2014,Berry2015a}. That is, it is based on a reduction to QITE of the task of determining the parity ${\rm par}(\boldsymbol{x})\coloneqq x_{0}\oplus x_{1}\oplus \cdots\oplus x_{N-1}$, with $\oplus$ the bit-wise sum, of an unknown $N$-bit string $\boldsymbol{x}\coloneqq x_{0}\,x_{1}\cdots x_{N-1}$ from a parity oracle $U_{\boldsymbol x}$ for $\boldsymbol{x}$; together with known fundamental complexity bounds for the latter task \cite{Farhi1998,Beals1998}. More precisely, our proof relies on three facts: $i$) No algorithm can find ${\rm par}(\boldsymbol{x})$ from $U_{\boldsymbol x}$ with fewer than a known number of queries to it \cite{Farhi1998,Beals1998}; $ii$) a QITE primitive querying an oracle for an appropriate Hamiltonian $H_{\boldsymbol x}$ gives an algorithm to find ${\rm par}(\boldsymbol{x})$; and $iii$) a block-encoding oracle for $H_{\boldsymbol x}$ can be synthesized from one call to $U_{\boldsymbol x}$. The three facts are established in the following lemmas.

The first lemma, proven in \cite{Farhi1998}, lower-bounds the complexity of any quantum circuit able to obtain ${\rm par}(\boldsymbol{x})$ from queries to $U_{\boldsymbol x}$. For our purposes, it can be stated as follows.
\begin{lem}
Let $C$ be a quantum circuit composed of $\boldsymbol x$-independent gates and $q$ times the $\boldsymbol x$-dependent unitary
\begin{equation}
U_{\boldsymbol x} \coloneqq \sum_{j=0}^{N} \ket j \bra j \otimes X^{x_j} \ ,
\label{eq:def_U_x}
\end{equation}
with $\{\ket j\}_{j\in [N+1]}$ an orthogonal basis, such that, acting on an $\boldsymbol x$-independent input state and upon measurement on an $\boldsymbol x$-independent basis, outputs ${\rm par}(\boldsymbol{x})$ with probability greater than $1/2$ for all $\boldsymbol{x}\in\{0,1\}^N$. Then, $q\geq \lceil N/2\rceil$.
\label{lem:parity}
\end{lem}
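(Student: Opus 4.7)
The plan is to apply the polynomial method of Beals, Buhrman, Cleve, Mosca, and de Wolf. Let $A(\boldsymbol{x})\in[0,1]$ denote the probability that $C$ outputs the bit $1$ when the oracle is $U_{\boldsymbol{x}}$; the hypothesis in the lemma is equivalent to $A(\boldsymbol{x})>1/2$ iff ${\rm par}(\boldsymbol{x})=1$ (or the analogous statement with the roles of $0$ and $1$ swapped). The first step is to show that $A$ agrees on $\{0,1\}^N$ with a multilinear polynomial of total degree at most $2q$. Using the identity $X^{x_j}=(1-x_j)\,\mathds{1}+x_j\,X$, each of the $q$ applications of $U_{\boldsymbol{x}}$ in Eq.\ \eqref{eq:def_U_x} is an affine function of a single bit $x_j$ per computational-basis component of the index register; the $\boldsymbol{x}$-independent gates do not raise the degree; and so after $q$ queries every amplitude of the final state is a polynomial in $\boldsymbol{x}$ of degree at most $q$. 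Hence the probability of any measurement outcome, being a sum of squared moduli of such amplitudes, is a polynomial of degree at most $2q$.

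The second step is symmetrization. For $k\in\{0,1,\dots,N\}$, define $p(k):=\binom{N}{k}^{-1}\sum_{\boldsymbol{x}:\,|\boldsymbol{x}|=k}A(\boldsymbol{x})$. The classical Minsky--Papert lemma guarantees that $p$ agrees on the integers $\{0,1,\dots,N\}$ with a univariate real polynomial of degree at most $2q$.

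The third step exploits the success condition. Since ${\rm par}(\boldsymbol{x})$ depends only on the parity of $|\boldsymbol{x}|$, averaging the strict inequality $A(\boldsymbol{x})>1/2$ (respectively $<1/2$) over all $\boldsymbol{x}$ of a given Hamming weight preserves it, yielding $p(k)>1/2$ for odd $k$ and $p(k)<1/2$ for even $k$, or vice versa. Therefore $p(k)-1/2$ changes sign across each of the $N$ consecutive unit intervals $[k,k+1]$ with $k\in\{0,\dots,N-1\}$, so by the intermediate value theorem it admits at least $N$ distinct real roots in $[0,N]$. Combined with the degree bound $\deg(p)\leq 2q$, this forces $2q\geq N$, and since $q\in\mathbb{N}$ we conclude $q\geq\lceil N/2\rceil$.

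The main point requiring care is the degree bound in step one: because $U_{\boldsymbol{x}}$ acts via an indexed control that selects a single $x_j$ per query, one must verify that each query raises the amplitude degree by at most $1$ (and not more, as might occur if several bits could be queried in parallel). This is essentially the only nontrivial check in the proof; the Minsky--Papert symmetrization and the root-counting argument are then routine. An alternative, closer to the original Farhi--Goldstone--Gutmann--Sipser argument, would be to track the inner product of the evolved states on inputs of opposite parity and bound its per-query decrease directly; this also yields $q\geq\lceil N/2\rceil$, but the polynomial method gives a shorter and more transparent derivation.
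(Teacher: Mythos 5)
Your proof is correct and complete: the degree-$1$-per-query bound on the amplitudes follows exactly as you say from $X^{x_j}=(1-x_j)\mathds{1}+x_j X$ (each matrix entry of $U_{\boldsymbol x}$ is affine in $\boldsymbol x$), the Minsky--Papert symmetrization is applicable, and the $N$ sign alternations of $p(k)-1/2$ force $2q\geq N$, hence $q\geq\lceil N/2\rceil$ for integer $q$. The paper does not prove this lemma itself but defers to the cited references \citep{Farhi1998,Beals1998}; your argument is precisely the polynomial-method proof of the second of these, and the inner-product-tracking alternative you mention at the end is essentially the first, so nothing further is needed.
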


The second lemma, proven in \cite[Sec. XVII]{TLS2022Sup}, 
reduces parity finding to QITE and is the key technical contribution of this section.
\begin{lem}
\label{lem:QuITEparity}
Let $\beta>0$, $\varepsilon'>0$, $\alpha\in(0,1]$, and $\boldsymbol x$ an $N$-bit string such that 
\begin{equation}
\left|\frac{1-e^{-\frac{\beta}{2N}}}2  \right|^N>\frac{2\,\varepsilon'}{\alpha}\ .
\label{eq:Paritycond}
\end{equation}
 Then, there exists $H_{\boldsymbol x}$, with $\|H_{\boldsymbol x}\| \leq 1$, such that a $(\beta,\varepsilon',\alpha)$-QITE-primitive with calls to a block-encoding oracle for $H_{\boldsymbol x}$, acting on an $\boldsymbol x$-independent input state and upon measurement on an $\boldsymbol x$-independent basis, outputs ${\rm par}(\boldsymbol{x})$ with probability greater than $1/2$ for all $\boldsymbol{x}\in\{0,1\}^N$. 
\end{lem}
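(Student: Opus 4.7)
The plan is to exhibit an explicit $H_{\boldsymbol x}$ that is block-encodable from a single call to $U_{\boldsymbol x}$ and whose QITE image of a fixed, $\boldsymbol x$-independent state encodes $\mathrm{par}(\boldsymbol x)$ into a computational-basis outcome. My candidate is the signed path-graph Hamiltonian on $\mathbb{C}^{N+1}\otimes\mathbb{C}^2$,
\[
H_{\boldsymbol x}=\tfrac{1}{2}\sum_{j=0}^{N-1}\bigl(\ket{j{+}1}\bra{j}+\ket{j}\bra{j{+}1}\bigr)\otimes X^{x_j}=\tfrac{1}{2}\bigl(S_+\, U_{\boldsymbol x}+U_{\boldsymbol x}^\dagger\, S_-\bigr),
\]
with $S_\pm=\sum_j\ket{j{\pm}1}\bra{j}$ the $\boldsymbol x$-independent shifts. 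This operator is Hermitian, has $\|H_{\boldsymbol x}\|\leq 1$, and, being the Hermitian part of a product of two (block-encodable) operators built from $U_{\boldsymbol x}$ and a shift, admits a perfect block encoding with a single call to $U_{\boldsymbol x}$ plus $O(1)$ $\boldsymbol x$-independent ancillas and gates, fulfilling Def.~\ref{def:block_enc_or}.

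The second step is a walk-counting argument. Any walk of length $n$ on the underlying $(N{+}1)$-vertex path from site $0$ to site $N$ must traverse each edge $(j,j{+}1)$ an odd number $k_j$ of times, so the accumulated Pauli factor is $X^{\sum_j k_j x_j}=X^{\mathrm{par}(\boldsymbol x)}$ independently of the walk. Summing over walks therefore gives
\[
\bra{(N,s)}e^{-\beta H_{\boldsymbol x}}\ket{(0,0)}=\langle s|X^{\mathrm{par}(\boldsymbol x)}|0\rangle\, G(\beta),
\qquad G(\beta)=\bra{N}e^{-\beta H_{\mathrm{path}}/2}\ket{0},
\]
with $G(\beta)$ an $\boldsymbol x$-independent scalar. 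Hence the post-QITE amplitude on the target site is concentrated entirely on the ancilla outcome $s=\mathrm{par}(\boldsymbol x)$ and vanishes on $s\neq\mathrm{par}(\boldsymbol x)$. Combining this identity with the elementary estimates $|G(\beta)|\geq(\beta/2)^N/N!$, $N!\leq N^N$, and $1-e^{-x}\leq x$, together with a suitable choice of the primitive's spectrum-shift parameter, I aim to establish
\[
\bigl|\alpha\, e^{\beta\lambda_{\min}}\, G(\beta)\bigr|\geq\alpha\,\bigl|(1-e^{-\beta/(2N)})/2\bigr|^{N},
\]
matching the right-hand side of \eqref{eq:Paritycond}.

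The decoding procedure is then: start in the $\boldsymbol x$-independent input $\ket{(0,0)}\otimes\ket{0}_{\mathcal A}$, run the QITE primitive, measure both the system and all ancillas in the computational basis, and output $s$ as the guess for $\mathrm{par}(\boldsymbol x)$ whenever the position register reads $\ket{N}$ and every ancilla reads $\ket{0}$ (output a default value otherwise). The $(\varepsilon',\alpha)$ block-encoding guarantee implies that the measured amplitude on outcome $(N,s,0)$ deviates from $\alpha\, e^{\beta\lambda_{\min}}\langle s|X^{\mathrm{par}(\boldsymbol x)}|0\rangle G(\beta)$ by at most $\varepsilon'$; hypothesis \eqref{eq:Paritycond} then forces the amplitude on $s=\mathrm{par}(\boldsymbol x)$ to exceed that on $s\neq\mathrm{par}(\boldsymbol x)$ (which is at most $\varepsilon'$) by a strict margin, so the decoder outputs the correct parity with probability strictly greater than $1/2$. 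Both the input state and the measurement basis are manifestly $\boldsymbol x$-independent, so this algorithm satisfies the hypotheses of Lemma~\ref{lem:parity}.

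The main obstacle I expect is the third display above, i.e.\ establishing the uniform-in-$\beta$ lower bound on $|\alpha\, e^{\beta\lambda_{\min}}G(\beta)|$. The small-$\beta$ regime follows straightforwardly from the leading Taylor term of $G(\beta)$, but the large-$\beta$ regime requires balancing the exponential growth of $G(\beta)$ (dominated by the extremal eigenvalue of $H_{\mathrm{path}}/2$) against the decay of $e^{\beta\lambda_{\min}}$, which may call for a spectral case split or, if more convenient, replacing the walk construction by a tensor-product variant whose per-qubit amplitudes on a target state equal $(1-e^{-\beta/(2N)})/2$ up to a $\boldsymbol x$-dependent sign, so that the stated bound is saturated term by term. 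Once the signal lower bound is secured, the remainder of the argument is standard bookkeeping of block-encoding errors.
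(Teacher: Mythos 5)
Your overall architecture is the same as the paper's: a path-graph Hamiltonian whose hopping terms flip a write qubit conditioned on the bits $x_j$, so that the exact propagator decouples the correct-parity endpoint $\ket{N}\ket{{\rm par}(\boldsymbol x)}$ from the wrong-parity one; then standard $(\varepsilon',\alpha)$-block-encoding bookkeeping shows that condition \eqref{eq:Paritycond} forces the correct outcome to be more likely. That walk/decoupling argument and the final error analysis are correct and essentially identical to the paper's.

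The genuine gap is exactly the step you flag as "the main obstacle": with the \emph{unweighted} path $H_{\boldsymbol x}=\tfrac12\sum_j(\ket{j{+}1}\bra{j}+\mathrm{h.c.})\otimes X^{x_j}$, you must prove $e^{\beta\lambda_{\min}}|G(\beta)|\geq|(1-e^{-\beta/(2N)})/2|^N$ uniformly in $\beta$, and your proposal does not do this. The Taylor bound $|G(\beta)|\geq(\beta/2)^N/N!$ handles small $\beta$, and the spectral limit handles $\beta\to\infty$, but the intermediate regime is left open (the factor $e^{\beta\lambda_{\min}}$ with $\lambda_{\min}=-\cos(\pi/(N+2))$ decays exponentially while the Taylor term only grows polynomially in degree $N$, so neither estimate covers all $\beta$). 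The paper closes this gap by choosing the \emph{binomially weighted} path, $\bra{j{+}1}H_{\boldsymbol x}\ket{j}=\sqrt{(N-j)(j+1)}/(4N)$ times $X^{x_j}$: for $\boldsymbol x=\boldsymbol 0$ this is exactly $\sum_i X_i/(4N)$ restricted to the symmetric subspace of $N$ qubits, so $e^{-\beta H_0}=\prod_i e^{-\beta X_i/(4N)}$ factorizes and the endpoint amplitude is computed in closed form to be \emph{exactly} $|(1-e^{-\beta/(2N)})/2|^N$ — no inequality needed. This is precisely the "tensor-product variant" you gesture at in your last paragraph; executing it is the missing idea, not an optional convenience. (A secondary, minor point: your single-query block-encoding claim belongs to the companion Lemma \ref{lem:fromU_xtoH_x} rather than to this lemma, and in the paper it is handled by a linear-combination circuit for $J_+U_{\boldsymbol x}+U_{\boldsymbol x}J_-$ that sandwiches one call to $U_{\boldsymbol x}$ between controlled ladder operators; asserting it for $\tfrac12(S_+U_{\boldsymbol x}+U_{\boldsymbol x}^\dagger S_-)$ without that circuit is another small unproved step.)
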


Finally, the missing link between Lemmas \ref{lem:parity} and \ref{lem:QuITEparity} is a sub-routine to query $H_{\boldsymbol x}$ given queries to $U_{\boldsymbol x}$  in Eq.\ \eqref{eq:def_U_x}. This is provided by the following lemma, proven in \cite[Sec. XVIII]{TLS2022Sup}.
\begin{lem}
A block-encoding oracle for $H_{\boldsymbol x}$ can be generated from a single query to $U_{\boldsymbol x}$ and $\mathcal{O}(N)$ $\boldsymbol x$-independent gates, for all $\boldsymbol{x}\in\{0,1\}^N$. (See \cite[Fig. S10]{TLS2022Sup} 
for circuit.)
\label{lem:fromU_xtoH_x}
\end{lem}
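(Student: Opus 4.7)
The plan is to construct an explicit block-encoding circuit for $H_{\boldsymbol x}$ that uses $U_{\boldsymbol x}$ exactly once as a sub-routine. The starting observation is that $U_{\boldsymbol x}$, as defined in Eq.\ \eqref{eq:def_U_x}, already has precisely the form of a SELECT operator in the linear-combination-of-unitaries (LCU) paradigm: it applies the $\boldsymbol x$-dependent Pauli $X^{x_j}$ on a target qubit controlled on the index $\ket j$ of an index register. Moreover, since $X^{x_j}$ is Hermitian for each $j$, the operator $U_{\boldsymbol x}$ is itself a Hermitian unitary, which makes the LCU construction particularly clean.

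The proof will proceed in three steps. First, I will recall (or first write down) the specific form of $H_{\boldsymbol x}$ chosen in the proof of Lemma \ref{lem:QuITEparity}, and rewrite it as an LCU of the $\boldsymbol x$-dependent unitaries $\{X^{x_j}\}_{j\in[N+1]}$ with $\boldsymbol x$-independent coefficients, so that the associated SELECT oracle is exactly $U_{\boldsymbol x}$. Second, I will exhibit the block-encoding as the standard LCU sandwich
\begin{equation*}
U_{H_{\boldsymbol x}} \;=\; \bigl(V^\dagger \otimes \mathds{1}\bigr)\; U_{\boldsymbol x}\; \bigl(V \otimes \mathds{1}\bigr)\, ,
\end{equation*}
where $V$ is an $\boldsymbol x$-independent state-preparation unitary on the index register encoding (the square roots of) the LCU coefficients, so that $\bra{0}V^\dagger U_{\boldsymbol x} V\ket{0} = H_{\boldsymbol x}$ (up to the natural LCU sub-normalization, which is $1$ by the assumption $\|H_{\boldsymbol x}\|\leq 1$). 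Third, I will verify that $V$ (and thus $V^\dagger$) can be implemented with $\mathcal{O}(N)$ two-qubit gates: the index register has dimension $N+1$, so any fixed state on it admits an exact $\boldsymbol x$-independent preparation with this gate count (indeed even $\mathcal{O}(\log N)$ using standard techniques, but $\mathcal{O}(N)$ amply suffices).

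Putting these pieces together immediately yields the stated resource count: a single invocation of $U_{\boldsymbol x}$ (as SELECT), plus $\mathcal{O}(N)$ $\boldsymbol x$-independent gates (from PREP and its inverse), and a single-qubit output ancilla on top of the register used for the index. The resulting circuit is the one drawn in Fig.\ \ref{fig:circuitHfromU}, and by Def.\ \ref{def:block_enc} (with $\varepsilon=0$, $\alpha=1$) it is a perfect block-encoding of $H_{\boldsymbol x}$, so it provides a block-encoding oracle as required by Def.\ \ref{def:block_enc_or} after the standard controlization on an extra qubit (which costs only $\mathcal{O}(1)$ additional $\boldsymbol x$-independent gates).

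The main obstacle is ensuring that the specific $H_{\boldsymbol x}$ used in Lemma \ref{lem:QuITEparity} admits an LCU form whose SELECT is literally $U_{\boldsymbol x}$, rather than some other $\boldsymbol x$-dependent oracle that would cost extra queries. If the Hamiltonian incorporates $\boldsymbol x$-independent hopping/potential structure (e.g., to encode parity through a walk), such terms can be absorbed entirely into the $\boldsymbol x$-independent $V$ and $V^\dagger$, leaving the dependence on $\boldsymbol x$ concentrated in the single SELECT call. Once this choice is consistent with the Hamiltonian defined in the proof of Lemma \ref{lem:QuITEparity}, the rest of the argument is a direct application of the LCU lemma and a gate count.
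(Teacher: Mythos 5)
There is a genuine gap, and it is exactly the ``main obstacle'' you flag at the end but do not resolve. In the standard PREP--SELECT--UNPREP sandwich $\bra{0}(V^{\dagger}\otimes\mathds{1})\,U_{\boldsymbol x}\,(V\otimes\mathds{1})\ket{0}$, the index register of the SELECT oracle is an \emph{ancilla}: it is prepared in $V\ket{0}$ and projected back onto $V\ket{0}$ at the end, so the block-encoded operator is $\sum_{j}|v_j|^2\,X^{x_j}$, which acts only on the single target qubit. But the Hamiltonian $H_{\boldsymbol x}$ of Lemma \ref{lem:QuITEparity} [Eq.\ \eqref{eq:H_x}] must act nontrivially on the register carrying the index $j$ itself, with off-diagonal hopping terms $\ket{j+1}\bra{j}_{\mathrm{s}}+\text{H.c.}$; that register is part of the \emph{system}, not an ancilla. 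These two roles are incompatible: once you sum over $j$ against fixed amplitudes $v_j$, all operator structure on the index register is destroyed, and no choice of $\boldsymbol x$-independent $V$ can reinstate it. Saying the hopping can be ``absorbed into $V$ and $V^{\dagger}$'' therefore does not work -- the output of your construction is a $2\times 2$ operator, not the $(2N+2)$-dimensional $H_{\boldsymbol x}$ needed for the parity reduction.

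The paper's route keeps your core intuition (isolate the $\boldsymbol x$-dependence in a single call to $U_{\boldsymbol x}$) but uses a different decomposition: it observes the operator identity $H_{\boldsymbol x}=\frac{1}{4N}\left(J_{+}U_{\boldsymbol x}+U_{\boldsymbol x}J_{-}\right)$, where $J_{\pm}$ are the ($\boldsymbol x$-independent) total angular-momentum ladder operators whose matrix elements $\sqrt{(N-j)(j+1)}$ reproduce the hopping amplitudes. Each $J_{\pm}/(2N)$ is a linear combination of $2N$ Paulis and is block-encoded by a standard ($\boldsymbol x$-independent) LCU on $\mathcal{O}(\log N)$ ancillas; the two products are then combined with one extra control qubit, so $U_{\boldsymbol x}$ is queried exactly once. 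If you want to salvage your write-up, the fix is to replace the PREP/UNPREP pair acting on the index register by these left and right block-encoded multiplications by $J_{+}$ and $J_{-}$, which act on the index register as genuine operators rather than as a state preparation.
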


\begin{proof}[Proof of Theorem \ref{th:lowerbound}]
With these three Lemmas, the proof of Theo. \ref{th:lowerbound} is straightforward. First, note that the left-hand side of Eq.\ \eqref{eq:Paritycond} decreases monotonically in $N$. Hence, for any fixed $(\beta,\varepsilon',\alpha)$, the largest $N\in\mathbb{N}$ that satisfies Eq.\ \eqref{eq:Paritycond} is $N=\lfloor 2\tilde{q}\rfloor$, with $\tilde{q}\in\mathbb{R}$ defined by Eq.\ \eqref{eq:tildeq}. This, together with Lemmas \ref{lem:QuITEparity} and \ref{lem:fromU_xtoH_x}, implies that  any $(\beta,\varepsilon',\alpha)$-QITE-primitive synthesized from queries to the parity oracle $U_{\boldsymbol x}$ provides a quantum circuit to determine the parity of any string $\boldsymbol x$ of length $\lfloor 2\tilde{q}\rfloor$. Then, by virtue of Lemma \ref{lem:parity}, the query complexity  $q_{\min}(\beta,\varepsilon',\alpha)\in\mathbb{N}$ of the primitive cannot be smaller than $q\geq \lceil N/2\rceil=\left\lceil \frac{\lfloor 2\tilde q\rfloor}{2}\right\rceil$. 
Note that this number is the nearest integer to $\tilde{q}$. Ergo, $q_{\min}(\beta,\varepsilon',\alpha)\geq\tilde{q}$.
\end{proof}

Finally, a comment on why Theo. \ref{th:lowerbound} does not hold for QITE primitives based on real-time evolution (RTE) oracles is useful at this point. The reason is that, by virtue of the RTE no-fast-forwarding theorem \cite{Berry2007,Berry2014,Berry2015a}, a single call to an RTE oracle suffices to find the parity with probability greater than $1/2$. Hence, it is the query complexity RTE oracles themselves what is lower-bounded by the parity considerations above, but not that of RTE-based QITE primitives. It is an open question whether similar bounds can be obtained for RTE-based QITE primitives by other arguments.

\section*{Data and code availabitity}

The datasets  generated and/or an-
alyzed during the current study are available from the corre-
sponding author on reasonable request. The programming codes utilized are available at \url{https://doi.org/10.5281/zenodo.5595705}.
\section*{Acknowledgments}
We thank Martin Kliesch fo{r h}elpful comments that led us to write \cite[Sec. VIII]{TLS2022Sup}. 
We acknowledge financial support fro{m 
t}he Serrapilheira Institute (grant number Serra-1709-17173), and 
the Brazilian agencies CNPq (PQ grant No. {305420/2018-6}) and FAPERJ (PDR10 E-26/202.802/2016 and JCN E-26/202.701/2018). MMT acknowledges also the Government of Spain (FIS2020-TRANQI and Severo Ochoa CEX2019-000910-S), Fundació Cellex, Fundació Mir-Puig, Generalitat de Catalunya (CERCA, AGAUR SGR 1381) and ERC AdG CERQUTE. 
\section*{Competing interest}

The authors declare no competing interests.

\section*{Authors contribution}
LA and TLS conceived the idea. MMT proved Theorem 3, and the lemmas related to it. TLS proved the other theorems with LA contribution.  SC produced the code and numerical results, which TLS revised. LA, TLS, and MMT wrote the manuscript, which  SC revised.

\appendix

\section{Operator-function design from imperfect oracles}
\label{sec:aprooximate_oracles}
{Both algorithms for operator-function design we present assume access to ideal oracles as given by Def. \ref{def:block_enc_or} and Def. \ref{def:real_t_or}. This is not the case in an experimental scenario where only approximate oracles are available. This leads to a total error in the algorithm that is proportional to the number of approximate oracle calls. The following lemma deals with that error for generic operator functions.
\begin{lem}(Primitives from imperfect oracles)
\label{lem:aprooximate_oracles}
Let a circuit with $q$ queries to an oracle $O$ for $H$ generate an $(\varepsilon,\alpha)$-block-encoding $V_{f(H)}$ of $f(H)$, for arbitrary $f$. If $O$ is substituted by $\tilde{O}$, with $\|O-\tilde{O}\|\leq \varepsilon_O$, the circuit generates an $(\tilde{\varepsilon},\alpha)$-block-encoding $\tilde{V}_{f(H)}$ of $f(H)$ with $\tilde{\varepsilon}\leq \varepsilon +q\, \varepsilon_O $.
\end{lem}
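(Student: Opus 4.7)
The plan is to establish the bound via the standard telescoping argument for accumulated errors in unitary circuits, followed by a triangle inequality for the block-encoding definition. First, I would write explicitly $V_{f(H)} = U_q\, O\, U_{q-1}\, O\, \cdots U_1\, O\, U_0$, where the $U_k$ are the ($\boldsymbol x$-independent) gates interleaved between the $q$ oracle calls (some of the $O$'s may be $O^\dagger$, but the argument is identical since $\|O^\dagger-\tilde O^\dagger\|=\|O-\tilde O\|\leq\varepsilon_O$). Substituting $\tilde O$ for $O$ yields $\tilde V_{f(H)}$ with the same $U_k$'s.

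The core step is the well-known telescoping bound for products of contractions: for unitaries $A_1,\dots,A_q$ and operators $B_1,\dots,B_q$ with $\|A_j\|,\|B_j\|\leq 1$, one has
\begin{equation*}
\bigl\|A_q A_{q-1}\cdots A_1 - B_q B_{q-1}\cdots B_1\bigr\| \leq \sum_{j=1}^{q} \|A_j - B_j\|,
\end{equation*}
which follows by writing the difference as a telescoping sum $\sum_j A_q\cdots A_{j+1}(A_j-B_j)B_{j-1}\cdots B_1$ and applying submultiplicativity of the spectral norm. Applying this with $A_j=\cdots U_k O U_{k-1}\cdots$ factors grouped so each differing slot is a single $O$ versus $\tilde O$ (and the $U_k$'s match and thus contribute zero), I obtain
\begin{equation*}
\bigl\|V_{f(H)} - \tilde V_{f(H)}\bigr\| \leq q\,\varepsilon_O.
\end{equation*}

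To finish, I invoke the block-encoding definition. By hypothesis $\|\alpha\,f(H)-\bra{0}V_{f(H)}\ket{0}\|\leq\varepsilon$. Since projecting onto $\ket{0}_{\mathcal A}$ is norm non-increasing, the bound above implies $\|\bra{0}V_{f(H)}\ket{0}-\bra{0}\tilde V_{f(H)}\ket{0}\|\leq q\,\varepsilon_O$. Then the triangle inequality gives
\begin{equation*}
\bigl\|\alpha\,f(H)-\bra{0}\tilde V_{f(H)}\ket{0}\bigr\| \leq \varepsilon + q\,\varepsilon_O \eqqcolon \tilde\varepsilon,
\end{equation*}
which is precisely the statement that $\tilde V_{f(H)}$ is an $(\tilde\varepsilon,\alpha)$-block-encoding of $f(H)$.

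The argument is essentially mechanical, so the only subtlety worth double-checking is bookkeeping: ensuring the same $\boldsymbol x$-independent gates appear in both products so that each "difference slot" contains exactly one oracle call (not a composite of several), and handling inverse-oracle queries on the same footing. Both are routine because $\|O-\tilde O\|=\|O^\dagger-\tilde O^\dagger\|$ and the telescoping identity treats every slot symmetrically. No fine analytic properties of $f$ are used — the lemma holds for arbitrary $f$ precisely because the error accumulation happens at the circuit level, before any truncation-error considerations for $f$ enter.
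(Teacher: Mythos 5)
Your proof is correct and follows essentially the same route as the paper, which simply asserts that "straightforward matrix multiplication" gives $\|V_{f(H)}-\tilde V_{f(H)}\|=\mathcal{O}(q\,\varepsilon_O)$ and then applies the triangle inequality; your telescoping argument is precisely the rigorous content of that assertion. In fact your version is slightly cleaner, since the exact bound $q\,\varepsilon_O$ for products of unitaries dispenses with the paper's unnecessary caveat $q\,\varepsilon_O\ll 1$.
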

 
 \begin{proof}
 For $q\,\varepsilon_O\ll 1$, straightforward matrix multiplication shows that the total error is $||V_{f(H)}-\tilde{V}_{f(H)}||=\mathcal{O}(q\,\varepsilon_O)$. This implies  $||f(H)-\bra{0}\tilde{V}_{f(H)}\ket{0}||\leq \varepsilon_P+q\,\varepsilon_O$ by virtue of the triangle inequality. This concludes the proof.  
\end{proof}

As an exemplary application of Lemma \ref{lem:aprooximate_oracles}, we calculate the gate complexity $g_{\tilde{O}_{2}}$ required to implement an approximate real-time evolution oracle $\tilde{O}_{2}$ for QITE primitive $P_2$ by Trotter-Suzuki-like simulation methods. To attain a  total error $\varepsilon^{\prime}$, we choose $\varepsilon=\varepsilon^{\prime}/2$ and $\varepsilon_O=\varepsilon^\prime/(2q)$.  For real time $t$, the gate complexity of state-of-the-art algorithms \cite{campbell2019,COS19} based on first-order product formulae is $\mathcal{O}(t^2/\varepsilon_O)$. Substituting for $\varepsilon_O$, and then for $t$ and $q$ with the real time and query complexity given in Theorem \ref{teoQuITEberealtime}, we obtain the oracle gate complexity $g_{\tilde{O}_{2}}=\mathcal{O}\left(\frac{\pi^2\beta}{\gamma}\frac{\ln(8/\varepsilon^{\prime})}{2\,\varepsilon^\prime (1+\gamma/\beta)}\right)$.
This scaling is exponentially worse in $1/\varepsilon^{\prime}$} than what we would get if we simulated $O_{2}$ with more sophisticated real-time evolution algorithms \cite{Berry2015a,BCCKS15,low2017optimal,Low2019hamiltonian,gilyen2019}. However, those algorithms assume more powerful oracles, which require significantly more ancillary qubits. The fact that $P_2$ can be synthesised with only 1 ancillary qubit throughout and with no qubitization makes it appealing for near-term implementations on intermediate-sized quantum hardware.
\section{Post-selection probability propagation onto output state error}
\label{app:post_prob_prop} 
{Since $\alpha\,F_\beta(H)$ is not unitary, the (operator-norm) error $\varepsilon^{\prime}$ in its approximation by an $(\alpha,\varepsilon')$-block-encoding gets amplified at the post-selection due to state renormalization. The following allows us to control the error in the output state.
\begin{lem}(Error propagation from block encoding to output state)
\label{lem:error_prop}
Let $p_{\Psi}(\beta, \alpha)\leq1$ be the post-selection probability of an $(\alpha,\varepsilon')$-block-encoding $U_{F_\beta(H)}$ of $F_\beta(H)$ on an input state $\ket{\Psi}\in\mathbb H_\mathcal{S}$. Then, if $\varepsilon^{\prime}\leq\,\varepsilon\, \sqrt{p_{\Psi}(\beta, \alpha)}/2$ and $\varepsilon\ll1$, the output-state trace-distance error is $\mathcal{O}(\varepsilon)$.
\end{lem}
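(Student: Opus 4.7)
Proof plan for Lemma~\ref{lem:error_prop}.

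The strategy is to compare the two (normalized) states that result from applying, respectively, the ideal operator $\alpha F_\beta(H)$ and the block operator $\tilde A := \bra{0} U_{F_\beta(H)} \ket{0}$ to $\ket{\Psi}$, and then post-selecting. The key observation is that the overall sub-normalization factor $\alpha$ cancels after renormalization, so the ``ideal'' post-selected state is literally $\ket{\psi_{\rm id}} := A\ket{\Psi}/\|A\ket{\Psi}\|$ where $A := \alpha F_\beta(H)$, which coincides with $F_\beta(H)\ket{\Psi}/\|F_\beta(H)\ket{\Psi}\|$. The actual post-selected state is $\ket{\psi_{\rm act}} := \tilde A\ket{\Psi}/\|\tilde A\ket{\Psi}\|$. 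Thus the problem reduces to bounding the distance between two renormalizations of two close-by, unnormalized vectors.

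The first step is to propagate the operator-norm error to a vector-norm error on the unnormalized states. By Definition~\ref{def:block_enc}, $\|A - \tilde A\| \leq \varepsilon'$, so
\begin{equation}
\bigl\|A\ket{\Psi} - \tilde A\ket{\Psi}\bigr\| \leq \varepsilon' \ .
\end{equation}
By definition, $\|A\ket{\Psi}\| = \sqrt{p_\Psi(\beta,\alpha)}$, and by the reverse triangle inequality $\|\tilde A\ket{\Psi}\| \geq \sqrt{p_\Psi(\beta,\alpha)} - \varepsilon'$. The second step is the standard bound on renormalization: writing $\ket{\phi_1} = A\ket{\Psi}$, $\ket{\phi_2} = \tilde A\ket{\Psi}$ and decomposing
\begin{equation}
\frac{\ket{\phi_1}}{\|\ket{\phi_1}\|} - \frac{\ket{\phi_2}}{\|\ket{\phi_2}\|} = \frac{(\ket{\phi_1}-\ket{\phi_2})\|\ket{\phi_2}\| + \ket{\phi_2}(\|\ket{\phi_2}\|-\|\ket{\phi_1}\|)}{\|\ket{\phi_1}\|\|\ket{\phi_2}\|} \ ,
\end{equation}
the triangle inequality plus the reverse triangle inequality give
\begin{equation}
\bigl\|\, \ket{\psi_{\rm id}} - \ket{\psi_{\rm act}}\,\bigr\| \leq \frac{2\,\varepsilon'}{\sqrt{p_\Psi(\beta,\alpha)}} \ .
\end{equation}

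Plugging in the hypothesis $\varepsilon' \leq \varepsilon\sqrt{p_\Psi(\beta,\alpha)}/2$ yields $\|\ket{\psi_{\rm id}} - \ket{\psi_{\rm act}}\| \leq \varepsilon$. The third and final step is to convert this Euclidean distance between pure-state vectors into a trace distance between density operators: using that for unit vectors $\|\ket{u}-\ket{v}\|^2 = 2(1-\mathrm{Re}\langle u|v\rangle)$ and $\tfrac12\|\ketbra{u}{u}-\ketbra{v}{v}\|_1 = \sqrt{1-|\langle u|v\rangle|^2}$, a direct Taylor expansion around $\langle u|v\rangle=1$ gives a trace-distance error in $\mathcal O(\varepsilon)$ whenever $\varepsilon\ll 1$, which is the regime assumed. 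There is no real obstacle here; the only mild care is to ensure $\sqrt{p_\Psi(\beta,\alpha)}$ is bounded away from zero (guaranteed by the hypothesis $\varepsilon'\leq\varepsilon\sqrt{p_\Psi}/2$ with $\varepsilon\ll 1$, so that $\|\ket{\phi_2}\|>0$ and the denominators above are well-defined).
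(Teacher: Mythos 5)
Your proposal is correct and follows essentially the same route as the paper's proof in App.~\ref{app:post_prob_prop}: bound the unnormalized-vector error by $\varepsilon'$, propagate it through renormalization to get an $l_2$-distance of order $2\varepsilon'/\sqrt{p_\Psi(\beta,\alpha)}$, and then convert to trace distance via $\|\ketbra{u}{u}-\ketbra{v}{v}\|_{\tr}=\sqrt{1-|\braket{u}{v}|^2}\leq\|\ket{u}-\ket{v}\|$. The only cosmetic difference is that your renormalization step uses an exact algebraic decomposition where the paper uses a Taylor expansion (so your bound even avoids the $\mathcal O(\varepsilon'^2)$ correction term), but this does not constitute a different approach.
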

 \begin{proof}
By definition, $\|\braket{0|\ U_{F_\beta(H)}}{0} - \alpha\, F_\beta(H) \| \leq \varepsilon'$. 
Then, for some $\ket{\Xi}\in\mathbb H_S$, with $\|\ket{\Xi}\|\leq\varepsilon'$, it is
\begin{subequations}
\begin{equation}
\braket{0|\ U_{F_\beta(H)}}{0}\ \ket{\Psi}=\alpha\, F_\beta(H)\ket{\Psi}+\ket{\Xi}
\end{equation}
and, so, for some $\epsilon\in\mathbb{R}$, with $|\epsilon|\leq\|\ket{\Xi}\|\leq\varepsilon'$, we get
\begin{equation}
\label{eq:explicit_nonperfect_blockenc}
\|\braket{0|\ U_{F_\beta(H)}}{0}\ \ket{\Psi}\|	=\alpha\,\|F_\beta(H)\ket{\Psi}\|+\epsilon.
\end{equation}
\end{subequations}
Next, we Taylor-expand the output state in terms of $\varepsilon$. To that end, note first that 
\begin{multline}
\frac{\braket{0|\ U_{F_\beta(H)}}{0}\ \ket{\Psi}}{\|\braket{0|\ U_{F_\beta(H)}}{0}\ \ket{\Psi}\|}=\frac{\alpha\, F_\beta(H)\ket{\Psi}+\ket{\Xi}}{\alpha\,\|F_\beta(H)\ket{\Psi}\|+\epsilon}\\
 =\frac{F_\beta(H)\ket{\Psi}}{\|F_\beta(H)\ket{\Psi}\|} \left(1 - \frac{\epsilon}{\alpha\,\|F_\beta(H)\ket{\Psi}\|}\right)\\
 + \frac{\ket{\Xi}}{\alpha\,\|F_\beta(H)\ket{\Psi}\|} + \mathcal O\left(\frac{\varepsilon'\,\ket{\Xi}}{\alpha^2\,\|F_\beta(H)\ket{\Psi}\|^2}\right) \ .
\label{eq:actualvector}
\end{multline}
So, the output state-vector error is upper-bounded as
\begin{align}
\left\|\frac{\braket{0|\ U_{F_\beta(H)}}{0}\ \ket{\Psi}}{\|\braket{0|\ U_{F_\beta(H)}}{0}\ \ket{\Psi}\|}\right.-&\left.\frac{F_\beta(H)\ket{\Psi}}{\|F_\beta(H)\ket{\Psi}\|}\right\|  \label{eq:errorvector}\\  \leq \frac{2\,\varepsilon'}{\alpha\,\|F_\beta(H)\ket{\Psi}\|}&+\mathcal O \left(\frac{\varepsilon'^2}{\alpha^2\,\|F_\beta(H)\ket{\Psi}\|^2}\right). \nonumber
\end{align}
Hence, if $\varepsilon'\leq\varepsilon\,\alpha\,\|F_\beta(H)\ket{\Psi}\|/2=\varepsilon\, \sqrt{p_{\Psi}(\beta, \alpha)}/2$, the error in  $l_2$-norm distance of the output state vector is at most $\varepsilon+ \mathcal O(\varepsilon^2)$, which equals $\mathcal O(\varepsilon)$ for $\varepsilon\ll1$. 

With the $l_2$-norm distance between the two state vectors, we can upper-bound the trace distance between their corresponding rank-1 density operators using well-known inequalities.  First, note that the $l_2$-norm distance between two arbitrary normalised vectors $\ket\phi$ and $\ket\psi$ can be expressed as
\begin{equation}
\|\ket\phi-\ket\psi\| = \sqrt2\sqrt{1-{\rm Re}\braket\psi\phi}.
\label{eq:kets2}
\end{equation}
Second, recall that the trace distance 
\begin{equation}
\|\ket\phi\bra\phi-\ket\psi\bra\psi\|_{\tr} \coloneqq \frac12 \tr\sqrt{\left(\ket\phi\bra\phi-\ket\psi\bra\psi\right)^2} \ .
\label{eq:tracedist}
\end{equation}
between rank-1 density operators $\ket\phi\bra\phi$ and $\ket\psi\bra\psi$ can be written in terms of the overlap as 
\begin{equation}
\left\|\ket\phi\bra\phi-\ket\psi\bra\psi\right\|_{\tr}=\sqrt{1-|\braket\psi\phi|^2} \ .
\label{eq:trdistmatrix}
\end{equation}
Then, note that, for all $\ket\phi$ and $\ket\psi$, the RHS of Eq.\ \eqref{eq:kets2} upper-bounds the RHS of Eq.\ \eqref{eq:trdistmatrix}. Hence, $\|\ket\phi\bra\phi-\ket\psi\bra\psi\|_{\tr}\leq\|\ket\phi-\ket\psi\|$. This, together with Eq.\ \eqref{eq:errorvector} gives the promised trace-norm error of the output state.
 \end{proof}}

\section{Optimal sub-normalisation for QITE Primitive 2}
\label{app:gamma}

In this appendix, we prove that, for $\varepsilon^{\prime}\ll1$, 
\begin{equation}
\label{eq:gamma_opt_approx}
\gamma_{\kappa}(\beta)\coloneqq\frac{\beta}{2}\left(\sqrt{1+\frac{2}{\mu_\kappa\,\beta}}-1\right).
\end{equation}
gives approximately the optimal value for the subnormalization of $P_2$, such that the overall query complexity
given by Eq.\ \eqref{eq:Qcomp} is minimized. Notice that, by diminishing the value of $\gamma$ we increase the success probability of $P_2$, decresing the number of times it needs to be realized. At the same time, it leads to an increment on the query complexity of $P_2$ given by Eq.\ \eqref{eq:real_time_or}. Therefore, the optimal subnormalization is a tradeoff between these two contributions.

From Eqs. \eqref{eq:Qcomp} and \eqref{eq:real_time_or}, given $\beta$, $\varepsilon$, and the master algorithm type $\kappa$, we get that the optimal $\gamma$ minimizes
\begin{equation}
\begin{split}
 Q_\kappa(\beta,\varepsilon,\gamma)=\frac{e^{2\mu_\kappa\gamma}}{\|F_\beta(H)\ket{\Psi}\|^{2\mu_\kappa}}&\left(\frac{\beta}{\gamma}+1\right)\\
 &\bigg[\ln\bigg(\frac{8}{\|F_\beta(H)\ket{\Psi}\|\varepsilon}\bigg)+g\gamma\bigg],
 \end{split}
\end{equation}
with $g=1$ introduced for convenience.
Here we have used 
$p_{\Psi}(\beta,\gamma)=e^{-2\gamma}\|F_\beta(H)\ket{\Psi}\|$  and $\varepsilon'=p_{\Psi}(\beta,\gamma)\varepsilon/2$.

Instead, let $\gamma=\gamma_\kappa (\beta)$ be such that $Q_\kappa(\beta,\varepsilon,\gamma)|_{g=0}$ is minimized. By solving $\frac{\partial}{\partial \gamma}Q_\kappa(\beta,\varepsilon,\gamma)\big|_{g=0}=0$ we obtain $\gamma_\kappa (\beta)$ given by Eq.\ \eqref{eq:gamma_opt_approx}. In order to prove that that expression is a good approximation for the actual optimal value $\gamma^{(\text{opt})}_{\kappa}(\beta)$, first notice that, with $g=1$, it is easy to verify that  $Q_\kappa(\beta,\varepsilon,\gamma)|_{g=1}>Q_\kappa(\beta,\varepsilon,\gamma_\kappa(\beta))|_{g=1}$ if $\gamma>\gamma_\kappa(\beta)$. Therefore,   $\gamma^{(\text{opt})}_{\kappa}(\beta)<\gamma_{\kappa}(\beta)$. Defining $\Delta Q\coloneqq Q_\kappa(\beta,\varepsilon,\gamma^{(\text{opt})}_{\kappa}(\beta))|_{g=1}-Q_\kappa(\beta,\varepsilon,\gamma_\kappa(\beta))|_{g=1}$, a Taylor expansion of $Q_\kappa(\beta,\varepsilon,\gamma)|_{g=1}$ at $\gamma=\gamma_{\kappa}(\beta)$ shows that, up to first order in $\Delta\gamma=\gamma^{(\text{opt})}_{\kappa}(\beta)-\gamma_{\kappa}(\beta)$, we have
\begin{equation}
 \frac{\Delta Q}{Q_\kappa(\beta,\varepsilon,\gamma_\kappa(\beta))|_{g=1}}\simeq\frac{\Delta\gamma}{\ln\bigg(\frac{8e^{2\gamma_\kappa(\beta)}}{\|F_\beta(H)\ket{\Psi}\|\varepsilon}\bigg)},
\end{equation}
which is much smaller than $1$ provided that $\varepsilon'\ll 1$, since $\gamma_\kappa(\beta)<1$ and, therefore, $\Delta\gamma<1$.


\section{Proof of Theorem \ref{teo_correct_master}}
\label{app:theo_fragmented}
Let us first prove a convenient auxiliary lemma. Ideally, the input and output states of the $l$-th fragment of should be $\ket{\Psi_{l-1}}=\frac{F_{\beta_{l-1}}(H)\ket{\Psi}}{\|F_{\beta_{l-1}}(H)\ket{\Psi}\|}$ and $\ket{\Psi_{l}}=\frac{F_{\Delta\beta_{l}}(H)\ket{\Psi_{l-1}}}{\|F_{\Delta\beta_{l}}(H)\ket{\Psi_{l-1}}\|}=\frac{F_{\beta_{l}}(H)\ket{\Psi}}{\|F_{\beta_{l}}(H)\ket{\Psi}\|}$, respectively. However, the actual operator that each $P_{\Delta\beta_l,\varepsilon^{\prime}_l,\alpha_l}$ perfectly block-encodes is 
\begin{equation}
\alpha_l\tilde{F}_{\Delta\beta_{l}}(H)\coloneqq\braket{0|\ U_{F_{\Delta\beta_{l}}(H)}}{0}=\alpha\,F_{\Delta\beta_{l}}(H)+E_l,
\end{equation}
for some $E_l$ on $\mathbb H_\mathcal{S}$ with $||E_l||=:\varepsilon_l'$. Hence, in analogy to Eq.\ \eqref{eq:explicit_nonperfect_blockenc}, the actual input and output states are $\ket{\tilde{\Psi}_{l-1}}=\ket{{\Psi}_{l-1}}+\ket{\Xi_{l-1}}$ and $\ket{\tilde{\Psi}_{l}}=\ket{{\Psi}_{l}}+\ket{\Xi_{l}}$, for some ``error states'' $\ket{\Xi_{l-1}}$ and $\ket{\Xi_{l}}$, respectively. (Note that $\ket{\Xi_{0}}=0$.)
The following lemma controls the output-state error in terms of the errors in the input state and block encoding.

\begin{lem}(Error propagation from input state and block encoding to output state)
\label{lem:error_prop_with_input_state}
For all $l\in[r]$, let $\varepsilon_{l-1}\coloneqq\|\ket{\Xi_{l-1}}\|$ and $\varepsilon_l'$ be the input-state and block-encoding errors, respectively; and $\varepsilon_l$ the tolerated output-state error. If $\varepsilon_{l-1}+\varepsilon_{l}'\leq \varepsilon_{l}\,\|\alpha_l\, F_{\Delta\beta_{l}}(H)\ket{\Psi_{l-1}}\|/2$ and $\varepsilon_{l}\ll1$, then $\|\ket{{\Xi}_{l}}\|=\mathcal{O}(\varepsilon_l)$.
\end{lem}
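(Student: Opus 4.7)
} The strategy is to mimic the Taylor-expansion argument of Lemma \ref{lem:error_prop}, but now accounting simultaneously for the error $\varepsilon_{l-1}$ inherited from the previous fragment and the block-encoding error $\varepsilon'_l$ of the current one. The plan is to first write the unnormalised state produced by applying $\braket{0|\,U_{F_{\Delta\beta_l}(H)}}{0}$ to the actual (imperfect) input $\ket{\tilde\Psi_{l-1}}=\ket{\Psi_{l-1}}+\ket{\Xi_{l-1}}$, separate it into an ideal part plus a total error vector, then normalise and expand around the ideal case.

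First I would write
\begin{equation}
\braket{0|\,U_{F_{\Delta\beta_l}(H)}}{0}\,\ket{\tilde\Psi_{l-1}}=\alpha_l\,F_{\Delta\beta_l}(H)\ket{\Psi_{l-1}}+\ket{\tilde\Xi_l},
\end{equation}
with the combined error vector
\begin{equation}
\ket{\tilde\Xi_l}\coloneqq \alpha_l\,F_{\Delta\beta_l}(H)\ket{\Xi_{l-1}}+E_l\ket{\tilde\Psi_{l-1}}.
\end{equation}
Since $\|F_{\Delta\beta_l}(H)\|=1$ (the largest eigenvalue of $F_{\Delta\beta_l}(H)$ is $1$) and $\|\ket{\tilde\Psi_{l-1}}\|=1$ (the post-selected state of fragment $l-1$ is normalised), the triangle inequality and $\alpha_l\le1$ give
\begin{equation}
\|\ket{\tilde\Xi_l}\|\le \alpha_l\,\varepsilon_{l-1}+\varepsilon'_l\le \varepsilon_{l-1}+\varepsilon'_l.
\end{equation}

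Next I would normalise and Taylor-expand in $\|\ket{\tilde\Xi_l}\|/\|\alpha_l F_{\Delta\beta_l}(H)\ket{\Psi_{l-1}}\|$ exactly as in Eq.\ \eqref{eq:actualvector} of Lemma \ref{lem:error_prop}, obtaining
\begin{equation}
\bigl\|\ket{\tilde\Psi_l}-\ket{\Psi_l}\bigr\|\le \frac{2\,\|\ket{\tilde\Xi_l}\|}{\|\alpha_l\,F_{\Delta\beta_l}(H)\ket{\Psi_{l-1}}\|}+\mathcal{O}\!\left(\frac{\|\ket{\tilde\Xi_l}\|^2}{\|\alpha_l\,F_{\Delta\beta_l}(H)\ket{\Psi_{l-1}}\|^2}\right).
\end{equation}
Combining this with the bound on $\|\ket{\tilde\Xi_l}\|$ and the hypothesis $\varepsilon_{l-1}+\varepsilon'_l\le \varepsilon_l\,\|\alpha_l F_{\Delta\beta_l}(H)\ket{\Psi_{l-1}}\|/2$ yields $\|\ket{\Xi_l}\|=\|\ket{\tilde\Psi_l}-\ket{\Psi_l}\|\le \varepsilon_l+\mathcal{O}(\varepsilon_l^2)=\mathcal{O}(\varepsilon_l)$ for $\varepsilon_l\ll 1$, which is the claim.

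The main obstacle I anticipate is bookkeeping rather than conceptual: one has to justify that $\ket{\tilde\Psi_{l-1}}$ may indeed be taken as a unit vector (so that the factor multiplying $E_l$ is exactly $1$, not $1+\varepsilon_{l-1}$), and that the Taylor expansion is controlled by the \emph{same} denominator $\|\alpha_l F_{\Delta\beta_l}(H)\ket{\Psi_{l-1}}\|$ that appears in the stated hypothesis, rather than by the perturbed norm $\|\alpha_l F_{\Delta\beta_l}(H)\ket{\tilde\Psi_{l-1}}+E_l\ket{\tilde\Psi_{l-1}}\|$. Both points are routine: the former follows from the fact that the output of the previous fragment's post-selection is automatically renormalised, and the latter from absorbing the $\mathcal{O}(\varepsilon_l^2)$ correction in the sub-leading term already present in the expansion.
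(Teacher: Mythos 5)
Your proposal is correct and follows exactly the route the paper takes: the paper's own proof of Lemma~\ref{lem:error_prop_with_input_state} is simply the one-line remark that it proceeds as in Lemma~\ref{lem:error_prop} with an approximate input state, and your decomposition $\ket{\tilde\Xi_l}=\alpha_l F_{\Delta\beta_l}(H)\ket{\Xi_{l-1}}+E_l\ket{\tilde\Psi_{l-1}}$ followed by the triangle inequality (using $\|F_{\Delta\beta_l}(H)\|=1$, $\alpha_l\leq1$, and normalisation of the post-selected input) and the same Taylor expansion is precisely the intended fleshing-out of that remark. The two bookkeeping points you flag are handled exactly as you suggest, so there is no gap.
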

\begin{proof}
Similar to the proof of Lemma \ref{lem:error_prop} but where also the input state is approximate.
\end{proof}

Note that the proof of Lemma \ref{lem:error_prop} (and therefore also that of Lemma \ref{lem:error_prop_with_input_state}) makes no use of the specific form of $F_\beta$ (or $F_{\Delta\beta_{l}}$). Consequently, both lemmas hold for $(\alpha,\varepsilon')$-block-encodings of any operator function, not just  the QITE propagator. We are now in a good position to prove Theorem \ref{teo_correct_master}.

\begin{proof}[Proof of Theorem \ref{teo_correct_master}]
We begin by proving soundness. The proof consists of showing that Eqs. \eqref{eq:F_errors} imply that Lemma \ref{lem:error_prop_with_input_state} holds for all $l\in[r]$ and gives $\varepsilon_r\coloneqq\|\ket{{\Xi}_{r}}\|=\mathcal{O}(\varepsilon)$. To see this, apply the lemma  from $l=1$ till $l=r$, with $\varepsilon_0=0$, and iteratively use the property 
\begin{eqnarray}
\nonumber
p_{\Psi_{l-1}}(\Delta\beta_{l})&=&\|F_{\Delta\beta_{l}}(H)\ket{\Psi_{l-1}}\|\\
\nonumber
&=&\frac{\|F_{\beta_{l}}(H)\ket{\Psi}\|}{\|F_{\beta_{l-1}}(H)\ket{\Psi}\|}\\
\label{eq:iterative_property}
&=&\frac{p_{\Psi}(\beta_{l})}{p_{\Psi}(\beta_{l-1})}.
\end{eqnarray}

We next prove complexity. To this end, we must show the validity of Eq.\ \eqref{eq:F_overall_query}. By Def. \ref{def:masters}, the overall query complexity of Algorithm \ref{alg:Fragmented-QuITE-algorithm} is the sum over the query complexities of each primitive applied. Each primitive $P_{\Delta\beta_l,\varepsilon^{\prime}_l,\alpha_l}$ has complexity $q(\Delta\beta_l,\varepsilon^{\prime}_l, \alpha_l)$. Hence, we must show that the average number of times $n_l$ that each $P_{\Delta\beta_l,\varepsilon^{\prime}_l,\alpha_l}$ is run is given by $\frac{p_{\Psi}(\beta_{l-1})}{p_{\Psi}(\beta)\prod_{k=l}^r\alpha^2_k}$. To see this, note that, by definition, it is $n_r\coloneqq\big({\alpha_r}^2\, p_{\Psi_{r-1}}(\Delta\beta_{r})\big)^{-1}$, $n_{r-1}\coloneqq n_r\times\big({\alpha_{r-1}}^2\, p_{\Psi_{r-2}}(\Delta\beta_{r-1})\big)^{-1}$, $\hdots$, and  $n_{1}\coloneqq n_2\times\big({\alpha_{1}}^2\, p_{\Psi}(\Delta\beta_{1})\big)^{-1}$. Then, use the latter together with Eq.\ \eqref{eq:iterative_property} to get $n_l=\frac{p_{\Psi}(\beta_{l-1})}{p_{\Psi}(\beta)\prod_{k=l}^r\alpha^2_k}$ for all $l\in[r]$.
\end{proof}
Interestingly, we note that the only specific detail about $F_\beta$ that the proof of Theorem \ref{teo_correct_master} uses is the fact that $F_\beta(H)=\prod_{l=1}^{r} F_{\Delta\beta_l}(H)$ (this is required for Eq.\ \eqref{eq:iterative_property} to hold). Consequently, Theorem \ref{teo_correct_master} applies not only to fragmented QITE but actually also to the fragmentation of any operator function in terms of suitable factors.

\section{Validity of the query complexity bounds of QITE {primitives} for low beta}
\label{app:P1query}
 
As discussed in Secs. \ref{sec:master_alg} and \ref{sec:GS_sampling}, the first steps of fragmented QITE involve small inverse temperatures, often with $\Delta\beta_1<1$. However, Eqs. \eqref{eq:BEquery} and \eqref{eq:real_time_or} are in principle only asymptotic upper bounds for the actual query complexities. Hence, it is licit to question how valid they are to access the complexity of the first steps of Alg. \ref{alg:Fragmented-QuITE-algorithm}. In this appendix, we discuss the validity and tightness  for all $\beta>0$ of 
 \begin{equation}\label{eq:tildeq1}
  \tilde{q}_1(\beta,\varepsilon^{\prime})\coloneqq2\left(\frac{e\beta}{2}+\frac{\ln\left(1/\varepsilon'\right)}{\ln\left(e+2\ln(1/\varepsilon')/e\beta\right)}\right)
 \end{equation}
 and 
 \begin{equation}
  \tilde{q}_2(\beta,\varepsilon^{\prime})\coloneqq4\left(\beta/\gamma+1\right)\ln(4/\varepsilon')
 \end{equation}
(without multiplicative factors implied by the big-$\mathcal{O}$ notation) as exact expressions for the query complexities of primitives 1 and 2, respectively. These  formulas are the ones used in our numeric experiments. First of all, notice that although these  expressions are  continuous
functions, the actual query complexities used by Algs. \ref{alg:fBlockEncode} and \ref{alg:frealtime} take only even integer values. Thus, we take the value of the query
for each round of fragmentation as $2\lceil\tilde{q}_k(\beta,\varepsilon^{\prime})/2\rceil$, $k=1,2$.
In particular, for any value of $\beta$ such that $0<\tilde{q}_1(\beta,{\varepsilon')}\leq2$ or $0<\tilde{q}_2(\beta,{\varepsilon')}\leq2$
at least two queries to the corresponding oracle is used.

For $P_2$ there is actually no concern because the reasoning of Sec.\ \ref{sec:QSP_RTE} that led to Eq.\ \eqref{eq:real_time_or} is valid for any
beta. This is because $q_2=2\lceil\tilde{q}_2(\beta,\varepsilon^{\prime})/2\rceil$  is the exact expression furnished by Lemma 37 in Ref.\ \cite{vanApeldoorn2020quantumsdpsolvers}, used to prove Lemma\ \ref{rt_function}, for $\alpha F_\beta(\lambda)$. Therefore, although this formula is not strictly tight, because it comes from successive approximations \cite{vanApeldoorn2020quantumsdpsolvers}, it is exact and valid for any $\beta>0$ in the sense that making exactly $2\lceil\tilde{q}_2(\beta,\varepsilon^{\prime})/2\rceil$ queries to the oracle ensures that the error is below the tolerated. {Consequently, the value $2\lceil\tilde{q}_2(\beta,\varepsilon^{\prime})/2\rceil$ is an upper bound for the necessary number of queries without any multiplicative factor that could be implied by the big-$\mathcal{O}$ notation.}

$P_1$, on the other hand, requires a further analysis since, according to Eq.\ \eqref{eq:BEquery}, ${q}_1(\beta,\varepsilon^{\prime})$ is known to be equal to $\tilde{q}_1(\beta,\varepsilon^{\prime})$ only in big-$\mathcal{O}$ notation and, particularly, for large $\beta$.  Next, we
numerically show that $\tilde{q}_1(\beta,{\varepsilon')}$ is, in fact,  an over-estimation of the actual $q_1(\beta,\varepsilon')$ needed to guarantee error $\varepsilon'$.

 We notice that the first inequality in Eq.\ \eqref{eq:error} gives a tighter upper bound for the query complexity. Therefore, to attain a target error $\varepsilon'$, it is enough that   $q_1= 2\left\lceil\tilde{q}_1(\beta,\varepsilon')/2\right\rceil$ queries yields to a truncation error which satisfies
\begin{equation}
\varepsilon_{\rm tr}\leq\varepsilon_{\rm up}^{({q}_1)}(\beta)=\frac{\beta^{{q}_1/2+1}}{2^{{q}_1/2}({q}_1/2+1)!}<\varepsilon'.\label{eq:firstbound}
\end{equation}
 As we explain next, Fig.\ \ref{fig:epsilonbounds} shows that it is, in fact, what  happens.  In Fig.\ \ref{fig:epsilonbounds}a) we show the interval of
$\beta$'s for which the value $\tilde{q}_1(\beta,\varepsilon')$
is between   $q_1=18$ and $q_1=20$. For any $\beta$ inside this interval, the number of queries made in $P_1$ will be $q_1=20$. However, in Fig.\ \ref{fig:epsilonbounds}b) when we look at the upper bound for the truncation error with $20$ queries, $\varepsilon_{\rm up}^{({q}_1=20)}$, in the same interval, we see that it is much smaller than the tolerated error $\varepsilon'$. Consequently, the truncation error $\varepsilon_{\rm tr}$, which is the actual error of $P_1$, is also below $\varepsilon'$. This is also observed in Figs.\ \ref{fig:epsilonbounds}c) and \ref{fig:epsilonbounds}d), where the interval of $\beta$'s for which the minimum number of $q_1=2$ queries is made and the corresponding $\varepsilon_{\rm tr}^{({q}_1=2)}$ in that interval are shown, respectively.   We tested for other values of $\varepsilon'$ ranging from $10^{-1}$ to $10^{-10}$ and the same behavior is observed. 
This shows that $\tilde{q}_1(\beta,\varepsilon')$
is valid as the query complexity of $P_1$ even for small values of $\beta$. Moreover, it actually overestimates the minimum query complexity for a given target error $\varepsilon'$.

\begin{figure}[t]
\begin{centering}
\includegraphics[width=0.8\columnwidth]{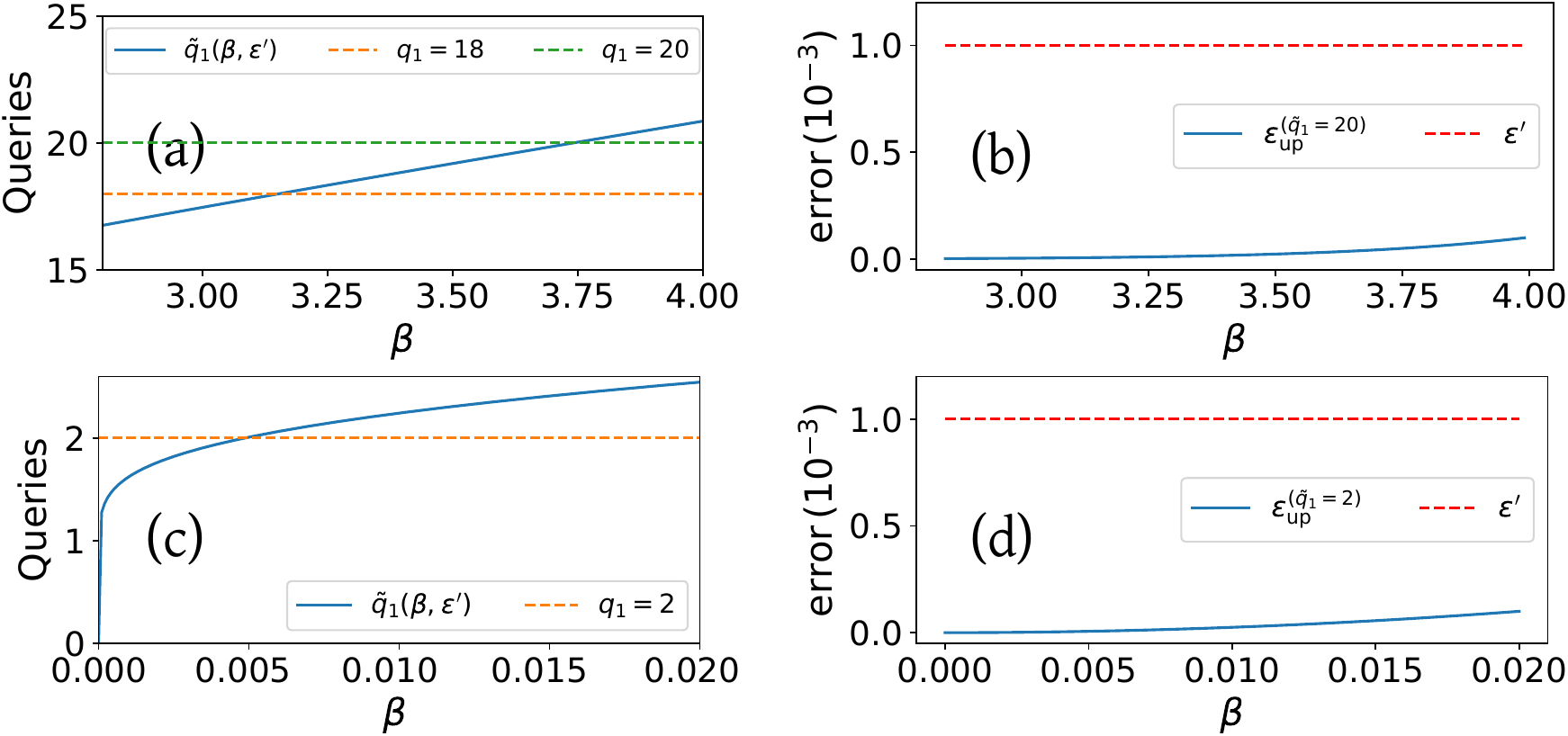}
\caption{\label{fig:epsilonbounds} (a) Zoom in on the query complexity function $\tilde{q}_1(\beta,\varepsilon')$ for fixed error $\varepsilon'=10^{-3}$. For any value of inverse temperature $\beta$ for which $18\leq\tilde{q}_1(\beta,\varepsilon')\leq 20$, $q_1=20$ queries are made in $P_1$.  (b) the corresponding tighter error bound $\varepsilon_{up}^{(q_1=20)}(\beta)$ from Eq.\ \eqref{eq:firstbound} for $q_1=20$ queries in the same interval of $\beta$, which is way below the tolerated error $\varepsilon'$. Parts (c) and (d) show the same thing for the minimum number of queries $q_1=2$, showing that $\varepsilon_{up}^{(q_1=2)}(\beta)$ is much smaller than $\varepsilon'$ for any $\beta$ such that $\tilde{q}_1(\beta,\varepsilon')\leq 2$.  {Therefore (see text), $2\lceil\tilde{q}_1(\beta,\varepsilon^{\prime})/2\rceil$ is a valid upperbound for the number of queries of P$_1$ even when $\beta$ is small.} }
\end{centering}
\end{figure}

\section{Fragmented QITE outperforms coherent QITE}
\label{app:analytical_evidence}
{Here we analytically show, for primitive $P_1$, that there exists an inverse temperature $\beta_c$ above which fragmented QITE outperforms coherent  QITE (based on quantum amplitude amplification) in terms of overall query complexity. Before the proof, it is useful to recall that the success probability (of any QITE primitive) is given by $p_{\Psi}(\beta)=\sum_\lambda |\braket{\lambda}{\Psi}|^2 e^{-2\beta(\lambda-\lambda_{\text{min}})}$. Note also that $p_{\Psi}(\beta)\geq |\braket{\lambda_{\text{min}}}{\Psi}|^2={o}^{2}$. Hence, 
\begin{equation}\label{eq:psuc}
 \sqrt{p_\Psi(\beta)}\geq{{o}},
\end{equation}
for all $\beta\geq 0$. Moreover,  $p_{\Psi}:[0,\infty)\rightarrow[{o}^{2},1]$ is a monotonically decreasing function. Therefore, it has an inverse function, which we denote as $p_{\Psi}^{-1}$.

\begin{proof}[Proof of Theorem \ref{thm:beta_c}]
{We assume here that Primitive $1$ is run using a number of queries equals to the upper-bound that guarantees a given error. A multiplicative factor would evenly affect the complexities of all master algorithms. For simplicity, we omit here the multiplicative factor (shown to be equal to $8$ in the proof of Theorem \ref{teoQuITEbe}) and take the complexity of $P_1$ directly as ${q}_1(\beta,\varepsilon^{\prime})=\frac{e\beta}{2}+\frac{\ln\left(1/\varepsilon'\right)}{\ln\left(e+2\ln(1/\varepsilon')/e\beta\right)}$.   This is also justified in the tightness analysis of App.\ \ref{app:P1query}. }

We need to show that there exists a schedule $S_2=\{\Delta\beta_1,\Delta\beta_2\}$ such that the average overall query complexities satisfy $Q_{S_2}(\beta,\varepsilon)\leq Q_{\text{coh}}(\beta,\varepsilon)$. According to Eqs. \eqref{eq:Qcomp} and \eqref{eq:F_overall_query}, this holds if
 \begin{equation}\label{eq:fragxcoh}
  \sum_{l=1}^{2} \frac{p_{\Psi}(\beta_{l-1})}{\left(p_{\Psi}(\beta)\right)^{1/2}}\, {q}_1(\Delta\beta_l,\varepsilon^{\prime}_l)\leq {q}_1(\beta,\varepsilon^{\prime}),
 \end{equation}
where $\varepsilon^{\prime}_l$  is given by Eq.\,\eqref{eq:F_errors}. 
It is useful to introduce the upper bound $\tilde{{q}}_1(\beta,\varepsilon')\coloneqq\frac{e\beta}{2}+\ln\left(\frac{1}{\varepsilon'}\right)$. (Note that $\tilde{{q}}_1(\beta,\varepsilon')>{q}_1(\beta,\varepsilon^{\prime})$ for all $\beta>0$.) Then, Eq. \eqref{eq:fragxcoh} holds if  
\begin{equation}\label{eq:fragxcoh2}
  \sum_{l=1}^{2} \frac{p_{\Psi}(\beta_{l-1})}{\left(p_{\Psi}(\beta)\right)^{1/2}}\, \tilde{q}_1(\Delta\beta_l,\varepsilon^{\prime}_l)\leq {q}_1(\beta,\varepsilon^{\prime}).
 \end{equation}
Next, we break this inequality into two inequalities, one for $\Delta\beta_1$ and another for $\Delta\beta_2$. Then, we  show that, under the theorem's assumptions, both inequalities can be satisfied.
 
More precisely, Eq.~\eqref{eq:fragxcoh2} is satisfied if the following two inequalities are simultaneously satisfied: 
\begin{equation}\label{eq:fragbetainv}
\frac{p_{\Psi}(\beta_{l-1})}{\left(p_{\Psi}(\beta)\right)^{1/2}}\, \tilde{q}_1(\Delta\beta_l,\varepsilon^{\prime}_l)\leq \frac{1}{2}{{q}_1}(\beta,\varepsilon^{\prime}),\, \text{ for } l=1,2 \ ;
 \end{equation}
and we construct an $S_2$ that fulfills this. First, substituting for $\tilde{q}_1$ {and ${q}_1$}, we find that each fragment must satisfy 
\begin{eqnarray}
\label{eq:deltabetal}
\nonumber
\Delta\beta_l&\leq& \frac{1}{2}\frac{\sqrt{p_{\Psi}(\beta)}}{p_{\Psi}(\beta_{l-1})}\Bigg(\beta+\frac{2}{e}{\frac{\ln\big(\frac{1}{\varepsilon'}\big)}{\ln\big(e+2\ln(1/\varepsilon')/e\beta\big)}}\Bigg)\\
&-&\frac{2}{e}\ln\left(\frac{1}{\varepsilon'_l}\right),\, \text{ for } l=1,2.
\end{eqnarray}
For consistency, the right-hand side of Eq.\ \eqref{eq:deltabetal} should be positive for each $l$, the most critical case being that of $l=1$, when the first term is less positive while the second term is more negative as compared with the case $l=2$. So it suffices to enforce positivity of the right-hand side  of Eq.\ \eqref{eq:deltabetal} for $\beta\geq\beta_c$ only for $l=1$. One can directly see that this is already satisfied by plugging the expression for $\beta_c$ into Eq.\ \eqref{eq:deltabetal} and using Eq.\ \eqref{eq:psuc}.

Because $\beta=\Delta\beta_1+\Delta\beta_2$, we are free to choose the size of only one fragment, say $\Delta\beta_1\equiv\beta_1$. Eq.\ \eqref{eq:deltabetal} imposes an upper on $\beta_1$ for $l=1$ and a lower bound for $l=2$. We first find a $\beta_1$ that satisfies the lower bound and then show that, for $\beta\geq\beta_c$, the upper bound is automatically satisfied. Using Eq.\ \eqref{eq:F_errors} and $\Delta\beta_2=\beta-\beta_1$, we re-write Eq.\ \eqref{eq:deltabetal} for $l=2$  as
\begin{eqnarray}
\label{eq:frag2}
\nonumber
 0&\leq&\beta_1+\bigg[\frac{\sqrt{p_\Psi(\beta)}}{2\,p_\Psi(\beta_{1})}-1\bigg]\beta
 \\ 
 \nonumber
&+&\frac{2}{e}\bigg[\frac{\sqrt{p_\Psi(\beta)}}{2\,p_\Psi(\beta_{1}){\,\ln{\big[}e+2\ln(1/\varepsilon')/e\beta{\big]}}}-1\bigg]\ln\bigg(\frac{1}{\varepsilon'}\bigg)\\
 &+&\frac{2}{e}\ln\bigg(\frac{1}{2\sqrt{p_\Psi(\beta_{1}})}\bigg).
\end{eqnarray}
Sufficient to satisfy this inequality is that each term on the right-hand side is positive. Since $\ln{\big[}e+2\ln(1/\varepsilon')/e\beta{\big]}>1$, the third term is always smaller than the second one. Therefore, requiring that 
\begin{equation}
\label{eq:prelim_condition}
\sqrt{p_\Psi(\beta)}\geq 2\,p_\Psi(\beta_{1})\ln{\big[}e+2\ln(1/\varepsilon')/e\beta{\big]}
\end{equation}
 ensures that both the second and the third terms are positive. This condition is in turn satisfied if $p_\Psi(\beta_{1})\leq \frac{\sqrt{p_\Psi(\beta)}}{2\,\ln{\big[}e+2\ln(1/\varepsilon')/e\beta{\big]}} $, which is equivalent  to demanding 
\begin{equation}
\label{eq:inv_condition}
\beta_1\geq p^{-1}_\Psi\big(\sqrt{p_\Psi(\beta)}/2\,{\ln{\big[}e+2\ln(1/\varepsilon')/e\beta{\big]}}\big).
\end{equation}
Notice that Eq. \eqref{eq:prelim_condition} also ensures that $\sqrt{p_\Psi(\beta_{1})}\leq1/2$ for $\beta\geq\beta_c$, due to $p_\Psi(\beta)\leq1/4$ by theorem assumption. This makes the last term in Eq.\ \eqref{eq:frag2} also positive. Now, because of Eq.\ \eqref{eq:psuc}, sufficient to satisfy Eq. \eqref{eq:inv_condition} is to demand that
  \begin{equation}\label{eq:beta1}
   \beta_1\geq p^{-1}_\Psi\left(\frac{{o}}{2}\frac{1}{{\ln[e+2\ln(2/{o}\,\varepsilon)/e\beta]}}\right).
  \end{equation}
This is our final lower bound on $\beta_1$. Its fulfillment guarantees Eq. \eqref{eq:inv_condition} and, therefore, also Eq.\ \eqref{eq:frag2}.
  
On the other hand, for $l=1$, Eq.\ \eqref{eq:deltabetal} can be re-written as
\begin{eqnarray}
\label{eq:fragment1}
\nonumber
 \beta_1&\leq &\frac{\sqrt{p_{\Psi}(\beta)}}{2}\beta-\frac{2}{e}\ln\left(4\right)\\
  &+&\frac{2}{e}\bigg(\frac{\sqrt{p_{\Psi}(\beta)}}{2\,{\ln{\big[}e+2\ln(1/\varepsilon')/e\beta{\big]}}}-1\bigg)\ln\Big(\frac{1}{\varepsilon'}\Big),
\end{eqnarray}
where we have used Eq.\ \eqref{eq:F_errors} again and $p_{\Psi}(\beta_0)=p_{\Psi}(0)=1$.
For $\beta_1$ satisfying Eq.\ \eqref{eq:beta1}, Eq. \eqref{eq:fragment1} is satisfied if 
\begin{eqnarray}
\label{eq:beta_bound}
\nonumber
\beta&\geq&\frac{2}{\sqrt{p_\Psi(\beta})} \bigg[{p^{-1}_\Psi\left(\frac{{o}}{2}\frac{1}{{\ln[e+2\ln(2/{o}\,\varepsilon)/e\beta]}}\right)}\\
 &+&\frac{2}{e}\ln\left(\frac{4}{\varepsilon'}\right)\bigg]-\frac{\ln\big(\frac{1}{\varepsilon'}\big)}{e\,{\ln{\big[}e+2\ln(1/\varepsilon')/e\beta{\big]}}}, 
\end{eqnarray}
which, in turn, by virtue of Eq.\ \eqref{eq:psuc}, is satisfied if
\begin{equation}\label{eq:lowerbeta}
 \beta\geq \frac{2}{{o}}\Big[p^{-1}_\Psi\Big(\frac{{o}}{2}{\frac{1}{{\ln[e+2\ln(2/{o}\,\varepsilon)/e\beta]}}}\Big)+\frac{2}{e}\ln\Big(\frac{8}{{o}\,\varepsilon}\Big)\Big].
\end{equation}

The RHS of Eq.\ \eqref{eq:lowerbeta} still depends on $\beta$. To remove this dependence, we use a $\beta$-independent upper bound for the logarithmic term. First, notice that, for Eq.\ \eqref{eq:lowerbeta} to hold, $\beta>\frac{4}{e\,{o}}\ln\left(\frac{8}{{o}\,\varepsilon}\right)$ must necessarily hold. This, in turn, is equivalent to $\frac{2\ln(2/{o}\,\varepsilon)}{e\beta}<\frac{{o}\ln(2/{o}\,\varepsilon)}{2\ln(8/{o}\,\varepsilon)}$; and, since 
$\frac{{o}\ln(2/{o}\,\varepsilon)}{2\ln(8/{o}\,\varepsilon)}<\frac{{o}}{2}\leq 1/4$, one gets that  
$\ln[e+2\ln(2/{o}\,\varepsilon)/e\beta]<\ln[e+1/4]<1.1$. 
Consequently, $\frac{{o}}{2}{\frac{1}{{\ln[e+2\ln(2/{o}\,\varepsilon)/e\beta]}}}>{o}/2.2$;  and, since $p^{-1}_\Psi$ is a monotonically decreasing function, $p^{-1}_\Psi\big(\frac{{o}}{2}{\frac{1}{{\ln[e+2\ln(2/{o}\,\varepsilon)/e\beta]}}}\big){<}p^{-1}_\Psi({o}/2.2)$. That latter is the desired $\beta$-independent upper bound. Here, we note that the theorem assumption ${o}<1/2.2$ ensures that $p^{-1}_\Psi({o}/2.2)$ is well defined, because it guarantees that ${o}/2.2>{o}^2=p^{-1}_\Psi(\beta\rightarrow \infty)$. Then, finally, taking $\beta_c=\frac{2}{{o}}\left[p^{-1}_\Psi\left(\frac{{o}}{2.2}\right)+\frac{2}{e}\ln\left(\frac{8}{{o}\,\varepsilon}\right)\right]$ and demanding that $\beta>\beta_c$ is sufficient to satisfy Eq.  \eqref{eq:lowerbeta}. This is our final lower bound on $\beta$.   

In conclusion, the schedule $S_2=\{\beta_1,\beta-\beta_1\}$, with $\beta_1$ in the range determined by Eqs.\ \eqref{eq:beta1} and \eqref{eq:fragment1}, and $\beta\geq\beta_c$, satisfies Eq.\ \eqref{eq:fragbetainv} for both $l=1$ and $l=2$. In particular, in the theorem statement, $\beta_1$ is taken precisely as the RHS of Eq.\ \eqref{eq:beta1}. This proves the last remaining implication.
\end{proof}

\section{Dependence on the success probability: Fragmented QITE versus coherent master algorithm }
\label{fragxcohPsuc}

Theorem \ref{thm:beta_c} establishes that it is possible to find an inverse temperature and a  fragmentation schedule for which fragmented QITE outperforms the coherent master QITE algorithm. Nevertheless, it says nothing about scaling advantages of fragmentation. One of the reasons for that is the intricate dependence of the query complexity in Theorem \ref{teo_correct_master}  on all the parameters  ($\beta$, $\epsilon$, $p_\Psi(\beta)$) not allowing for a clear claim of advantage. More importantly, the complexity depends on the particular fragmentation schedule, which in turn depends on the particular Hamiltonian instance.

However, we can numerically compare the advantage got from amplitude amplification with the results for fragmentation. Compared to repeat until success, amplitude amplification gives a quadratic advantage relatively to the probability of success. That is, while the total query complexity of the probabilistic algorithms is given as $q(\beta,\varepsilon')/p_\Psi(\beta)$, the coherent algorithm has total query complexity of $q(\beta,\varepsilon')/\sqrt{p_\Psi(\beta)}$. In an attempt to isolate the dependence on the success probability, in Fig. \ref{fig:compare_Psuc} we show the ratios $Q_{S_r}(\beta,\varepsilon)/q(\beta,\varepsilon')$ and $Q_{\text{coh}}(\beta,\varepsilon)/q(\beta,\varepsilon')$. The plots show results for different Hamiltonian instances and different values of $\beta$ larger than the observed average $\beta_c$. In this way, we observe the behavior of fragmented QITE after the point for which we get advantage over amplitude amplification. We observe the same scaling for the two master algorithms, but  the  fragmented algorithm presents a better multiplicative factor. Therefore, from this empirical observation we cannot claim any supra-square advantage of fragmentation over RUS. Nevertheless, we reinforce that the  advantage obtained in total query complexity comes without a large circuit depth overhead.

\begin{figure}[t!]
\centering
  \includegraphics[width=0.5\columnwidth]{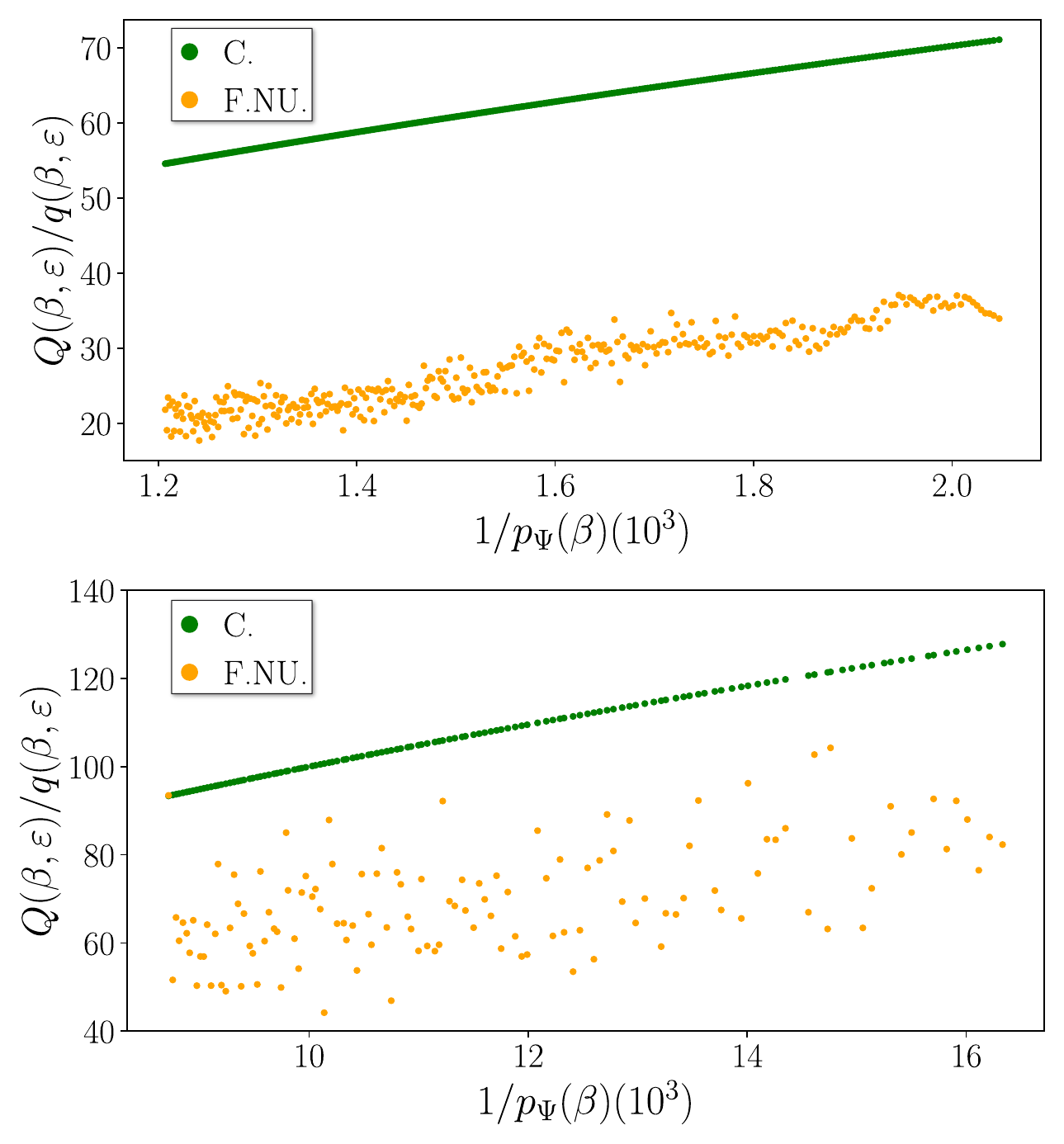}
  \caption{{\bf Dependence of the total query complexities on the probability of success}. Ratio between the total success probability and the query complexity of  one run of the probabilistic master algorithm as a function of the inverse of the success probability for two Hamiltonian models. In a) the result for instances of weighted MaxCut for 12 qubits and in b) the result for intances of the Heisenberg fully-connected model with 14 qubits. Each point in the graphs corresponds to a different Hamiltonian instance and one value of inverse temperature above $\beta_c$.  
  \label{fig:compare_Psuc}} 
\end{figure}

\section{Close-to-optimality of Primitive 1 for non-interacting Hamiltonians}
\label{app:non_interacting}
Let us consider {the simple case of a non-interacting qubit Hamiltonian} given by
\begin{equation}
\label{eq:non-interactive}
 H=\sum_{j}b_j Z_j,
\end{equation}
with $b_j\geq0$ and $\sum_{j=1}^N b_j=1$,} such that $\|H\|=1$. {Clearly, the analysis for this case covers also any other Hamiltonian unitarily-equivalent to Eq. \eqref{eq:non-interactive}. We focus on the task of Gibbs-state sampling (maximally mixed state as input). Hence, ${o}=2^{-N/2}$ and the success probability for $P_1$} is given by $ p_\Psi(\beta)=e^{-2\,\beta}\prod_{j=1}^N\cosh(2\,\beta\,b_j)$.
For simplicity, we {take} $b_j=1/N$ for all $j$, obtaining 
\begin{equation}
 p_\Psi(\beta)=e^{-2\,\beta}\cosh(2\,\beta/N)^N.
\end{equation}
{This is simple enough to obtain a closed-form expression for its inverse function and -- so -- explicit expressions for the fragmentation schedule of Lemma  \ref{thm:beta_c}, which we do next.

By virtue of Lemma  \ref{thm:beta_c}, $\Delta\beta_1={p^{-1}_\Psi\big(\frac{{o}}{2}\frac{1}{{\ln[e+2\ln(2/{o}\,\varepsilon)/e\beta]}}\big)}$. Hence, using that $1<{\ln[e+2\ln(2/{o}\,\varepsilon)/e\beta]}<1.1$, we obtain $p^{-1}_\Psi\big(\frac{{o}}{2}\big)=\frac{N}{4}\ln\big(\frac{2^{1/N}}{2^{1/2}-2^{1/N}}\big)<\Delta\beta_1<p^{-1}_\Psi\big(\frac{{o}}{2.2}\big)=\frac{N}{4}\ln\big(\frac{(2.2)^{1/N}}{2^{1/2}-(2.2)^{1/N}}\big)$. This, using that, for $N\geq3$, $\ln\big(\frac{(2.2)^{1/N}}{2^{1/2}-(2.2)^{1/N}}\big)\leq 2.44$ and $\ln\big(\frac{2^{1/N}}{2^{1/2}-2^{1/N}}\big)>0.88$, gives 
\begin{equation}
\label{eq:Delta_beta_1}
0.88\,\frac{N}{4}\leq\Delta\beta_1\leq 2.44\, \frac{N}{4}
\end{equation} 
and $\beta_c= 2^{N/2 +1}\big[2.44\,\frac{N}{4}+\ln\big(8\frac{2^{N/2}}{\varepsilon}\big)\big]$. This, in turn leads to
\begin{equation}
\label{eq:beta_2_non_int}
\Delta\beta_2\geq 2^{N/2 +1}\Big[2.44\,\frac{N}{4}+\ln\Big(8\frac{2^{N/2}}{\varepsilon}\Big)\Big]-2.44\, \frac{N}{4}
\end{equation} 
for all $\beta\geq \beta_c$.} The case $N=2$ is not {covered} by Lemma \ref{thm:beta_c} because {it has} ${o}= 1/2 >1/2.2$.
 
{As clear from Eqs. \eqref{eq:Delta_beta_1} and \eqref{eq:beta_2_non_int}, the first fragment is exponentially shorter in $N$ than the second one. Moreover, we show next that, the first fragment satisfies $\beta_1 < 8 \ln(4/\varepsilon_1')$. This implies that Primitive $1$ performs better (in query complexity) than the one from Ref.\ \cite{gilyen2019} at the fragmentation scheme from Lemma \ref{thm:beta_c}, as discussed after Theor.\ \ref{teoQuITEbe}. In fact, it also implies that Primitive $1$'s performance is close to optimal (see discussion after Theor. \ref{th:lowerbound}) at that first fragment.
From Eq.\ \eqref{eq:F_errors}, for $r=2$, we have $\varepsilon_1'=\varepsilon\sqrt{p_\Psi(\beta)}/8$. Now, because $\beta>p^{-1}_\Psi({o}/2.2)$, $p_\Psi(\beta)<{o}/2.2=2^{-N/2}/2.2$. This leads to 
  \begin{eqnarray}
 \nonumber
    8\ln\left(\frac{4}{\varepsilon_1{'}}\right)&=&8\ln\left(\frac{4\times8}{\varepsilon\sqrt{p_\Psi(\beta)}}\right)\\
 \nonumber
    &>&8\ln\left(\frac{4\times8}{\varepsilon\sqrt{2^{-N/2}/2.2}}\right)\\
 \nonumber
    &>&8\Big[\ln\big(2^{N/4}\big)+\ln\Big(\frac{21}{\varepsilon}\Big)\Big]\\
 \nonumber
&>&2.44\,\frac{N}{4}+8\ln\Big(\frac{21}{\varepsilon}\Big)\\
&>&\beta_1,
  \end{eqnarray}
where, in the last equation, we have used Eq. \eqref{eq:Delta_beta_1} and the fact that $\beta_1=\Delta\beta_1$. This finishes the proof.}
 
\begin{figure}[t!]
\centering
  \includegraphics[width=0.7\columnwidth]{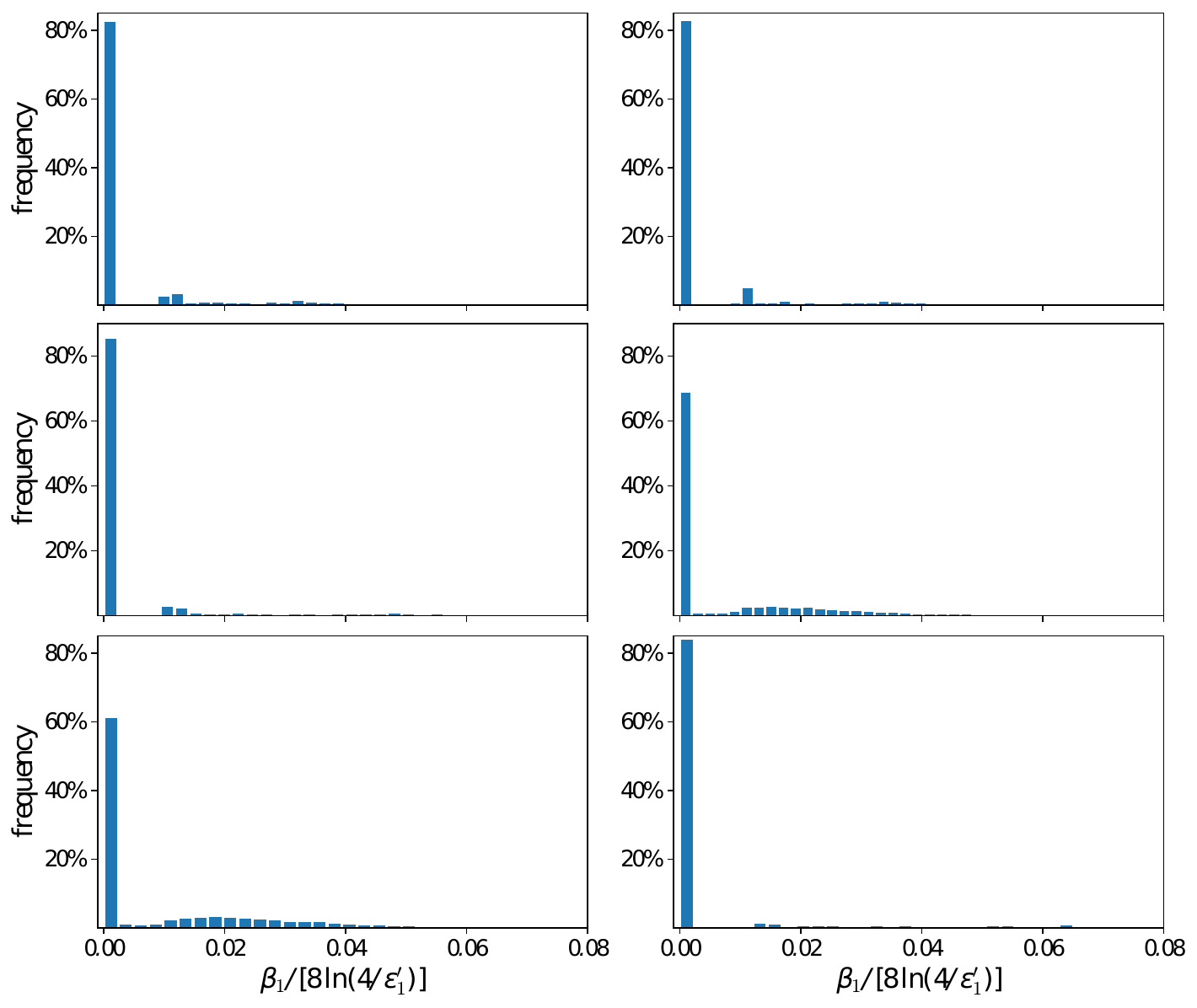}
  \caption{{{\bf Histograms of $\beta_1/8 \ln(4/\varepsilon_1')$ for the optimized schedules $S_{r,a}$ from Eq.\ \eqref{eq:schedule_ansatz}}.  The upper panels correspond to the quantum spin glass Hamiltonians with $N=14$ (left) and $N=15$ (right) qubits, the central ones to the quatum RBM with $N=4$ (left) and $N=5$ (right), and the lower ones to the weighted MaxCut with $N=12$ (left) and $N=13$ (right). Similar behaviours are observed for other values of $N$ and $\varepsilon$. For the upper and central panels we took $\varepsilon=0.001$, whereas for the lower ones $\varepsilon=0.01$. 
For each Hamiltonian instance $H$, we took several different values of $\beta$; and each pair $(H,\beta)$ constitutes a different event in the histogram. For the panels where $N\leq12$, the histograms are built from $81$ uniformly chosen values of $\beta$  from 0 to 2000 for each of the $1000$ Hamiltonian instances in each class, giving a total of 81000 events. Whereas for those where $N>12$, they are built from 21 uniformly chosen  values of $\beta$ from 0 to 10000 and $100$ Hamiltonian instances from each class, giving a total of 2100 events. The results show that $\beta_1$ is typically 1000 (and at worst 20) times smaller than $8 \ln(4/\varepsilon_1')$. This confirms that the first fragment operates deep into the regime of optimality of $P_1$ even for uniformly sampled $\beta$. If, instead of uniform values of $\beta$ between 0 and the high values mentioned above, we choose $\beta$ values close to the corresponding critical inverse temperature, then $\beta_1/8 \ln(4/\varepsilon_1')$ concentrates even more at the first column close to zero. }
  \label{fig:histograms}} 
\end{figure}

{\section{Close-to-optimality of Primitive 1 for interacting Hamiltonians}
\label{app:histograms}
Here, we study the ratio between $\Delta\beta_1=\beta_1$ and $8 \ln(4/\varepsilon_1')$ for fragmented quantum-Gibbs-state sampling with $P_1$, for the generic Hamiltonians studied in Sec. \ref{sec:GS_sampling} and the optimal schedules $S_{r,a}$ used for the central panel of Fig. \ref{fig:Opt_schedules}. Recall that $\beta_1\approx 8 \ln(4/\varepsilon_1')$ is the point at which the complexity upper bound in Eq. \eqref{eq:BEquery} starts outperforming the complexity upper bound derived in \cite{gilyen2019} for the powerful QITE primitive obtained there, as mentioned after Theo. \ref{teoQuITEbe}. The results are displayed in Fig.\ \ref{fig:histograms}, showing the histograms of $\beta_1 / 8 \ln(4/\varepsilon_1')$ for the random  instances considered for the three classes of Hamiltonian and for different $N$. This clearly shows that $\beta_1 \ll 8 \ln(4/\varepsilon_1')$, supporting our claim that, for the first fragment, the fragmented master algorithm operates deep into the optimality regime of Primitive $1$. Finally, we performed the same analysis for $\Delta\beta_2$ (not shown) and consistently observed that $\Delta\beta_2$ is smaller than $8 \ln(4/\varepsilon_2')$ too. That is, also the second fragment operates close to the optimality regime of $P_1$.}

\begin{figure*}[ht!]
\centering
  \includegraphics[width=0.95\textwidth]{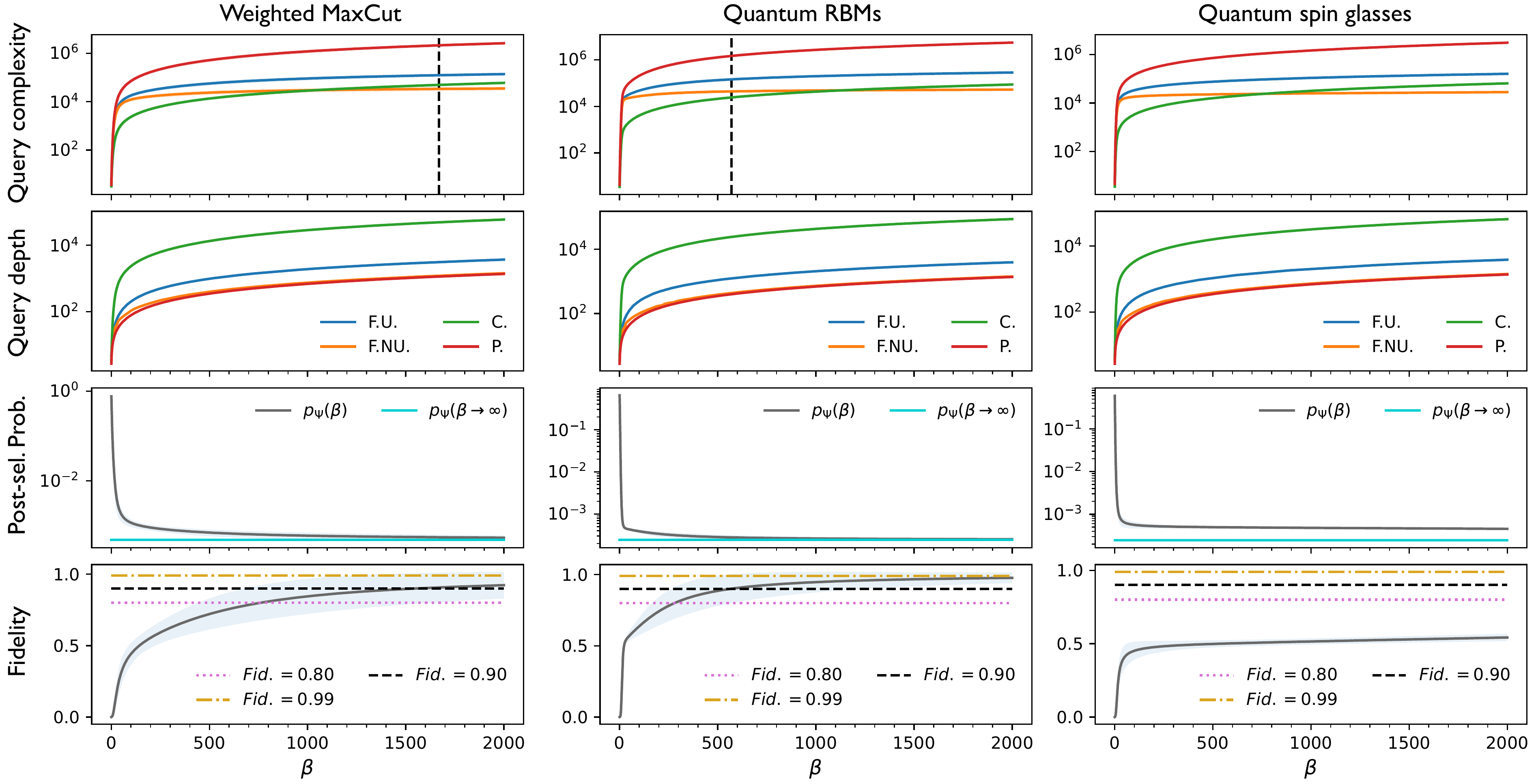}
  \caption{{{\bf Quantum Gibbs-state samplers running on Primitive 1 for the 10\% hardest Hamiltonian instances: runtime, circuit depth, success probability, and fidelity with the ground state versus inverse temperature}. The Hamiltonian classes are the same as in Fig.~\ref{fig:Frag_QITE_P1}; and the number of qubits and tolerated error are also $N=12$ and $\varepsilon=10^{-3}$, respectively. Out of the 1000 random instances used  for each class in Fig.~\ref{fig:Frag_QITE_P1}, we select the 100 ones with the smallest spectral gap. Solid curves represent the means over that $10\%$. The color code is the same as in Fig.~\ref{fig:Frag_QITE_P1} too: red for probabilistic, green for coherent, blue for fragmentation with uniform schedule $S_{r}$ for the best $r$, and orange for fragmentation with a schedule $S_{r,a}$ as in Eq.\ \eqref{eq:schedule_ansatz} for the best $r$ and $a$. From top to bottom: the upper panels show the overall query complexity, the second row the average query depth, the third one the post-selection probability, and the lower pannels the average fidelity between the ideal Gibbs state at inverse temperature $\beta$ and the ground state in question.
  The black vertical dashed lines in the first row represent the value $\beta^{(\text{hard})}_{0.9}$ of inverse temperature required for a modest average fidelity of $0.9$ with the ground state, defined by the intersection between the black horizontal dashed lines and the blue solid curves in the last row. In the third panel of the first row there is no black vertical dashed line because, for this class, $\beta^{(\text{hard})}_{0.9}$ is way beyond the range of $\beta$'s shown, as is clear also from the third panel of the last row. In the first row, both for the first and third panels, the critical value $\beta_\text{c}$ at which fragmentation outperforms coherent QITE is below $\beta^{(\text{hard})}_{0.9}$; whereas, for the second panel, $\beta^{(\text{hard})}_{0.9}$ is smaller than $\beta_\text{c}$ but the difference in query complexity at $\beta^{(\text{hard})}_{0.9}$ is already very small. If the target fidelity is instead 0.99, all three classes require inverse temperatures greater than $\beta_\text{c}$.
This implies that fragmented QITE is superior to coherent QITE for highly-relevant ranges of $\beta$ for ground state preparation, e.g. Interestingly, in addition, the  complexities (and therefore also $\beta_\text{c}$) and depths shown in the first and second rows are very similar to those of the average case in Fig.~\ref{fig:Frag_QITE_P1}. In the third row, to guide the eye, the cyan solid line shows the asymptotic limit $p_\Psi(\beta\rightarrow\infty)$ of the post-selection probability. These plots give a first glimpse of how far away from ground state the Gibbs-state in question is. Finally, in the last row we make this observation more quantitative with the fidelity as figure of merit. There, horizontal dashed lines represent the fidelity values $0.8$ (pink), $0.9$ (black), and $0.99$ (gold).} }
  \label{fig:Frag_QITE_P1_hard}
\end{figure*}


\section{Fragmented Gibbs-state samp{ling with $P_1$ on the 10\% hardest} instances}
\label{app:QuITEprimitive1_hard}
{Here, we analyze the performance of $P_1$ for Gibbs-state sampling but restricted to the 10\% hardest Hamiltonian instances from those used to produce Fig. \ref{fig:Frag_QITE_P1}. The hardest instances are given by the choices of $H$ with the smallest spectral gap $\Delta$ between the first excited and the ground states. This is due to the fact that the inverse temperature required to approximate the ground state (up to any constant target precision) scales as $\beta=\mathcal{O}(1/\Delta)$. That is, the lower the gap is, the higher the query complexity is.

The analysis is shown in Fig. \ref{fig:Frag_QITE_P1_hard}. Apart from the average run-time and circuit depth, as in Fig. \ref{fig:Frag_QITE_P1}, we also plot the evolution of the average post-selection probability and fidelity with the ground state. Both post-selection probabilities and ground-state fidelities are calculated via brute-force diagonalization of each Hamiltonian. As can be seen in the figure, for Weighted MaxCut and, especially, Quantum Spin Glasses, fragmented QITE becomes superior to coherent QITE well before the value $\beta^{(\text{hard})}_{0.9}$ needed for ground state preparation up to a modest fidelity 0.9. In turn, for Quantum RBMs, fragmented QITE becomes superior to coherent QITE after $\beta^{(\text{hard})}_{0.9}$, but the difference in their query complexity at $\beta^{(\text{hard})}_{0.9}$ is already very small. If, instead, the target fidelity is raised to 0.99, all three classes require inverse temperatures greater than $\beta_\text{c}$. This implies that fragmented QITE is superior to coherent QITE for ranges of $\beta$ that are highly relevant for ground state preparation. }


\section{Fragmented Gibbs-state samplers with $P_2$}
\label{app:QuITEprimitive2}

\begin{figure*}
  \includegraphics[width=1\textwidth]{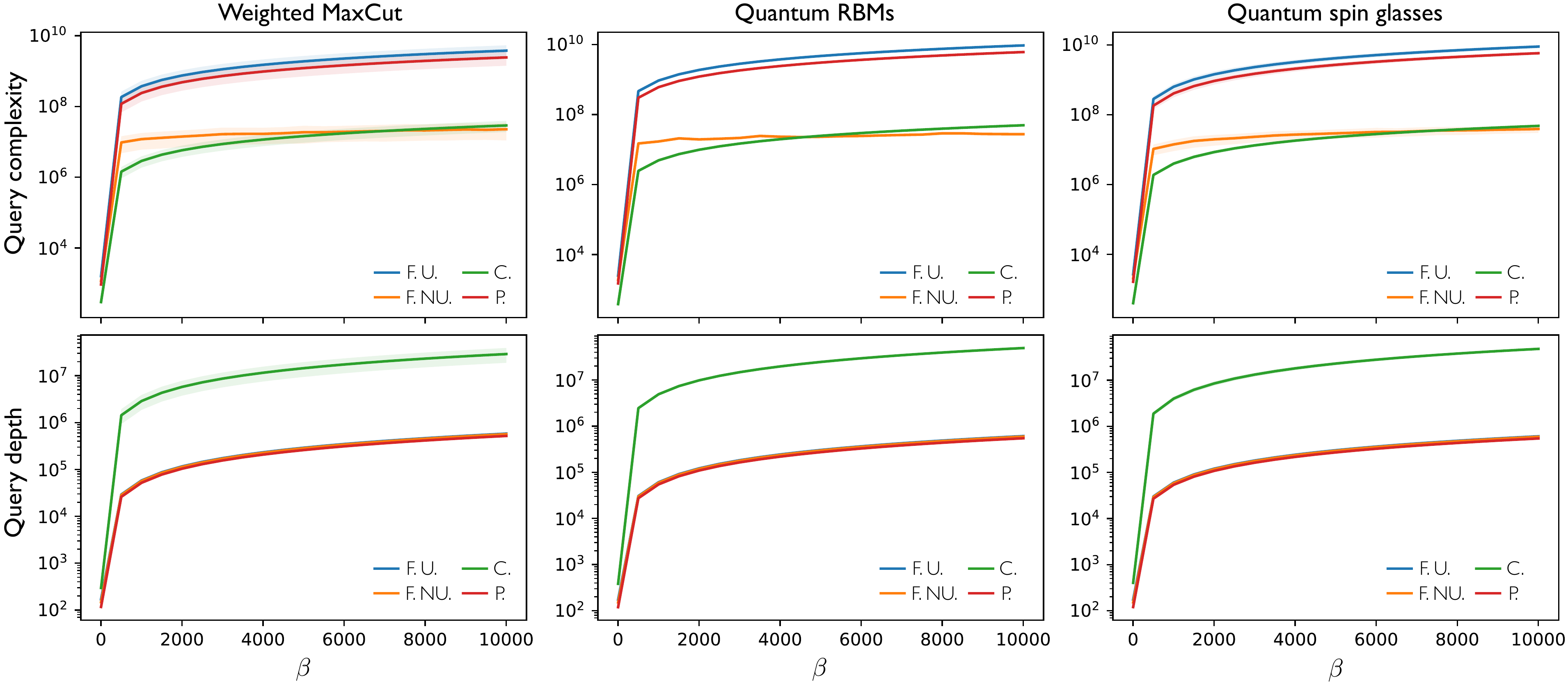}
  \caption{{\bf Runtimes and circuit depths of quantum Gibbs-state samplers running on Primitive 2 versus inverse temperature}. The color code is the same as in Fig.~\ref{fig:Frag_QITE_P1}: red corresponds to probabilistic, green to coherent, blue to fragmented with uniform schedule $S_{r}$ for the best $r$, and orange to fragmented with a schedule $S_{r,a}$ as in Eq.\ \eqref{eq:schedule_ansatz} for the best $r$ and $a$ (see also Fig.\ \ref{fig:Opt_schedules_P2}). The Hamiltonian classes are also the same as in Fig.~\ref{fig:Frag_QITE_P1}. For simplicity, here we use just a subset of 100 random instances from each class out of the 1000 used for Fig.~\ref{fig:Frag_QITE_P1}.
The examples correspond to $N=12$ qubits and an error $\varepsilon=10^{-3}$, but qualitatively identical behaviors are observed for all $N$ between 2 and 15 as well as for $\varepsilon=10^{-2}$ and $\varepsilon=10^{-1}$. Upper panels: average overall query complexity. 
As mentioned, also for $P_2$ does fragmentation with non-uniform schedule outperform coherent QITE, except the critical inverse temperature $\beta_c$ is now higher than in Fig.~\ref{fig:Frag_QITE_P1}. The dependance of $\beta_c$ with $N$ is shown and discussed in Fig.\ \ref{fig:Beta_crit_P2}.
Lower panels: average query depth. As with $P_1$, also for $P_2$ does coherent QITE lie orders of magnitude above probabilistic QITE; while fragmented QITE is almost identical to the latter.}
  \label{fig:Frag_QITE_P2}
\end{figure*}


Here, we numerically study the performance of Alg. \ref{alg:Fragmented-QuITE-algorithm} at quantum Gibbs-state sampling, as in Sec.\ \ref{sec:GS_sampling}, but for Primitive 2. 
In Fig.~\ref{fig:Frag_QITE_P2} we show the average runtimes and circuit depths. The schedule optimization algorithms follow
the same approach as for $P_1$.
The critical inverse temperatures for $P_2$ are shown in Fig.~\ref{fig:Beta_crit_P2}. For $P_2$ we achieve good
optimization stability only from larger values of $N$. This can be understood from the fact that for small $N$ the minimum success probability ($2^N$) is relatively large. Because the subnormalization factors $\alpha_k$ of each fragment only depend on the inverse temperature (not on $N$) and has a minimum value, the optimization seeks for large $\beta$ such that the success probability of each fragment approaches $2^N$. 
Finally, in Fig.~\ref{fig:Opt_schedules_P2} we show the optimal fragmentation schedules for $P_2$. 

\begin{figure}[t!]
\centering
  \includegraphics[width=0.7\columnwidth]{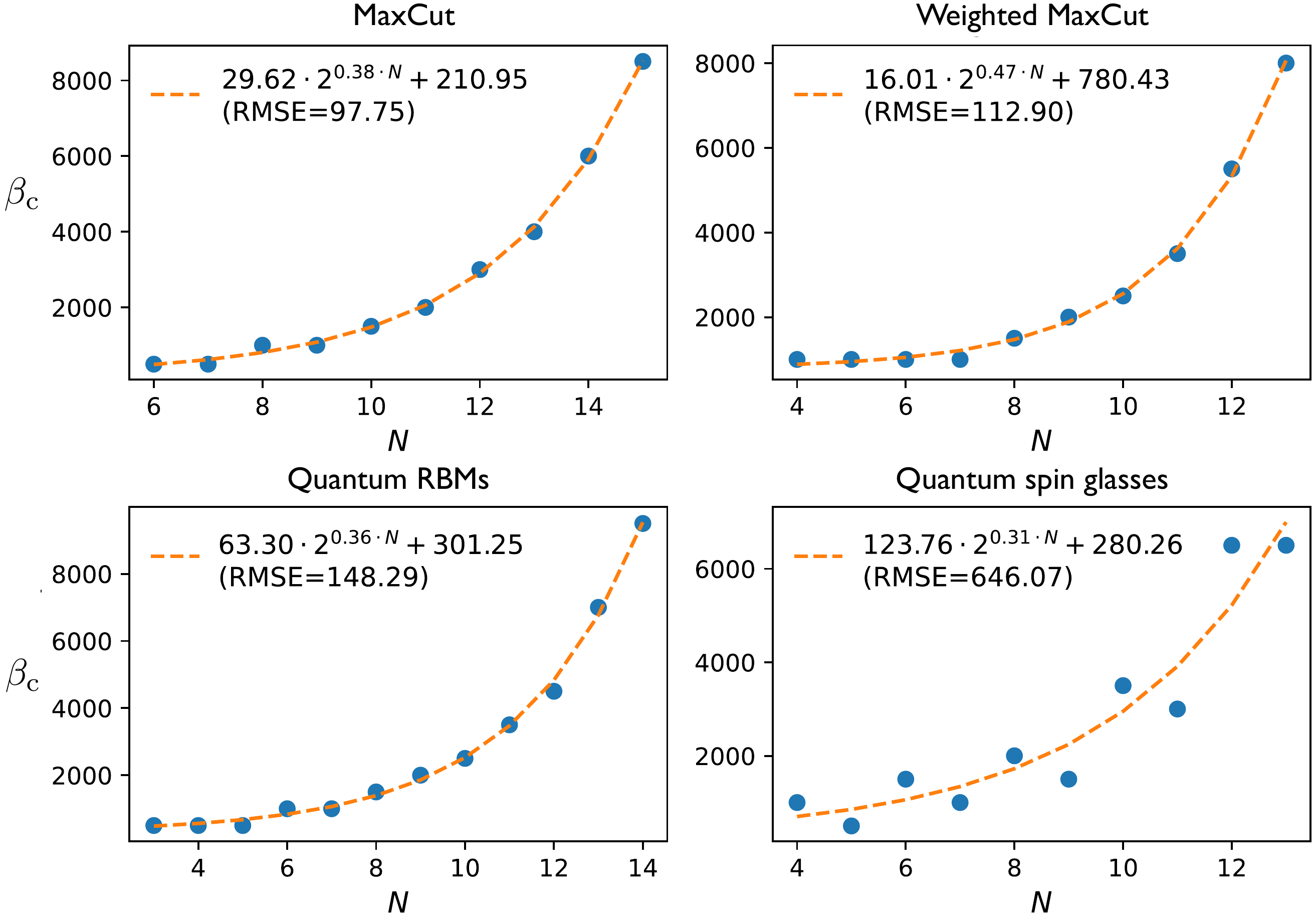}
  \caption{{\bf Critical inverse temperatures for $P_2$ versus number of qubits}. The tolerated error and Hamiltonian classes are the same as in Fig.~\ref{fig:Beta_crit}. Blue dots are the means over 100 instances from each class, whereas dashed orange curves their fits over the same Ansatz as in Fig.~\ref{fig:Beta_crit}. The fit results are shown in the insets. The scalings of $\beta_{\rm c}$ with $N$ are similar to those for $P_1$ but with somewhat higher pre-factors and additive constants. The fact that $\beta_{\rm c}$'s are higher than for $P_1$ comes from the non-unit factors $\alpha_k$ inside $n_l$ in Eq.\ \eqref{eq:F_overall_query}.
  \label{fig:Beta_crit_P2}}
\end{figure}

\begin{figure}[ht!]
\centering
  \includegraphics[width=0.7\columnwidth]{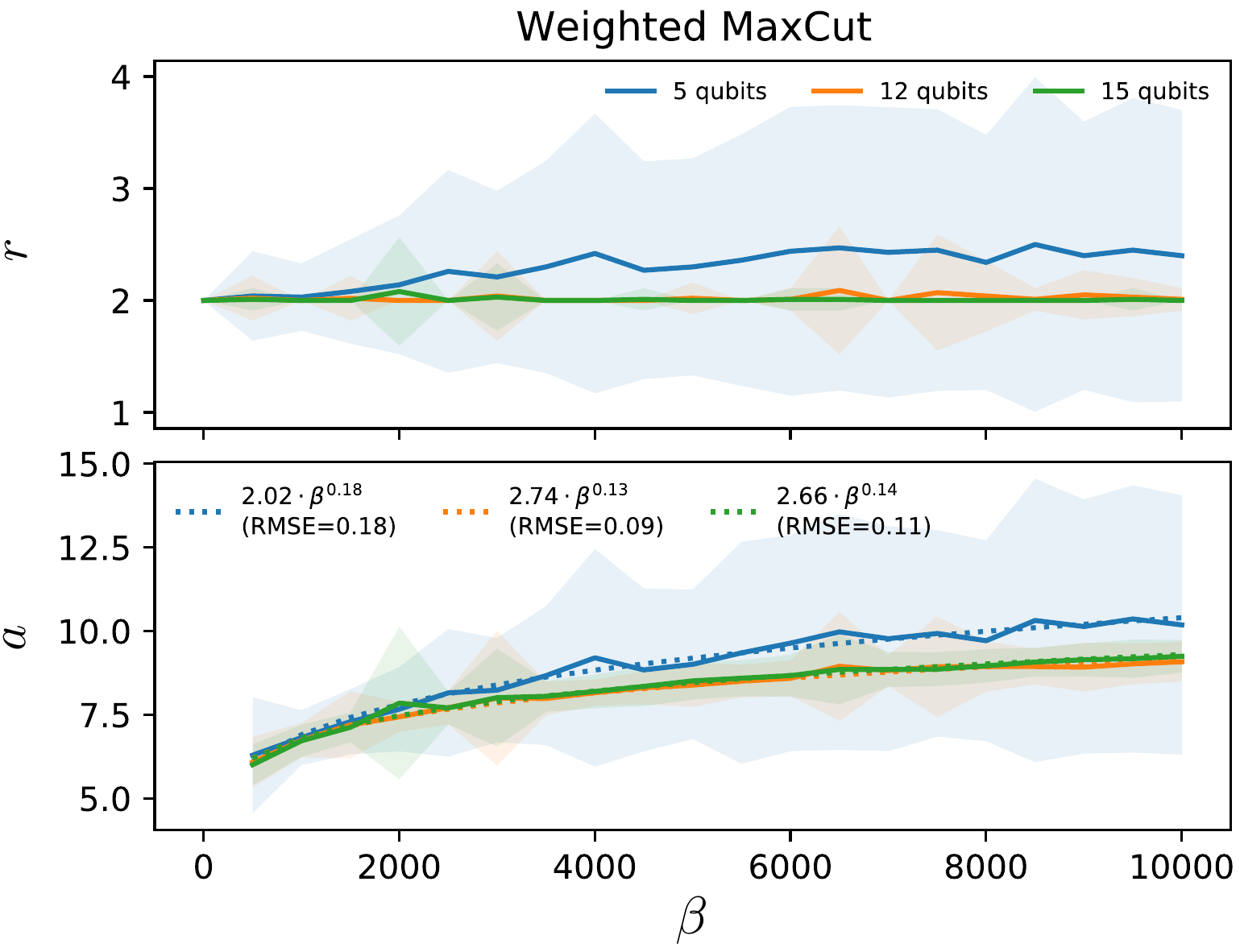}
  \caption{{\bf Optimal fragmentation schedules for $P_2$ versus inverse temperature}. System sizes and color code are the same as in Fig.~\ref{fig:Opt_schedules}: $N=5$ (blue), $N=12$ (orange), and $N=15$ (green). Solid curves represent the means over 100 random weighted-MaxCut Hamiltonians, whereas (the thicknesses of) shaded curves are the standard deviations. The error is $\varepsilon=10^{-3}$. Qualitatively identical behaviors are observed for all $N$ between 2 and 15 as well as for $\varepsilon=10^{-2}$ and $\varepsilon=10^{-1}$; and the same holds for the other Hamiltonian classes. For uniform schedules $S_{r,1}$ (not shown in the figure), a constant $r=2$ is observed to minimize $Q_{S_r}$ but the resulting complexity does not reach $Q_{{\rm coh}}$ over the domain scanned in Fig.\ \ref{fig:Frag_QITE_P2}. 
  The upper and lower panels respectively show the optimal $r$ and $a$ for non-uniform schedules $S_{r,a}$. The dotted curves in the lower panel  represent fits over the Ansatz $a(\beta)=A\, \beta^{\eta}$, with $A,\eta\in\mathbb{R}$. The fit results are shown in the inset. As with $P_1$, also here is $r$ constant not only with $\beta$ but also with $N$, remarkably.}
  \label{fig:Opt_schedules_P2}
\end{figure}

\section{QSP achievability -- method 1}
\label{app:qsp}

The following Lemma (adapted from Theorem 5 of Ref. \cite{gilyen2019}) presents the set of conditions for a pair of real
polynomials $\mathscr{B}(\cos\theta)$ and $\mathscr{D}(\cos\theta)$ to be achieved as the real/imaginary parts of $B(\cos\theta)$
and $D(\cos\theta)$ of Eq.\,\eqref{eq:poly}. This conditions have been summarised in Eqs. \eqref{achievabilityCond1} and \eqref{achievabilityCond2} in the main text.

\begin{lem}\label{cond}Given two polynomials $\mathscr{B}(x)$, $\mathscr{D}(x):[-1,1]\rightarrow\mathbb{R}$,
and $q\in\mathbb{N}$ even, there exists ${\boldsymbol{\Phi}_1}=\left(\phi_{1},\cdots,\phi_{q+1}\right)\in\mathbb{R}^{q+1}$
such that $\mathscr{B}(\cos\theta)=\textrm{Re}[B(\cos\theta)]$ (or
$\mathscr{B}(\cos\theta)=\textrm{Im}[B(\cos\theta)]$) and $\mathscr{D}(\cos\theta)=\textrm{Re}[D(\cos\theta)]$
(or $\mathscr{D}(\cos\theta)=\textrm{Im}[D(\cos\theta)]$) for all
$\theta\in[-\pi,\pi]$, with $B$ and $D$ as in Eq.\ \eqref{eq:poly},
if and only if $\mathscr{B}$ and $\mathscr{D}$ satisfy:

(i) $\mathscr{B}(x)=\sum_{k=0}^{q/2}b_k^{\prime}x^{2k}$ and $\mathscr{D}(x)=\sum_{k=0}^{q/2-1} d_{k}^{\prime}x^{2k+1}$;

(ii) $\forall x\in[-1,1]:\quad \mathscr{B}^{2}(x)+\left(1-x^{2}\right)\mathscr{D}^{2}(x)\leq1$.

Moreover, the achievable functions can also be expressed as $\mathscr{B}(\cos\theta)=\sum_{k=0}^{q/2} b_k\cos(2k\theta)$
and $\sin\theta\,\mathscr{D}(\cos\theta)=\sum_{k=1}^{q/2} d_{k}\sin\left(2k\theta\right)$.\end{lem}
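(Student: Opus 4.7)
The plan is to establish both implications separately. For the forward direction (necessity), I would first observe that $\mathcal{R}_1(\theta, \boldsymbol{\Phi}_1)$ is unitary as a product of unitaries, so its first column has unit norm: $|B(\cos\theta)|^2 + \sin^2\theta\,|D(\cos\theta)|^2 = 1$ for all $\theta$. Taking $\mathscr{B}$ as either $\mathrm{Re}[B]$ or $\mathrm{Im}[B]$ and $\mathscr{D}$ as either $\mathrm{Re}[D]$ or $\mathrm{Im}[D]$ immediately yields condition (ii). For the parity in (i), I would track the structure of a single pair $R_1(-\theta,\phi_{2k})R_1(\theta,\phi_{2k-1})$. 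A direct $2\times 2$ computation (or, equivalently, an induction on the number of such pairs) shows that multiplication by this pair promotes a polynomial entry of a given $(\cos\theta,\sin\theta)$-parity by exactly two degrees in $\cos\theta$, keeping $B$ as an even polynomial and $D$ as an even polynomial (so that $i\sin\theta\,D$ carries the odd trigonometric parity). Alternatively, one can compare $\mathcal{R}_1(\theta+\pi,\boldsymbol{\Phi}_1)$ with $\mathcal{R}_1(\theta,\boldsymbol{\Phi}_1)$ to pin down the parity without coefficient bookkeeping.

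For the backward direction (sufficiency), I would induct on $q$. The base case $q=0$ gives $\mathcal{R}_1 = e^{i\phi_1 Z}$ so $B=e^{i\phi_1}$ and $D=0$; any real constant $|\mathscr{B}|\leq 1$ is realized by $\cos\phi_1$ or $\sin\phi_1$. For the inductive step, assuming achievability at degree $q-2$, I would peel off the outermost block $e^{i\phi_{q+1}Z}R_1(-\theta,\phi_q)R_1(\theta,\phi_{q-1})$ and solve for $(\phi_{q-1},\phi_q,\phi_{q+1})$ so that the residual polynomial pair $(\mathscr{B}',\mathscr{D}')$ has degree at most $q-2$ and still obeys conditions (i) and (ii). The key point is that condition (ii) constrains the leading coefficients of $\mathscr{B}$ and $\mathscr{D}$ jointly, so the deflation that cancels the degree-$q$ (resp.\ degree-$q-1$) terms of $\mathscr{B}$ (resp.\ $\mathscr{D}$) automatically leaves a lower-degree pair that still satisfies the norm bound. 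The inductive hypothesis then provides the remaining phases $(\phi_1,\ldots,\phi_{q-2})$.

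The main obstacle is this algebraic bookkeeping in the inductive step: one must pin down the explicit relation between the leading coefficients of the target $(\mathscr{B},\mathscr{D})$ and the three outermost phases, and verify that condition (ii) is preserved under the deflation. This is exactly the content of the classical QSP achievability theorem (e.g.\ Theorem~5 of~\cite{gilyen2019} and the Low--Chuang construction), which I would follow essentially verbatim, adapted to the current convention where the iterate is a rotation about $X$ (with signal $\theta$) rather than about $Z$. Finally, the equivalent trigonometric representation in the last sentence of the lemma follows from (i) by rewriting even polynomials in $\cos\theta$ as linear combinations of $T_{2k}(\cos\theta)=\cos(2k\theta)$, and by using $\sin\theta\,U_{2k-1}(\cos\theta)=\sin(2k\theta)$ to convert $\sin\theta\,\mathscr{D}(\cos\theta)$ into the claimed sine series.
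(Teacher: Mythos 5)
Your proposal follows essentially the same route as the paper: the paper does not prove this lemma from scratch either, but adapts Theorem~5 of Ref.~\cite{gilyen2019} (for the existence of the completing complex polynomials) and supplies the explicit three-phase peeling recursion in the pulse-computation appendix, exactly the Low--Chuang-style deflation you describe; your necessity argument (unitarity of the first column giving (ii), parity bookkeeping giving (i)) and your derivation of the trigonometric form from $T_{2k}$ and $U_{2k-1}$ also match the paper's remarks.

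One point in your sufficiency sketch deserves a warning, though. You frame the backward direction as a degree-lowering induction carried out directly on the real pair $(\mathscr{B},\mathscr{D})$ subject to the \emph{inequality} (ii). The deflation step, however, determines the outermost phases from the leading coefficients of the \emph{complex} entries $B$ and $D$ of a genuine unitary, for which $|B(x)|^2+(1-x^2)|D(x)|^2=1$ holds as an \emph{identity}. Passing from the real targets with $\mathscr{B}^2+(1-x^2)\mathscr{D}^2\leq 1$ to complex completions $B,D$ with exact equality is a separate, nontrivial step (a Fej\'er--Riesz-type root-pairing argument on the nonnegative polynomial $1-\mathscr{B}^2-(1-x^2)\mathscr{D}^2$), and it is precisely where Theorem~5 of Ref.~\cite{gilyen2019} does its real work; it cannot be absorbed into the statement that ``condition (ii) constrains the leading coefficients jointly.'' Since you explicitly defer to that theorem, this is a gap in the write-up rather than in the strategy, but a self-contained proof would need the completion lemma stated and invoked before the induction can start.
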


The last part of the Lemma, presented before in Eq.\,\eqref{achievabilityCond2}, can be obtained from condition (i) using the properties of
the Chebyshev polynomials $T_{k}(\cos\theta)=\cos(k\theta)$ and $U_{k}(\cos\theta)=\sin\left((k+1)\theta\right)/\sin\theta$. Condition (ii) is equivalent to Eq.\,\eqref{achievabilityCond1} in the main text.
Given that the desired polynomials are achievable by QSP, the set
of rotation angles ${\boldsymbol{\Phi}_1}$ can be computed classically in time
$\mathcal{O}\left(\text{poly}(q)\right)$ \cite{Low2016PRX,Haah2019product,chao2020finding,dong2020efficient}.

\section{Chebyshev expansions }
\label{App:Chebyshev}

In this appendix we briefly approach polynomial approximations to continuous functions using Chebyshev polynomials.

It is known that \cite{Fraser1965}, whenever
a function $f$ is continuous and bounded on the interval
$[-1,1]$, it is endowed with a convergent Chebyshev series such that 
\begin{equation}
f(\lambda)=\sum_{k=0}^{\infty}\,b_k T_{k}(\lambda),\label{eq:expansion-1}
\end{equation}
with coefficients 
\begin{equation}
b_0=\frac{1}{\pi}\int_{-1}^{1}\frac{f(\lambda)}{\sqrt{1-\lambda^{2}}}d\lambda\label{eq:ak-0}
\end{equation}
\begin{equation}
b_k=\frac{2}{\pi}\int_{-1}^{1}\frac{f(\lambda)T_{k}(\lambda)}{\sqrt{1-\lambda^{2}}}d\lambda,\quad k=1,2,\cdots\label{eq:ak-1}
\end{equation}
The truncated version of \eqref{eq:expansion-1} up to order $q$
is close to the optimal polynomial approximation of degree $q$ for $f(\lambda)$ in the interval
$[-1,1]$ \cite{Fraser1965}. 
The following Lemma (Theorem 2.1 from \cite{elliot1987}) gives the maximal error $\varepsilon$ of this approximation
for a given degree $q$. 

\begin{lem}\label{erro}Let $\tilde{f}(\lambda)$ be the 
polynomial approximation on $[-1,1]$ to a function $f\in C^{(q+1)}[-1,1]$ obtained by truncating its Chebyshev expansion up to order $q$.
Then 
\begin{equation}\label{errorCheb}
\varepsilon=\underset{-1\leq\lambda\leq1}{\max}|f(\lambda)-\tilde{f}(\lambda)|=\frac{\left|f^{(q+1)}(\xi)\right|}{2^{q}(q+1)!},
\end{equation}
for some $\xi\in(-1,1)$.\end{lem}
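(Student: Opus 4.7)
My plan is to derive the remainder bound by exploiting the tight connection between the truncated Chebyshev expansion and polynomial interpolation at the zeros of $T_{q+1}$, for which the Hermite remainder formula gives an explicit expression. Let $\lambda_j=\cos\!\big(\tfrac{(2j-1)\pi}{2(q+1)}\big)$, $j=1,\ldots,q+1$, be the Chebyshev nodes and let $p_q$ be the unique polynomial of degree at most $q$ interpolating $f$ at those points. Introducing the auxiliary function $\phi(t)\coloneqq f(t)-p_q(t)-c\,\omega_q(t)$, with $\omega_q(t)\coloneqq\prod_j(t-\lambda_j)=T_{q+1}(t)/2^q$ and $c$ chosen so that $\phi$ vanishes at an arbitrary reference point $\lambda\in[-1,1]$, one sees that $\phi$ has $q+2$ zeros on $[-1,1]$; applying Rolle's theorem $q+1$ times together with $p_q^{(q+1)}\equiv 0$ and $\omega_q^{(q+1)}=(q+1)!$ yields
\begin{equation}
f(\lambda)-p_q(\lambda)=\frac{f^{(q+1)}(\xi_\lambda)}{(q+1)!}\,\omega_q(\lambda),
\label{eq:hermite_plan}
\end{equation}
for some $\xi_\lambda\in(-1,1)$.

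Since $\|\omega_q\|_\infty=\|T_{q+1}\|_\infty/2^q=1/2^q$, maximizing \eqref{eq:hermite_plan} over $\lambda$ and setting $\xi\coloneqq\xi_{\lambda^*}$ at a maximizer $\lambda^*$ of $|\omega_q|$ produces exactly the form of the right-hand side of \eqref{errorCheb}. To lift this from the interpolant $p_q$ to the truncated Chebyshev series $\tilde f$ defined by \eqref{eq:ak-0}--\eqref{eq:ak-1}, I would invoke the discrete-versus-continuous orthogonality identity to write $p_q-\tilde f$ as the aliasing series $\sum_{k>q} b_k\,(\text{shifted }T_k)$ and control each $b_k$ through iterated integration by parts in $b_k=(2/\pi)\int_0^\pi f(\cos\theta)\cos(k\theta)\,d\theta$: under the $C^{q+1}$ hypothesis, each such term is of order $\|f^{(q+1)}\|_\infty/[2^q(q+1)!]$, so the overall estimate is unchanged.

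The main obstacle is that \eqref{eq:hermite_plan} is \emph{pointwise}, with $\xi_\lambda$ depending on $\lambda$, whereas Lemma \ref{erro} claims a single global $\xi$; the identification $\tilde f\simeq p_q$ is also only up to aliasing. A unified way to handle both at once is a Peano-kernel representation $f(\lambda)-\tilde f(\lambda)=\int_{-1}^{1}K_q(\lambda,t)\,f^{(q+1)}(t)\,dt$, with $K_q$ assembled from the tail Chebyshev polynomials. If one can prove that $K_q(\lambda^*,\cdot)$ does not change sign at a sup-norm maximizer $\lambda^*$ of $|f-\tilde f|$ — the technical heart of Elliot's original argument \cite{elliot1987} — then the integral mean-value theorem extracts $f^{(q+1)}(\xi)$ out of the integral, and direct computation of $\int K_q(\lambda^*,t)\,dt$ recovers the prefactor $1/[2^q(q+1)!]$. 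Fortunately, in the main text Eq.\ \eqref{eq:bound_error} only invokes the inequality version of \eqref{errorCheb}, so the harder task of pinning down a single $\xi$ is not strictly required for the applications in this paper.
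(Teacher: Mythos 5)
The paper never proves this lemma: it is quoted verbatim as Theorem 2.1 of Ref.~\cite{elliot1987} and used as a black box, so there is no in-paper argument to compare against. Judged on its own terms, your proposal is a plan rather than a proof, and the two steps carrying all the difficulty are both left open. First, the interpolation remainder only supplies a $\lambda$-dependent $\xi_\lambda$; evaluating it at a maximizer of $|\omega_q|$ gives merely the \emph{lower} bound $\max_\lambda|f-p_q|\geq |f^{(q+1)}(\xi_{\lambda_0})|/[2^{q}(q+1)!]$, not the claimed equality. This particular gap is repairable for the interpolant: combine that lower bound with the trivial upper bound $\max_\xi|f^{(q+1)}(\xi)|/[2^{q}(q+1)!]$ and apply the intermediate value theorem to the continuous function $|f^{(q+1)}|$ to extract a single $\xi$ with exact equality. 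But that fixes only $p_q$, not $\tilde f$.

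Second, and more seriously, the bridge from $p_q$ to the truncated series $\tilde f$ does not work as described. Aliasing gives $f-\tilde f=\sum_{k>q}b_kT_k$ while $f-p_q=\sum_{k>q}b_k(T_k\mp T_{k'})$; these are different functions, and bounding each $b_k$ by iterated integration by parts cannot reproduce the constant $1/[2^{q}(q+1)!]$, let alone an equality. Under the hypothesis $f\in C^{(q+1)}$ you may integrate by parts at most $q+1$ times, which controls $b_{q+1}$ sharply (indeed $|b_{q+1}|\leq \max|f^{(q+1)}|/[2^{q}(q+1)!]$) but yields only $|b_k|=\mathcal{O}(k^{-(q+1)})$ for $k>q+1$ with uncontrolled combinatorial prefactors, and summing the tail degrades the constant. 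The Peano-kernel route you sketch last is indeed the standard path to the exact statement, but you explicitly leave its crux --- the sign-definiteness of $K_q(\lambda^*,\cdot)$ --- unproven, and the normalization step is also delicate: taking $f(t)=t^{q+1}$ gives $\int K_q(\lambda,t)\,dt=T_{q+1}(\lambda)/2^{q}$, which has modulus $2^{-q}$ only when $\lambda^*$ happens to be an extremum of $T_{q+1}$. Since only the inequality version of \eqref{errorCheb} is used in Eq.\ \eqref{eq:bound_error}, the pragmatic fix is to do what the paper does and cite \cite{elliot1987}; otherwise the Peano-kernel sign argument must actually be carried out.
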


Another way of obtaining a nearly optimal polynomial approximation
of order $q$ is to construct it from Chebyshev polynomials interpolation,
which can be done in polynomial time by solving a linear system \cite{elliot1987}. The latter construction is based on conveniently choosing a list of points $\lambda_0,\cdots,\lambda_{q}$, typically related to the zeros or extreme points of $T_{q+1}(\lambda),$ and finding the linear combination of all degree-less-than-$q$ Chebyshev polynomials which interpolates $f(\lambda_0),\cdots,f(\lambda_q)$. 
The error formula of Eq.\,\eqref{errorCheb} is not exclusive of the Chebyshev truncated series. It is also valid for the special interpolating polynomials just described and for the optimal polynomial approximation, the difference being the point $\xi$ where the $(n+1)$-th derivative must be evaluated. 


\section{A recipe for pulses -- QSP method 1}
\label{app:QSPpulses}

This appendix is based on the proof of Theorem 3 of Ref. \cite{gilyen2019}. We restrict ourselves to the case of even $q$, as this is what we use throughout our operator-function design algorithm. Henceforth, we write $x=\cos\theta$. Here we only show how to get the QSP pulse sequence $\boldsymbol{\Phi}$ given the complex polynomials $B(x)$ and $D(x)$, such that Eq.\ \eqref{eq:poly} is attained. We refer the reader to Theorem 5 of Ref. \cite{gilyen2019} for a proof of existence  of $B$ and $D$ given the real polynomials $\mathscr{B}(x)$ and $\mathscr{D}(x)$ satisfying Lemma \ref{cond}. The proof of the referred theorem is also an algorithm to calculate  $B$ and $D$ whose classical computational runtime scales as $\mathscr{O}(\text{poly}(q))$.

Given two polynomials $B(x)$ and $D(x)$ as in Eq.\ \eqref{eq:poly}, it can be checked from multiplying single qubit rotation operators that they satisfy:
\begin{enumerate}[i)]
 \item $\text{deg}(B)\leq q$ and $\text{deg}(D)\leq q-1$;
 \item $B$ is even and $D$ is odd;
 \item $\forall x \in [-1,1]: \quad |B(x)|^2+(1-x^2)|D(x)|^2=1$.
\end{enumerate}

Now we build a recurrence which allows one to find a pulse sequence $\boldsymbol{\Phi}=(\phi_1,\cdots,\phi_{q+1})$ such that, given two polynomials 
\begin{equation}
 B(x)=\sum_{k=0}^{l/2}b_{k} x^{2k} \quad \text{and}\quad D(x)=\sum_{k=0}^{l/2-1}d_{k} x^{2k+1}
\end{equation}
satisfying the three above conditions -- thus $l\leq q$ -- they are  obtained as the entries of the operator   $\mathcal{R}_1\left(\theta,{\boldsymbol{\Phi}}\right)=e^{i\phi_{q+1}{Z}}\prod_{k=1}^{q/2}R_1(-\theta,\phi_{2k})R_1(\theta,\phi_{2k-1})$. Observe that, if $l=0$, then actually $B$ is constant and condition iii) implies that $D(x)=0$ and $|B(x)|=1$. Thus, we can write $B(x)=e^{i\phi}$ and the pulse sequence $\boldsymbol{\Phi}=(\phi,0,\cdots,0)$ does the job.

Because the pair of polynomials satisfies condition iii), they always define a unitary operator 
\begin{equation}
\mathcal{R}_{BD}=\left(\begin{array}{cc}
{B}(x) & i\sqrt{1-x^2}{D}(x)\\
i\sqrt{1-x^2}{D}^*(x) & {B}^{*}(x)
\end{array}\right).
\end{equation}
 Let us also define the new polynomials $\tilde{B}(x)$ and $\tilde{D}(x)$ by 
\begin{equation}\label{Rtilde}
 \left(\begin{array}{cc}
\tilde{B}(x) & i\sqrt{1-x^2}\tilde{D}(x)\\
i\sqrt{1-x^2}\tilde{D}^*(x) & \tilde{B}^{*}(x)
\end{array}\right)=
\mathcal{R}_{BD}R_1(-\theta,-\phi_1),
\end{equation}
with $\theta =\cos^{-1}(x)$. Performing the operator multiplications, one gets
\begin{equation}
 \begin{split}
  \tilde{B}(x) & =e^{-i\phi_1}\left[xB(x)+e^{2i\phi_1}(1-x^2)D(x)\right]\\
  \tilde{D}(x) & =e^{-i\phi_1}\left[-B(x)+e^{2i\phi_1}xD(x)\right].
 \end{split}
\end{equation}
It is easy to check that defining $e^{2i\phi_1}=\frac{b_{l/2}}{d_{l/2-1}}$ makes the higher order terms in both $\tilde{B}(x)$ and $\tilde{D}(x)$ vanish. This is a valid choice for $\phi_1$ because, if $l>0$, condition iii) implies that the polynomial $|B(x)|^2+(1-x^2)|D(x)|^2$ is constant, and thus each of its coefficients accompanying a greater-than-zero power of $x$ must vanish. In particular, it holds for the highest order term if $\left|\frac{b_{l/2}}{d_{l/2-1}}\right|=1$. Hence, the coefficients of $B(x)$ and $D(x)$ can be rearranged to give
\begin{equation}
 \tilde{B}(x)=\sum_{k=0}^{l/2-1}\tilde{b}_{k} x^{2k+1} \quad \text{and}\quad \tilde{D}(x)=\sum_{k=0}^{l/2-2}\tilde{d}_{k} x^{2k},
\end{equation}
which are an odd degree $l-1$ and an even degree $l-2$ polynomials, respectively. Thus, the right multiplication of $R_1(-\theta,-\phi_1)$ has the power of decreasing the degree of the polynomials in  $\mathcal{R}_{BD}$ by one. Because $\tilde{B}(x)$ and $\tilde{D}(x)$ compose a unitary operator, they also satisfy condition iii). Therefore, the aforementioned  argument  can be used again to verify that $|\tilde{b}_{l/2-1}|=|\tilde{d}_{l/2-2}|$ and  define $e^{2i\phi_2}=-\frac{\tilde{b}_{l/2-1}}{\tilde{d}_{l/2-2}}$. Multiplying Eq.\ \eqref{Rtilde} from the right by $R_1(\theta,-\phi_2)$ cancels out the highest order terms again and results in an operator 
\begin{equation}
 \mathcal{R}_{\tilde{\tilde{B}}\tilde{\tilde{D}}}=\left(\begin{array}{cc}
\tilde{\tilde{B}}(x) & i\sqrt{1-x^2}\tilde{\tilde{D}}(x)\\
i\sqrt{1-x^2}\tilde{\tilde{D}}^*(x) & \tilde{\tilde{B}}^{*}(x)
\end{array}\right)
\end{equation}
whose entries are polynomials with $\text{deg}(\tilde{\tilde{B}})=l/2-2$ and $\text{deg}(\tilde{\tilde{D}})=l/2-3$. This procedure, alternating the sign of $\theta$, can be repeated $l$ times. In each step, the rotation angle $\phi_k$ is determined from the highest degree coefficients of the polynomials from previous step and a rotation $R_1(\pm\theta,-\phi_k)$ is applied. It decreases the degree of the polynomials by one,  until we finally arrive at a constant polynomial which determines the last phase $\phi_{l+1}$ for the $Z$ rotation which is applied along with no $X$ rotation. If $l< q$, then $\phi_{l+2}=\cdots=\phi_{q+1}=0$ and we can always choose $q=l$ and not waste pulses. This algorithm provides the exact pulses  using $\mathcal{O}(q^2)$  multiplications and additions. 


\section{Qubitization}
\label{app:qubitization}
As mentioned in the main text, it is possible to transform a block-encoding oracle of a Hamiltonian into a new block-encoding under which each eigenvalue of $H$ is associated with an invariant subspace with dimension two. This transformation is called qubitization of the oracle and is the subject of this appendix.

Consider a unitary $U_H$ perfectly block-encoding a Hamiltonian $H$ using $|\mathcal{A}_{U_H}|$ ancillas. {Recall that} $\ket{\lambda}\in \mathbb{H}_{\mathcal{S}}$ is the $\lambda$-th eigenstate of $H$. Then the oracle $O_1$ of Def. \ref{def:block_enc_or} obtained from the action of $U_H$ controled by a single-qubit ancilla $\mathcal{A}_{\text{ctrl}}$ is also  a perfect block-encoding of $H$. The total number of ancillas used by this oracle is $|\mathcal{A}_{O_1}|=|\mathcal{A}_{U_H}|+|\mathcal{A}_{\text{ctrl}}|$. Denoting $\ket{0_{\lambda}}=\ket{\lambda}\ket{0}\in\mathbb{H}_{SA_{O_1}}$,
the action  of $O_1$ produces 
$O_1\ket{0_{\lambda}}=\lambda\ket{0_{\lambda}}+\sqrt{1-\lambda^{2}}\ket{0_{\lambda}^{\perp}}$,
with $\ket{0_{\lambda}^{\perp}}\in\mathbb{H}_{\mathcal{SA}_{O_1}}$
and $\braket{0_{\lambda}}{0_{\lambda}^{\perp}}=0$. Although $O_1\ket{0_{\lambda}}$ is in the subspace spanned by $\{\ket{0_{\lambda}},\ket{0_{\lambda}^{\perp}}\}$,
further applications of $O_1$ will not in general produce higher powers of $H$ as the state will leak out of this subspace. In order to avoid this, a convenient oracle for producing Hamiltonian functions would preserve the subspaces $\mathbb{H}_\lambda\coloneqq\mathrm{span}\{\ket{0_{\lambda}},\ket{0_{\lambda}^{\perp}}\}\subset \mathbb{H}_{\mathcal{SA}_{O_1}}$ for each eigenvalue $\lambda$. It was shown in ref. \cite{Low2019hamiltonian}
that it is always possible to obtain from $O_1$ its qubitized version
$O_1'$ which also block-encodes the Hamiltonian, i.e. $\bra{0}O_1'\ket{0}=H$ with $\ket{0}\in\mathbb{H}_{\mathcal{SA}_{O_1}}$,
and can be represented as 
\begin{equation}\label{iterate}
O_1'=\bigoplus_{\lambda}\left[e^{-i\theta_{\lambda}}\ket{0_{\lambda+}}\bra{0_{\lambda+}}+e^{i\theta_{\lambda}}\ket{0_{\lambda-}}\bra{0_{\lambda-}}\right],
\end{equation}
in the subspace $\bigoplus_{\lambda}\mathbb{H}_\lambda$ of $\mathbb{H}_{\mathcal{SA}_{O_1}}$.
Here $\theta_{\lambda}=\cos^{-1}(\lambda)$, and $\ket{0_{\lambda\pm}}=\left(\ket{0_{\lambda}}\pm i\ket{0_{\lambda}^{\perp}}\right)/\sqrt{2}$, such that $\braket{0_{\lambda+}}{0_{\lambda -}}=0$. As can be noticed, for any $\lambda$,  the subspace $\mathbb{H}_\lambda$ is invariant under $O_1'$. Furthermore, in that two-dimension subspace, $O_1'$ is isomorphic to the qubit rotation $e^{-i\theta_\lambda {Y}_\lambda}$, with $Y_\lambda$ the second Pauli operator in $\mathbb{H}_\lambda$.

\begin{figure}[t]
\begin{centering}
\includegraphics[width=0.8\columnwidth]{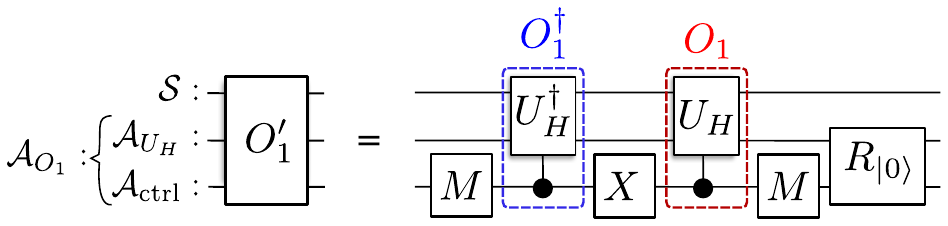}
\par\end{centering}
\caption{\label{fig:qubitization} Circuit implementing the transformation of the Hamiltonian oracle $O_1$ of Def. \ref{def:block_enc_or}  into its qubitized version $O_1'$. It requires 
one use of $O_1$ and one of its inverse, three single qubit gates applied to the control ancilla $\mathcal{A}_{\text{ctrl}}$, and $\mathcal{O}(|\mathcal{A}_{O_1}|)$ additional gates to implement the reflection $R_{\ket{0}}$ about the ancilla state $\ket{0}\in \mathbb{H}_{\mathcal{A}_{O_1}}$.   }
\end{figure}

Applying $O_1'$ 
uses only one query to $O_1$
and one to $O_1^{\dagger}$, and  $\mathcal{O}(|{\mathcal{A}_{O_1}}|)$ additional
quantum gates. The precise prescription given in \cite{Low2019hamiltonian} for obtaining $O_1'$ as in Eq.\ \eqref{iterate} is: 
take the operators $U_H^{\prime}= O_1\otimes\ketbra{+}{+}+ O_1^{\dagger}\otimes\ketbra{-}{-} $, $R_{\ket{0}}=\mathds{1}_{\mathcal{S}}\otimes\left(2\ketbra{0}{0}-\mathds{1}_{\mathcal{A}_{O_1}} \right)$, and $S=\mathds{1}_{\mathcal{SA}_{U_H}}\otimes (M{X}M)$ on $\mathbb{H}_{\mathcal{SA}_{O_1}}$, then
\begin{equation}\label{eq:qubitization}
 O'_1=R_{\ket{0}}SU_H^{\prime}.
\end{equation}
A equivalent circuit representation of $O_1'$ is found in Fig.\ \ref{fig:qubitization}. We refer the reader to Ref. \cite{Low2019hamiltonian} for a demonstration of the equality between Eqs. \eqref{iterate} and \eqref{eq:qubitization}. 


\section{Proof of Lemma \ref{be_function}}
\label{app:proofbe_function}

In order to prove Lemma \ref{be_function}, we first announce and prove the following lemma,  which states how to implement operator Chebyshev series in general.

\begin{lem}\label{lem:chebBlockEncoding}(Chebyshev series from block-encoding Hamiltonian oracles)  Given $\tilde{f}_q(\lambda)=\sum^{q/2}_{k=0} b_k \,T_k(\lambda)$, with $b_k\in \mathbb{R}$, such that  $|\tilde{f}_q(\lambda)|\leq 1 $ for all $ \lambda \in [-1,1]$, there is a ${\boldsymbol{\Phi}_1}=(\phi_1, \cdots,\phi_{q+1})\in\mathbb{R}^{q+1}$,  such that the unitary operator defined in Eq.\ \eqref{eq:Vvecphi} and implemented by Alg. \ref{alg:fBlockEncode} is a perfect block-encoding of $\tilde{f}_q(H)$, i.e. 
\begin{equation}
\bra{0} V_{{\boldsymbol{\Phi}_1}} \ket{0}=\sum_{\lambda}\tilde{f}_q(\lambda)\,\ket{\lambda}\bra{\lambda},\label{eq:berealfunction}
\end{equation}
with $\ket{0}$ in $\mathbb{H}_{\mathcal{A}}$. Moreover, the pulse sequence can be obtained classically in time $\mathcal{O}(\text{poly}(q))$.
\end{lem}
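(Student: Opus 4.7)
The plan is to reduce the problem to single-qubit QSP on each eigen-subspace of $H$ and then combine opposite-parity Chebyshev terms by means of the extra QSP ancilla, using Method 1 of Sec.\ \ref{subsec:Real_f_design1}.

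First, I would invoke qubitization (see App.\ \ref{app:qubitization}) to rewrite the iterate of Eq.\ \eqref{eq:v0be} in a form that admits a block-diagonal decomposition. Concretely, with $\theta_\lambda\coloneqq\cos^{-1}(\lambda)$, the qubitized oracle $O'_1$ leaves invariant the two-dimensional subspaces $\mathbb{H}_\lambda=\mathrm{span}\{\ket{0_\lambda},\ket{0_\lambda^{\perp}}\}$ and acts within each as $e^{-i\theta_\lambda Y_\lambda}$. Since all remaining operations in the construction of $V_{\boldsymbol{\Phi}_1}$ in Eq.\ \eqref{eq:Vvecphi} either act trivially on $\mathbb{H}_{\mathcal{SA}_{O_1}}$ or also preserve $\mathbb{H}_\lambda$, the whole circuit decomposes as $V_{\boldsymbol{\Phi}_1}=\bigoplus_\lambda V_{\boldsymbol{\Phi}_1}^{(\lambda)}$, where $V_{\boldsymbol{\Phi}_1}^{(\lambda)}$ acts on $\mathbb{H}_\lambda\otimes\mathbb{H}_{\mathcal{A}_{\text{QSP}}}$, with $\mathcal{A}_{\text{QSP}}$ the extra single-qubit ancilla.

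Second, I would restrict attention to a single $\lambda$ and show that $V_{\boldsymbol{\Phi}_1}^{(\lambda)}$ is, up to the Hadamards $W_{\text{in}}$ and $W_{\text{out}}$, a sequence of $SU(2)$ rotations with the structure of Sec.\ \ref{subsec:Real_f_design1} with signal $\theta_\lambda$. More precisely, conjugating $V_0$ by $M$ on the QSP ancilla turns it into a controlled application of $O'_1$ with control in the computational basis; interleaving with $\mathds{1}\otimes e^{i\phi Z}$ phase gates then produces, in the ancilla subspace, an alternating sequence $\bar{V}_{\phi_q}V_{\phi_{q-1}}\cdots\bar{V}_{\phi_2}V_{\phi_1}$ whose restriction to each $\mathbb{H}_\lambda$ is exactly of the single-qubit QSP form $\mathcal{R}_1(\theta_\lambda,\boldsymbol{\Phi}_1)$. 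Consequently, the ancilla's $\bra{0}\cdot\ket{0}$ matrix element in $W_{\text{out}}\mathcal{R}_1(\theta_\lambda,\boldsymbol{\Phi}_1)W_{\text{in}}$ equals $\tfrac{1}{2}\bigl(B(\cos\theta_\lambda)+B^*(\cos\theta_\lambda)\bigr)+\tfrac{i}{2}\sin\theta_\lambda\bigl(D(\cos\theta_\lambda)-D^*(\cos\theta_\lambda)\bigr)e^{i\phi_{q+1}Z}$-type combinations, so that the two Hadamards mix the even polynomial $\mathrm{Re}\,B$ and the odd polynomial $\sin\theta_\lambda\,\mathrm{Im}\,D$, thereby granting access to arbitrary-parity polynomials in $\lambda$.

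Third, I would invoke Lemma \ref{cond}. Split $\tilde f_q(\lambda)=\tilde f_{q,\rm e}(\lambda)+\tilde f_{q,\rm o}(\lambda)$ into its even and odd Chebyshev parts, identify $\tilde f_{q,\rm e}$ with $\mathscr{B}(\cos\theta)=\sum_{k\text{ even}} b_k\cos(k\theta)$ and $\tilde f_{q,\rm o}$ with $\sin\theta\,\mathscr{D}(\cos\theta)=\sum_{k\text{ odd}} b_k\cos(k\theta)$ after absorbing the parity-changing factor into $D$. The global-norm hypothesis $|\tilde f_q(\lambda)|\leq 1$ on $[-1,1]$ is exactly the achievability inequality $\mathscr{B}^2+\sin^2\theta\,\mathscr{D}^2\leq 1$ of Eq.\ \eqref{achievabilityCond1} once one applies the corresponding even/odd Cosine--Chebyshev identification, so condition (ii) of Lemma \ref{cond} is met; condition (i) follows by construction. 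Therefore a $\boldsymbol{\Phi}_1\in\mathbb{R}^{q+1}$ exists such that $\bra{0}V_{\boldsymbol{\Phi}_1}^{(\lambda)}\ket{0}=\tilde f_q(\lambda)$ for every $\lambda$, yielding Eq.\ \eqref{eq:berealfunction} after summing over the spectral projectors $\ket\lambda\bra\lambda$.

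The main obstacle I anticipate is the bookkeeping in the previous paragraph: verifying cleanly that the Hadamard sandwich $W_{\text{in}},W_{\text{out}}$ combined with the alternating $V_{\phi},\bar V_{\phi}$ structure realises exactly the even-plus-odd polynomial $\tilde f_q$ (rather than only its real or imaginary part), and that the single scalar bound $|\tilde f_q(\lambda)|\leq 1$ suffices to ensure the joint achievability condition on $(\mathscr{B},\mathscr{D})$. Finally, the classical-runtime claim follows from the existing $\mathcal O(\mathrm{poly}(q))$ pulse-finding algorithms cited in Sec.\ \ref{subsec:Real_f_design1}, applied separately to the even and odd components and then interleaved to produce $\boldsymbol{\Phi}_1$.
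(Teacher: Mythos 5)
Your overall scaffolding is right (qubitization, block-diagonal decomposition over the subspaces $\mathbb{H}_\lambda$, reduction to Lemma \ref{cond}, and the ${\rm poly}(q)$ classical pulse-finding), but the central step — how a fixed-parity single-qubit QSP yields an \emph{arbitrary-parity} Chebyshev series in $\lambda$ — is where your argument breaks, and the idea you are missing is the half-angle. Because the iterate $V_0$ in Eq.\ \eqref{eq:v0be} applies $O'_1$ \emph{controlled} on the extra ancilla, its restriction to $\mathbb{H}_\lambda\otimes\mathbb{H}_{\mathcal{A}}$, written in the $Y_\lambda$-eigenbasis $\ket{0_{\lambda\pm}}$, acts on the ancilla as $e^{\pm i\frac{\theta_\lambda}{2}X}$ (times a compensating phase): the processed signal is $\theta_\lambda/2$, not $\theta_\lambda$. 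This is what makes the construction work: the achievable even component $\mathscr{B}\big(\cos\frac{\theta_\lambda}{2}\big)=\sum_{k=0}^{q/2}b_k\cos\big(2k\cdot\frac{\theta_\lambda}{2}\big)=\sum_k b_k T_k(\lambda)$ already contains \emph{all} Chebyshev orders $k=0,\dots,q/2$ of both parities in $\lambda$, so one simply sets $\mathscr{D}=0$ and the achievability condition \eqref{achievabilityCond1} collapses to $|\tilde f_q(\lambda)|\leq 1$, which is exactly the hypothesis.

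Your alternative route — running QSP at signal $\theta_\lambda$ and recovering the odd part of $\tilde f_q$ from the $\mathscr{D}$ component — cannot work as stated, for two reasons. First, by Eq.\ \eqref{achievabilityCond2} the $D$-channel delivers $\sin\theta\,\mathscr{D}(\cos\theta)=\sum_k d_k\sin(2k\theta)=\sqrt{1-\lambda^2}\sum_k d_k\,U_{2k-1}(\lambda)$; these are $\sqrt{1-\lambda^2}$-weighted second-kind Chebyshev terms, not the odd first-kind polynomials $T_{2k+1}(\lambda)$ you need, so the identification $\tilde f_{q,\rm o}=\sin\theta\,\mathscr{D}$ is not achievable for a polynomial $\tilde f_{q,\rm o}$. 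Second, even setting parity aside, $|\tilde f_q|=|\mathscr{B}+\sin\theta\,\mathscr{D}|\leq 1$ does not imply the joint condition $\mathscr{B}^2+\sin^2\theta\,\mathscr{D}^2\leq 1$ (the cross term $2\mathscr{B}\sin\theta\,\mathscr{D}$ can be negative and large), so your claimed equivalence of the norm bounds fails. Both problems disappear in the paper's half-angle construction, which is the route you should take.
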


\begin{proof}[Proof of Lemma \ref{lem:chebBlockEncoding}]  Recalling that the identity operator does not produce any leakage of states out of the initial subspace they belong to, the identity operator in $\mathbb{H}_{\mathcal{SA}_{O_1}}$ can be effectively represented on the subspace $\bigoplus_\lambda \mathbb{H}_\lambda$ as
$\mathds{1}=\bigoplus_{\lambda}  \left(\ket{0_{\lambda+}}\bra{0_{\lambda+}}+\ket{0_{\lambda-}}\bra{0_{\lambda-}}\right)$.  Using this representation and Eq.\ \eqref{iterate}, it is straightforward to write the
operator $V_0$ of Eq.\ \eqref{eq:v0be}  as 
\begin{equation}
 \begin{split}
 V_0=\bigoplus_{\lambda} e^{-i\frac{\theta_\lambda}{2}}&\ket{0_{\lambda+}}\bra{0_{\lambda+}}\otimes e^{i\frac{\theta_\lambda}{2}X}+\\
  & e^{i\frac{\theta_\lambda}{2}}\ket{0_{\lambda-}}\bra{0_{\lambda-}}\otimes e^{-i\frac{\theta_\lambda}{2}X},
 \end{split}
\end{equation}
where $X$ is the first Pauli operator for an extra single-qubit ancilla we will denote by $\mathcal{A}_e$.
From $V_0$ one can also obtain a convenient representation for $V_{\phi}=V_{0}\left(\mathds{1}\otimes e^{i\phi {Z}}\right)$ in $\bigoplus_\lambda \mathbb{H}_\lambda\otimes\mathbb{H}_{\mathcal{A}_e}$, it reads
\begin{equation}
\begin{split}V_{\phi}=\bigoplus_{\lambda} & \left(e^{-i\frac{\theta_{\lambda}}{2}}\ket{0_{\lambda+}}\bra{0_{\lambda+}}\otimes R_1\left(\frac{\theta_{\lambda}}{2},\phi\right)\right.\\
 & \left.+e^{i\frac{\theta_{\lambda}}{2}}\ket{0_{\lambda-}}\bra{0_{\lambda-}}\otimes R_1\left(-\frac{\theta_{\lambda}}{2},\phi\right) \right),
\end{split}
\label{eq:vphi}
\end{equation}
with the qubit rotations acting on $\mathcal{A}_e$ previously defined as $R_1\left(\pm\frac{\theta_{\lambda}}{2},\phi\right)=e^{\pm i\frac{\theta_\lambda}{2}{X}}e^{i\phi{Z}}$.
If the operator defined as $\bar{V}_{\phi^\prime}= V_{0}^\dagger\left(\mathds{1}_{\mathcal{SA}_O}\otimes e^{i\phi^\prime {Z}}\right)$ is applied in the sequence, it removes the undesired phase factors $e^{\pm i\frac{\theta_{\lambda}}{2}}$.
 A direct consequence of Eq.\ \eqref{eq:vphi} is that the operator $V_{\boldsymbol{\Phi}_1}$ defined in Eq.\ \eqref{eq:Vvecphi} 
can be represented in $\bigoplus_\lambda \mathbb{H}_\lambda\otimes\mathbb{H}_{\mathcal{A}_e}$ as
\begin{equation}
\begin{split}V_{{\boldsymbol{\Phi}_1}}=\bigoplus_{\lambda} & \left[\ket{0_{\lambda+}}\bra{0_{\lambda+}}\otimes \left(M\,\mathcal{R}_1\left(\frac{\theta_{\lambda}}{2},{\boldsymbol{\Phi}_1}\right)M\right)\right.\\
 & \left.+\ket{0_{\lambda-}}\bra{0_{\lambda-}} \otimes \left(M\,\mathcal{R}_1\left(-\frac{\theta_{\lambda}}{2},{\boldsymbol{\Phi}_1}\right)M\right)\right],
\end{split}
\label{eq:vvecphi}
\end{equation}
where $\mathcal{R}_1\left(\pm\frac{\theta_{\lambda}}{2},{\boldsymbol{\Phi}_1}\right)$ is given in Eq.\ \eqref{eq:poly}.
In order to obtain functions with vanishing imaginary part, one can
initialize and post-select the ancillas in the state $\ket{0}\in \mathbb{H}_{\mathcal{A}}$,
which corresponds in $\mathbb{H}_{\mathcal{S}}$ to applying the operator
\begin{equation}
\bra{0}V_{{\boldsymbol{\Phi}_1}}\ket{0}=\sum_{\lambda}\text{Re}\left[B\left(\cos\frac{\theta_{\lambda}}{2}\right)\right]\,\ket{\lambda}\bra{\lambda},\label{eq:realfunction}
\end{equation}
where $B\left(\cos\frac{\theta_{\lambda}}{2}\right)$ is the the first
matrix element of $\mathcal{R}_1\left(\frac{\theta_{\lambda}}{2},{\boldsymbol{\Phi}_1}\right)$.
Because of the halved angle, the achievable functions can be written,
according to Lem. \ref{cond}, as 
\begin{equation}
\text{Re}\left[B\left(\cos\frac{\theta_{\lambda}}{2}\right)\right]=\sum_{k=0}^{q/2}b_k\cos(k\theta_{\lambda})=\sum_{k=0}^{q/2}b_k T_{k}(\lambda).\label{eq:cheb}
\end{equation}
The only real polynomial of interest, out of the four composing the QSP operator $\mathcal{R}_1\left(\frac{\theta_{\lambda}}{2},{\boldsymbol{\Phi}_1}\right)$,  is  $\text{Re}\left[B\left(\cos\frac{\theta_{\lambda}}{2}\right)\right]$, the other three can be determined as to keep achievability. Provided that $\text{Re}\left[B\left(\cos\frac{\theta_{\lambda}}{2}\right)\right]$  has the form in Eq.\,\eqref{eq:cheb}, the only remaining condition is that $\left|\text{Re}\left[B\left(\cos\frac{\theta_{\lambda}}{2}\right)\right]\right|\leq 1$. With the choice $\mathscr{D}(\lambda)=0$ and $\mathscr{B}(\lambda)=\tilde{f}_q(\lambda)$, condition \eqref{achievabilityCond1} is satisfied, assuring the existence of such ${\boldsymbol{\Phi_1}}$.

\end{proof}
 
 We are now able to prove Lemma \ref{be_function}, which is a direct consequence of  Lemmas \ref{lem:chebBlockEncoding} and \ref{erro}, and of the qubitization construction.
  
\begin{proof}[Proof of Lemma \ref{be_function}]
If the QSP sequence ${\boldsymbol{\Phi}_1}$ is taken as to reproduce the coefficients
of the finite Chebyshev expansion of a continuous target function $f$ (see App.\ \ref{App:Chebyshev})
\begin{equation}
\tilde{f}_q(\lambda)=\sum_{k=0}^{q/2}\,b_k T_{k}(\lambda),\label{eq:Chebexpansion}
\end{equation}
with $f(\lambda)=\tilde{f}_\infty(\lambda)$, then Eq.\,\eqref{eq:berealfunction} means that $V_{{\boldsymbol{\Phi}_1}}$ is a perfect block-encoding of $\tilde{f}_q(H)$ a (1,$\varepsilon'$)-block-encoding of $f(H)$, with $\varepsilon'$ given by the truncation  error $\text{max}_{\lambda\in[-1,1]}|{f}(\lambda)-\tilde{f}_q(\lambda)|$. $\tilde{f}_q(\lambda)$ can be the truncated Chebyshev expansion of $f(\lambda)$ or an interporlating series. $f$ is assumed to be an analytical function and, therefore, $\varepsilon'$ is related to the series order $q/2$ according to Lemma \ref{erro}. 
 The achievability condition \eqref{achievabilityCond1} is not a strong limitation as $f$ is bounded and we can always redefine $f^\prime(\lambda)=\frac{f(\lambda)}{f_{max}}$, with $f_{max}=\underset{-1\leq\lambda\leq1}{\max} f(\lambda)$.
Note that, even though $|\mathscr{B}(\lambda)|\leq1+\varepsilon'$, it can be rescaled
as to satisfy Eq.\ \eqref{achievabilityCond1} \cite{gilyen2019}.

 Each one of the $q$ operators $V_{\phi_k}$ or $\bar{V}_{\phi_k}$ in $V_{{\boldsymbol{\Phi}_1}}$ calls the qubitized oracle once which means calling the oracle $O_1$ twice.  Thus $2q$ queries are necessary in total. Moreover the number of gates per query is basicaly given by the number of gates used for qubitization times a constant factor due to the controlled action of $O_1'$. The control ancilla contributes with a constant small number of gates, such that the total number of gates per query is $\mathcal{O}(g_{O_1}+|\mathcal{A}_{O_1}|)$.
\end{proof}


\section{Proof of Lemma \ref{lem:QuITEparity}}
\label{app:proofQuITEparity}
Our proof is by explicitly construction of $H_{\boldsymbol x}$. We use the same 2-sparse Hamiltonians used in the proof of the no-fast-forwarding theorem \cite{Berry2007,Berry2014,Berry2015a}. Recall that a Hamiltonian matrix is $d$-sparse if its columns (or rows) contain at most $d$ non-null entries each.
As a warm-up, let us first introduce necessary notation for the proof. 

Consider the $(N+1)$-dimensional symmetric subspace $\mathbb H_{\mathcal{S}_\mathrm{s}}\subset\mathbb H_\mathcal{S}$ spanned by kets $\{\ket j_\mathrm{s}\}_{j\in [N+1]}$, where $\ket j_\mathrm{s}$ is the permutationally-invariant superposition of $j$ qubits in state $\ket 1$ and $N\!-\!j$ ones in $\ket 0$. Let us define the ($\boldsymbol x$-independent) Hamiltonian  
\begin{equation}
H_{0}\coloneqq\sum_{i=1}^{N}\frac{X_i}{4\,N}\ ,
\label{eq:H0}
\end{equation}
where $X_i$ is the first Pauli operator of the $i$-th qubit in $\mathcal{S}$. Notice that $\|H_0\|\leq1$. Since $H_0$ is proportional to the total angular momentum operator along the $x$ direction, it couples each $\ket j_\mathrm{s}$ to $\ket{j-1}_\mathrm{s}$ and $\ket{j+1}_\mathrm{s}$ (or to one of them, depending on the value of $j$); and, importantly, it leaves  $\mathbb H_{\mathcal{S}_\mathrm{s}}$ invariant.  
Besides, the following overlap will be useful too:
\begin{equation}
\Big|\bra{N}_\mathrm{s}\ F_\beta(H_0) \ket0_\mathrm{s}\Big| = \left|\frac{1-e^{-\frac{\beta}{2N}}}2\right|^N  \ ,
\label{eq:overlapfinal}
\end{equation}
where the equality follows from the identity $e^{-\beta H_0}=\prod_{i=1}^{N}e^{-\frac{\beta X_i}{4\,N}}=\prod_{i=1}^{N}\big(\cosh(\frac{\beta}{4\,N})\mathds{1}-\sinh(\frac{\beta}{4\,N})X_i\big)$ and the fact that $\ket{N}_\mathrm{s}$ and $\ket{0}_\mathrm{s}$ are both product states.


\begin{proof}[Proof of Lemma \ref{lem:QuITEparity}]
We begin by defining the Hamiltonian $H_{\boldsymbol{x}}$ that encodes ${\rm par}(\boldsymbol{x})$. It acts on the $(2N+2)$-dimensional Hilbert space $\mathbb H_{\mathcal{S}_\mathrm{s}}\otimes\mathbb H_{\mathcal{W}_\mathrm{p}}$, where $\mathcal{W}_\mathrm{p}$ is a single-qubit ancillary write register onto which our QITE-based parity-finding algorithm will write ${\rm par}(\boldsymbol{x})$. $H_{\boldsymbol x}$ is given by
\begin{multline}
H_{\boldsymbol x} \coloneqq \sum_{j=0}^N \frac{\sqrt{(N-j)(j+1)}}{4\,N} \times\big(\ket{j+1}\bra j_\mathrm{s}+ \text{H.c.}\big)\otimes {X_\mathrm{p}}^{x_j} \ ,
\label{eq:H_x}
\end{multline}
where $X_\mathrm{p}$ is the first Pauli matrix on $H_{\mathcal{A}_\mathrm{p}}$ and the short-hand notation $\ket{N+1}\coloneqq0$ is introduced.
Note that $H_{\boldsymbol x}$ couples each $\ket j_\mathrm{s}$ to $\ket{j-1}_\mathrm{s}$ and $\ket{j+1}_\mathrm{s}$, as $H_0$ in Eq.\ \eqref{eq:H0}, and flips the state of the write register conditioned on the value of the $j$-th bit $x_j$ of $\boldsymbol{x}$ (see Fig.\ \ref{fig:graph}). As a result, $\mathbb H_{\mathcal{S}_\mathrm{s}}\otimes\mathbb H_{\mathcal{W}_\mathrm{p}}$ is divided into two halves not coupled to one another. Importantly, $\ket0_\mathrm{s}\ket0_\mathrm{p}$ is always coupled to $\ket N_\mathrm{s}\ket{{\rm par}(\boldsymbol{x})}_\mathrm{p}$ and never to the wrong-parity state $\ket N_\mathrm{s}\ket{{\rm par}(\boldsymbol{x})\oplus1}_\mathrm{p}$. In addition, $H_{\boldsymbol x}$ is 2-sparse and $\|H_{\boldsymbol x}\|\leq1$ for all $\boldsymbol{x}\in\{0,1\}^N$.

\begin{figure}[tb]
\centering
\includegraphics[width=0.7\columnwidth]{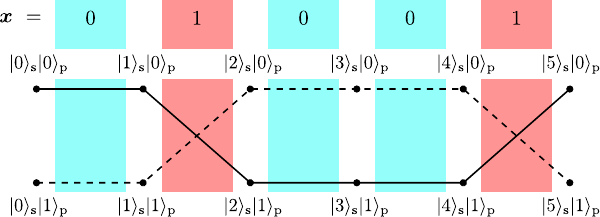}
\caption{Graph of the couplings present in Hamiltonian $H_{\boldsymbol{x}}$ for $\boldsymbol x=01001$. It encodes the string $\boldsymbol{x}$ in that a flipped-qubit coupling occurs conditioned on $x_j$, e.g.\ $\ket j_\mathrm{s}\ket1_\mathrm{p}$ is coupled to $\ket{j+1}_\mathrm{s}\ket1_\mathrm{p}$ or $\ket{j+1}_\mathrm{s}\ket0_\mathrm{p}$ if $x_j=0$ or $x_j=1$, respectively. Notice that initial state $\ket0_\mathrm{s}\ket0_\mathrm{p}$ is indirectly coupled to $\ket N_\mathrm{s}\ket{{\rm par}(\boldsymbol{x})}_\mathrm{p}=\ket 5_\mathrm{s}\ket{0}_\mathrm{p}$ (solid lines), not to $\ket N_\mathrm{s}\ket{{\rm par}(\boldsymbol{x})\oplus1}_\mathrm{p}=\ket 5_\mathrm{s}\ket{1}_\mathrm{p}$ (dashed lines).}%
\label{fig:graph}%
\end{figure}
The following algorithm estimates the parity with a single application of a QITE-primitive on the fixed computational state $\ket0_\mathrm{s}\ket0_\mathrm{p}\ket0_{\mathcal A}$ and computational-basis measurements.
\begin{center}
\setlength{\fboxsep}{1pt}
\begin{minipage}[t]{0.9\columnwidth}
\centering
\begin{algorithm}[H]\label{alg:prim2parity}\renewcommand{\thealgocf}{}
\SetAlgoLined
\caption{Bit-string parity from QITE}
\SetKwInOut{Input}{input}\SetKwInOut{Output}{output}
\SetKwData{Even}{even}
\Input{$\beta\neq0$, $\varepsilon'\geq0$, $\alpha\in(0,1]$, an $(\beta,\varepsilon',\alpha)$-QITE-primitive $P$ querying a block-encoding oracle for $H_{\boldsymbol x}$, an input state $\ket0_\mathrm{s}\ket0_\mathrm{p}\ket0_{\mathcal A}$}
\Output{$\mathrm{par}(\boldsymbol x)$ with probability $>1/2$} 
\BlankLine
apply $P$ on $\ket0_\mathrm{s}\ket0_\mathrm{p}\ket0_{\mathcal A}$\;
measure $\mathcal A$ in the computational basis\;
\lIf{outcome is $\ket0_{\mathcal A}$}{proceed}
\lElse{output $\mathrm{par}(\boldsymbol x)=0$ or 1 with probability $1/2$; end algorithm}
measure $\mathcal S_\mathrm{s}$ on basis $\{\ket j_\mathrm{s}\}_{j}$\;
\lIf{outcome is $\ket N_\mathrm{s}$}{proceed}
\lElse{output $\mathrm{par}(\boldsymbol x)=0$ or 1 with probability $1/2$; end algorithm}
measure $\mathcal W_\mathrm{p}$ in the computational basis and output the outcome\;
\end{algorithm}
\end{minipage}

\end{center}
The measurement on ${\mathcal A}$ is the usual QITE post-selection heralding whether $F_{\beta}(H_{\boldsymbol x})$ has been applied on $\ket0_\mathrm{s}\ket0_\mathrm{p}$. The measurement on $\mathcal S_\mathrm{s}$ heralds whether the fully-excited state $\ket{N}_\mathrm{s}$ has been obtained and, hence, whether the desired state $\ket{\mathrm{par}(\boldsymbol x)}_\mathrm{p}$ has been prepared in the write register $\mathcal W_\mathrm{p}$. 
Let us analyze the overall success probability.

Consider first the restricted, simplified case of $\varepsilon'=0$. Applying a $(\beta,0,\alpha)$-QITE-primitive on $\ket0_\mathrm{s}\ket0_\mathrm{p}\ket0_{\mathcal A}$
and then measuring $\ket0_{\mathcal A}$ on $\mathcal A$ prepares a state whose overlap with $\ket N_\mathrm{s}\ket{{\rm par}(\boldsymbol{x})}_\mathrm{p}\ket0_{\mathcal A}$ is given by Eq.\ \eqref{eq:overlapfinal}, as a straightforward calculation shows, which is strictly greater than 0. In turn, its overlap with $\ket N_\mathrm{s}\ket{{\rm par}(\boldsymbol{x})\oplus1}_\mathrm{p}\ket0_{\mathcal A}$ is zero. Hence, conditioned on a correct post-selection on $\mathcal A$ and on measuring $\ket N_\mathrm{s}$ too, the success probability of obtaining the correct parity on $\mathcal W_\mathrm{p}$ is unit. If any other measurement outcome is obtained, the algorithm simply makes a coin toss that guesses the correct parity with probability $1/2$. Therefore, for any non-null post-selection probability, the overall success probability of getting  ${\rm par}(\boldsymbol{x})$ is strictly greater than $1/2$.

For $\varepsilon'>0$, not any non-null overlap with $\ket N_\mathrm{s}\ket{{\rm par}(\boldsymbol{x})}_\mathrm{p}\ket0_{\mathcal A_P}$ suffices any longer, since the overlap with $\ket N_\mathrm{s}\ket{{\rm par}(\boldsymbol{x})\oplus1}_\mathrm{p}\ket0_{\mathcal A_P}$ is now also nonzero. More precisely, we need to demand that, conditioned on measuring $\ket0_{\mathcal A_P}$ and $\ket N_\mathrm{s}$, the correct-parity output has greater probability than the incorrect one:
\begin{multline}
\left|\bra N_\mathrm{s}\bra{{\rm par}(\boldsymbol{x})}_\mathrm{p}\bra0_{\mathcal A_P} \ket{\tilde\Psi_\beta(\boldsymbol x)}\right| > \\
\left|\bra N_\mathrm{s}\bra{{\rm par}(\boldsymbol{x})\oplus1}_\mathrm{p}\bra0_{\mathcal A_P} \ket{\tilde\Psi_\beta(\boldsymbol x)} \right|
\ ,
\label{eq:higheroverlap}
\end{multline}
where $\ket{\tilde\Psi_\beta(\boldsymbol x)}$ is the state after the QITE primitive. Since the primitive generates an $(\alpha,\varepsilon')$-block-encoding of the QITE propagator $F_\beta(H_{\boldsymbol x})$, we following bounds must hold
\begin{subequations}
\begin{align}
\big|\bra N_\mathrm{s}\bra{{\rm par}(\boldsymbol{x})}_\mathrm{p}\bra0_{\mathcal A_P} \ket{\tilde\Psi_\beta(\boldsymbol x)}\big|&		\geq \label{eq:bound1}\\
\alpha|\bra N_\mathrm{s}\bra{{\rm par}(\boldsymbol{x})}_\mathrm{p}\ F_\beta(H_{\boldsymbol x})& \ket0_\mathrm{s}\ket0_\mathrm{p}|-\varepsilon'\ ,
\nonumber
\end{align}
and
\begin{equation}
\left|\bra N_\mathrm{s}\bra{{\rm par}(\boldsymbol{x})\oplus1}_\mathrm{p}\bra0_{\mathcal A_P}\ \ket{\tilde\Psi_\beta(H_{\boldsymbol x})}\right|\leq\,\varepsilon'\ .
\label{eq:bound2}
\end{equation}
\label{eq:bounds}\end{subequations}
From these, we see that Eq.\ \eqref{eq:higheroverlap} is fulfilled if we impose
\begin{equation}
\alpha|\bra N_\mathrm{s}\bra{{\rm par}(\boldsymbol{x})}_\mathrm{p}\ F_\beta(H_{\boldsymbol x}) \ket0_\mathrm{s}\ket0_\mathrm{p}|-\varepsilon' > \varepsilon' \ .
\label{eq:psi_id}
\end{equation}
Using again the fact that $|\bra N_\mathrm{s}\bra{{\rm par}(\boldsymbol{x})}_\mathrm{p}\ F_\beta(H_{\boldsymbol x}) \ket0_\mathrm{s}\ket0_\mathrm{p}|$ equals the overlap in Eq.\ \eqref{eq:overlapfinal}, one can straightforwardly see that Eq.\ \eqref{eq:psi_id} is equivalent to Eq.\ \eqref{eq:Paritycond}.
\end{proof}

Finally, we notice that the factor $4$ in the denominator of Eq.\ \eqref{eq:H_x} is not necessary for the above proof argument. However, it guarantees that $\|H_{\boldsymbol x}\|\leq1$, which is necessary for $H_{\boldsymbol x}$ to be block-encodable, as required for Lemma \ref{lem:fromU_xtoH_x}.

\section{Proof of Lemma \ref{lem:fromU_xtoH_x}}
\label{app:prooffromU_xtoH_x}
The proof is constructive. We design a quantum circuit (see Fig.\ \ref{fig:circuitHfromU}) that generates a block-encoding $U_{H_{\boldsymbol x}}$ of $H_{\boldsymbol x}$ from a single application of $U_{\boldsymbol x}$.

\begin{figure}[tb!]%
\centering
\includegraphics[width=0.9\columnwidth]{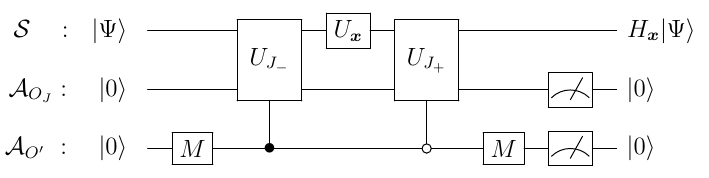}%
\caption{Generation of a block-encoding $U_{H_{\boldsymbol x}}$ of $H_{\boldsymbol x}$ from a single query to $U_{\boldsymbol x}$. $\mathcal A_{O_J}$ is the ancillary register required to block-encode $J_\pm$ and $\mathcal A_{O'}$ is the single-qubit ancilla required to create the linear combination in Eq.~\eqref{eq:HasU}.}%
\label{fig:circuitHfromU}%
\end{figure}

\begin{proof}[Proof of Lemma \ref{lem:fromU_xtoH_x}]
We first need to define the total angular-momentum ladder operators $J_\pm$. In our notation, $J_+\ket j_s\coloneqq\sqrt{(N-j)(j+1)}\ket{j+1}_s$ and $J_-\ket j_s\coloneqq\sqrt{(N-j+1)j}\ket{j-1}_s$. Now, using Eqs.\ \eqref{eq:def_U_x} and \eqref{eq:H_x}, note that
\begin{equation}
H_{\boldsymbol x}	= \frac{J_+U_{\boldsymbol x} + U_{\boldsymbol x}J_-}{4\,N} \ .
\label{eq:HasU}
\end{equation}
The ladder operators can also be written in terms of Pauli operators as $J_\pm=\sum_{i=0}^{N-1} X_i\pm iY_i$. Since they are a linear combination of $2\, N$ unitaries, standard methods \cite{Low2019hamiltonian} can be used to produce a block encoding $U_{J_\pm}$ of $J_\pm/(2N)$ using an ancillary register $\mathcal A_{O_J}$ with $|\mathcal A_{O_J}|=\mathcal{O}\big(\log(N)\big)$ qubits. This requires no query to $U_{\boldsymbol x}$.

Then, the circuit in Fig.~\ref{fig:circuitHfromU} shows how to generate a perfect block-encoding $U_{H_{\boldsymbol x}}$ of $H_{\boldsymbol x}$ from one query to $U_{J_+}$, one query to $U_{J_-}$, and one query to $U_{\boldsymbol x}$, and using a single-qubit ancila $\mathcal A_{O'}$. 
Finally, a block-encoding oracle controlled-$U_{H_{\boldsymbol x}}$ for $H_{\boldsymbol x}$ can be obtained by replacing every gate in the figure with its version controlled by an extra single-qubit ancilla $\mathcal A_{O''}$ (not represented in Fig.~\ref{fig:circuitHfromU}). In the notation of Def. \ref{def:block_enc_or}, it reads $\mathcal A_{O_1}=\{\mathcal A_{O_J},\mathcal A_{O'},\mathcal A_{O''}\}$. This extra control does not alter the number of queries to $U_{\boldsymbol x}$. 
\end{proof}

\printbibliography

\end{document}